\newcommand{\labeltext}[2]{%
   \@bsphack
   \csname phantomsection\endcsname 
    \def\@currentlabel{#1}{\label{#2}}%
    \@esphack
}
\patchcmd{\NAT@test}{\else \NAT@nm}{\else \NAT@nmfmt{\NAT@nm}}{}{}
\DeclareRobustCommand\citepos
   \let\NAT@nmfmt\NAT@posfmt
\let\NAT@ctype\z@\NAT@partrue
\let\NAT@orig@nmfmt\NAT@nmfmt
\def\NAT@posfmt#1{\NAT@orig@nmfmt{#1's}}
\newtheorem{lemma}{Lemma}
\newtheorem{theorem}{Theorem}[section]
\newtheorem{remark}{Remark}[section]
\newcommand{\ba}{\begin{align}} \newcommand{\ea}{\end{align}}
\newcommand{\baa}{\begin{align*}} \newcommand{\eaa}{\end{align*}}
\newcommand{\ben}{\begin{enumerate}} \newcommand{\een}{\end{enumerate}}
\newcommand{\bi}{\begin{itemize}} \newcommand{\ei}{\end{itemize}}
\newenvironment{proof}{\noindent\\ \noindent\relax{\sc
     Proof}}{{\samepage\par\nopagebreak\hbox
     to\hsize{\hfill$\Box$}}}
\newcommand{\be}{\begin{equation}} \newcommand{\ee}{\end{equation}}
\newcommand{\bd}{\begin{displaymath}} \newcommand{\ed}{\end{displaymath}}
\newcommand{\proglang}[1]{\texttt{#1}}
\title{Closed and asymptotic formul\ae\ for harmonic and quadratic harmonic sums}
\author{Krzysztof Bartoszek
 \\
\noindent {\small \it 
Department of Computer and Information Science,
} 
{\small \it
 Link\"oping University, Link\"oping, Sweden}
\\
{\small \it
e--mail: krzysztof.bartoszek@liu.se; krzbar@protonmail.ch
}
}
\date{}
\begin{document}
\maketitle
\begin{abstract}
We present here a large collection of harmonic and quadratic harmonic sums, 
that can be useful in applied questions, e.g., probabilistic ones.
We find closed--form formul\ae , that we were not able to locate in the literature.
\\
MSC2020: 
05A19; 
11M06; 
40A25 
\\
Keywords: Abel's summation formula by parts; Combinatorial series identities and summation formul\ae; Generalized harmonic numbers; Harmonic sums; Quadratic harmonic sums; Riemann Zeta function 
\end{abstract}


\section{Introduction}
We are interested in closed form formul\ae\ for quadratic harmonic sums. 
Study of such 
goes back to at least Euler, who showed \citep[we report after][]{ASof2010}, 

$$
2\sum\limits_{j=1}^{\infty}\frac{H_{j,1}}{n^{p}}=(p+2)\zeta(p+1)-\sum\limits_{r=1}^{q-2}\zeta(r+1)\zeta(q-r)
$$
where $\zeta(\cdot)$ is the Riemann zeta function, and 
$H_{j,m}$ the $j$--th, $m$--th generalized harmonic number, i.e.,

$$
H_{j,m} := \sum\limits_{i=1}^{j}i^{-m},~~\zeta(m) = \sum\limits_{i=1}^{\infty}i^{-m}.
$$ 
Finite sum formul\ae\ have been studied in the literature \citep[e.g.,][]{ASof2010,ASof2011,ASof2013,ASofMHas2012},
but sometimes only asymptotic results have been obtained \citep[e.g.,][]{DBorJBor1995art,XWanWChu2018}.
In applications, e.g., sums of moments of exponential random variables, which are harmonic numbers, one 
requires closed form finite some formul\ae\  and here we provide a collection
of closed form formul\ae\ for various quadratic harmonic sums. 
Our motivation was that in many cases we were neither able to find many of them in literature, nor 
were we able to obtain them using \proglang{Mathematica} \citep{Mathematica}. 

We first introduce notation and results from literature that will be building blocks for us. 
Then, we provide a general recursive approach 
(Thms. \ref{thmGpqrsm} and \ref{thmVpqrsm})
for calculating the types of sums that are of interest to us

$$
G_{n,p,q}^{r,s,m}:=\sum\limits_{j=1}^{n}\frac{H_{j,m}}{(j+r)^{p}(j+s)^{q}},~~~
V_{n,p,q}^{r,s,m}:=\sum\limits_{j=1}^{n}\frac{H_{j,m}^{2}}{(j+r)^{p}(j+s)^{q}}.
$$
While this approach results in forms that still need to be further refined in each particular case,
we propose a general strategy for obtaining closed form sums and also
it exposes components of the sums that will be tractable.  
Afterwords, we specialize
on particular sums that arose in our applied studies, i.e., for
$r,s,p,q,m \in \{1,2\}$
(Lemmata \ref{lemSumHi2i1sqi2}-\ref{lemSumHi2sqi1i2}), and the special case 
of $V_{n,2,0}^{0,0,2}$ (Lemma \ref{lemSumHi2sqisq}).
In the work here lemmata or equations that have an ``H'' in front of
them can be found in the Appendix. 
In the Appendix 
we provide a comprehensive collection
of harmonic sums formul\ae. Some of the considered there sums are trivial, while some can be found in the literature.
However, we report all of them, with proofs, for completeness. Where possible we provide, references for the presented
formul\ae. If a reference is missing it means that either the formula is simple enough that we could not find its
first mention, or we are not able to find the formula in the literature. In no way do we claim precedence to these 
formul\ae\ in our work here. We use a variety of techniques and it can happen that a different one could 
be more optimal, that further refinements of the formul\ae\ are possible or that sometimes in proofs
an intermediate, instead of the final, version of some lemma might be used. 
In a few cases we were able to locate the asymptotic sum, but not the finite sum formula. 
We reference those works, where the formul\ae\ of interest are directly available
or easily deducible.
We also checked the results, where possible, in \proglang{Mathematica 12.0.0}. 
We discuss this and indicate our code's availability at the end of this work.

\section{Notation and known results}\label{secNotation}
We start with introducing some notation and some general results from the literature.
We assume throughout that $n\ge 1$ is a natural number.
Some special values of 
the Riemann Zeta function, 
that we will need, 
are
\begin{enumerate}
\item $\zeta(2) = \frac{\pi^{2}}{6}$;
\item $\zeta(3) \approx 1.202$ is Ap\'ery's constant; 
\item $\zeta(4) = \frac{\pi^{4}}{90}$; 
\item $\zeta(6)= \pi^{6}/945$ (e.g., OEIS entry A013664).
\end{enumerate}
We also recall that $H_{n,1}$ behaves asymptotically as $\ln n$. 
Throughout this work, we will make use of the following relationships 
\begin{enumerate}
\item $G_{\infty,2,0}^{0,0,1}=\sum\limits_{j=1}^{\infty}\frac{H_{j,1}}{j^{2}}=2\zeta(3)$ \citep[e.g.,][after Euler, 1775]{DBorJBor1995art}; 
\item $G_{\infty,3,0}^{0,0,1}=\sum\limits_{j=1}^{\infty}\frac{H_{j,1}}{j^{3}}=\frac{\pi^{4}}{72}$ \citep[e.g., Eq. $4$ in ][after Euler, 1775]{DBorJBor1995art};
\item $G_{\infty,4,0}^{0,0,1}=\sum\limits_{j=1}^{\infty}\frac{H_{j,1}}{j^{4}}=3\zeta(5) - \frac{\pi^{2}}{6}\zeta(3)$ \citep[e.g., p $1196$ in ][after Euler, 1775]{DBorJBor1995art}; 
\item $V_{\infty,2,0}^{0,0,1}=\sum\limits_{j=1}^{\infty}\frac{H_{j,1}^{2}}{j^{2}} = \frac{17\pi^{4}}{360}$ \citep[Eq. $3$ in ][]{DBorJBor1995art};
\item $G_{\infty,4,0}^{0,0,2}=\sum\limits_{j=1}^{\infty}\frac{H_{j,2}}{j^{4}} = \zeta^{2}(3)-\frac{4\pi^{6}}{2835} + \zeta(6) =\zeta(3)^{2} - \frac{\pi^{6}}{2835} $  \citep[p. $280$][]{DBorJBorRGir1995}.
\end{enumerate}
Finally, for any sequence $a_{n}$ if $i_{1}>i_{2}$ we set, by definition, 
$\sum_{j=i_{1}}^{i_{2}}a_{j}\equiv 0$.

\section{General recursive formul\ae }
In this section we will provide a general recursive approach to calculate $G_{n,p,q}^{r,s,m}$ and 
$V_{n,p,q}^{r,s,m}$. The final results will be in terms of initial 
conditions

\be\label{eqGnVn0}
G_{n,p,0}^{0,0,m} = \sum\limits_{j=1}^{n}\frac{H_{j,m}}{j^{p}},~~~V_{n,p,0}^{0,0,m} = \sum\limits_{j=1}^{n}\frac{H_{j,m}^{2}}{j^{p}};
\ee
sums of a finite, fixed, i.e., independent of $n$ number of terms that for specific values of $p,q,w,r,s,u$ are obtainable in terms of harmonic numbers

\be
R_{n,p,q}^{r,s}:=\sum\limits_{j=1}^{n}\frac{1}{(j+r)^{p}(j+s)^{q}},~~~
R_{n,p,q,w}^{r,s,u}:=\sum\limits_{j=1}^{n}\frac{1}{(j+r)^{p}(j+s)^{q}(j+u)^{w}};
\ee
or are constant, non--dependent on $n$

\be
G_{r,p,0}^{0,0,m}=\sum\limits_{j=1}^{r}\frac{H_{j,m}}{j^{p}},~~V_{r,p,0}^{0,0,m}=\sum\limits_{j=1}^{r}\frac{H_{j,m}^{2}}{j^{p}}.
\ee
The initial conditions  
$G_{n,p,0}^{0,0,m}$, and  $V_{n,p,0}^{0,0,m}$ are in general not tractable but have in some cases had their
limits, $n\to\infty$ calculated in the literature as we presented in Section \ref{secNotation}.
Moreover, the terms $R_{n,p,q}^{r,s}$, $R_{n,p,q,w}^{r,s,u}$ are generally convergent with $n$, unless
particular values of $p$ and $q$ are chosen (which will not be the case in our calculations).
Notice that obviously $G_{n,p,0}^{0,0,m}=G_{n,0,p}^{0,0,m}$, and $V_{n,p,0}^{0,0,m}=V_{n,0,p}^{0,0,m}$.
We only consider sums that have $H_{j,m}$ and $H_{j,m}^{2}$. Higher powers can be handled in the same way.

\begin{theorem}\label{thmGpqrsm}
Define, for integers $m\ge 1$, $p,q\ge 0$ (but $p+q>1$), and $r,s\ge 0$

$$
G_{n,p,q}^{r,s,m}:=\sum\limits_{j=1}^{n}\frac{H_{j,m}}{(j+r)^{p}(j+s)^{q}}.
$$
We then we have the following recursive relationship

$$
G_{n,p,q}^{r,s,m} = \left\{
\begin{array}{cc}
G_{n+r,p,0}^{0,0,m}-G_{r,p,0}^{0,0,m} -\sum\limits_{i=1}^{r}R_{n,m,p}^{i,r},& q=0, \\
\frac{1}{(s-r)^{q}}\sum\limits_{i=0}^{q}(-1)^{i}\binom{q}{i}G_{n,p-i,i}^{r,s,m}, & p>q, \\
\frac{1}{(s-r)^{p}}\sum\limits_{i=0}^{p}(-1)^{i}\binom{p}{i}G_{n,i,q-i}^{r,s,m},& p<q.
\end{array}
\right.
$$
\end{theorem}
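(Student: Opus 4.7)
The statement splits into three cases, each handled by a distinct but elementary manipulation, so the plan is to treat each separately.

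For the case $q=0$, the strategy is to shift the summation index. Writing $k=j+r$ turns the sum into $\sum_{k=r+1}^{n+r}H_{k-r,m}/k^{p}$. The key identity is $H_{k-r,m}=H_{k,m}-\sum_{i=1}^{r}(k-r+i)^{-m}$, which follows directly from the definition of generalized harmonic numbers. Substituting and splitting produces $\sum_{k=r+1}^{n+r}H_{k,m}/k^{p}=G_{n+r,p,0}^{0,0,m}-G_{r,p,0}^{0,0,m}$ for the ``diagonal'' part, while the residual double sum, after swapping the order of summation and reindexing via $j=k-r$, becomes $\sum_{i=1}^{r}\sum_{j=1}^{n}\frac{1}{(j+r)^{p}(j+i)^{m}}=\sum_{i=1}^{r}R_{n,m,p}^{i,r}$. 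Main obstacle here is just keeping the index shifts straight and matching the definition of $R_{n,m,p}^{i,r}$.

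For the case $p>q$, the plan is a partial-fraction-by-binomial-expansion trick. The core observation is the algebraic identity $(s-r)^{q}=\bigl((j+s)-(j+r)\bigr)^{q}$, so expanding the right-hand side by the binomial theorem gives
$$
(s-r)^{q}=\sum_{i=0}^{q}(-1)^{i}\binom{q}{i}(j+r)^{i}(j+s)^{q-i}.
$$
Dividing both sides by $(j+r)^{p}(j+s)^{q}$ and multiplying by $H_{j,m}$ yields
$$
\frac{(s-r)^{q}\,H_{j,m}}{(j+r)^{p}(j+s)^{q}}=\sum_{i=0}^{q}(-1)^{i}\binom{q}{i}\frac{H_{j,m}}{(j+r)^{p-i}(j+s)^{i}},
$$
and summing over $j=1,\dots,n$ then dividing by $(s-r)^{q}$ delivers the claimed formula. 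The case $p<q$ is entirely analogous, expanding the same identity but factoring out $(s-r)^{p}$ rather than $(s-r)^{q}$ (equivalently, expanding in $(j+r)^{p-i}$-powers versus $(j+s)^{q-i}$-powers); a careful sign check with $(a-b)^{p}=(-1)^{p}(b-a)^{p}$ is the only subtlety.

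The only genuinely delicate point is well-definedness: the recursion must terminate and not create a circular dependency. For this I would observe that in the $p>q$ branch every summand on the right-hand side is either of the form $G_{n,p,0}^{r,s,m}$ (the $i=0$ case, which reduces via the $q=0$ branch) or has second exponent $i\ge 1$ strictly less than $q$ (whenever $i<q$), or has first exponent $p-q<p$ (the $i=q$ case). Thus, ordering lexicographically by $(p,q)$ or by $\max(p,q)$ with ties broken by $\min(p,q)$, each invocation reduces the ordering, and symmetrically for $p<q$. Hence, after finitely many applications every term is reduced to an expression in $G_{n,p,0}^{0,0,m}$'s, $G_{r,p,0}^{0,0,m}$'s, and $R$-sums, which is precisely the intended normal form advertised around \eqref{eqGnVn0}.
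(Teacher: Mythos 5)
Your argument follows essentially the same route as the paper's: the $q=0$ branch via $H_{j,m}=H_{j+r,m}-\sum_{i=1}^{r}(j+i)^{-m}$ together with an index shift, and the $p>q$ branch via the binomial expansion of the partial--fraction identity $\bigl((j+s)-(j+r)\bigr)^{q}=(s-r)^{q}$ (the paper writes this as raising $\frac{1}{s-r}\bigl(\frac{1}{j+r}-\frac{1}{j+s}\bigr)$ to the $q$-th power, which is the same algebra); your termination argument for the recursion is a sensible addition the paper omits. The one subtlety you flag in the $p<q$ branch is real and worth making explicit: carrying out the expansion there produces the prefactor $1/(r-s)^{p}$ rather than the displayed $1/(s-r)^{p}$ (test $p=1$, $q=2$, $r=1$, $s=2$: the correct decomposition is $G_{n,1,1}^{1,2,m}-G_{n,0,2}^{1,2,m}$, the negative of what the stated formula gives), so the theorem's third line is off by $(-1)^{p}$ --- a discrepancy the paper's proof, which only writes out the $q<p$ case and appeals to symmetry, does not catch.
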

\begin{proof}
We first consider the case $q=0$,

$$
\begin{array}{l}
G_{n,p,0}^{r,0,m}
=\sum\limits_{j=1}^{n}\frac{H_{j,m}}{(j+r)^{p}}
=\sum\limits_{j=1}^{n}\frac{H_{j+r,m}-\left(\frac{1}{(j+1)^{m}}+\ldots+\frac{1}{(j+r)^{m}} \right)}{(j+r)^{p}}
=G_{n+r,p,0}^{0,0,m}-G_{r,p,0}^{0,0,m} -\sum\limits_{i=1}^{r}R_{n,m,p}^{i,r}.
\end{array}
$$
We turn to the case $q<p$ (the case $p>q$ can be handled symmetrically),

$$
\begin{array}{l}
G_{n,p,q}^{r,s,m}=\sum\limits_{j=1}^{n}\frac{H_{j,m}}{(j+r)^{p}(j+s)^{q}}
= \sum\limits_{j=1}^{n}\frac{H_{j,m}}{(j+r)^{p-q}}\left(\frac{1}{s-r}\left(\frac{1}{j+r}-\frac{1}{j+s}\right)\right)^{q}
\\ =
\frac{1}{(s-r)^{q}}\sum\limits_{i=0}^{q}
(-1)^{i}\binom{q}{i}\sum\limits_{j=1}^{n}
\frac{H_{j,m}}{(j+r)^{p-i}(j+s)^{i}}
=
\frac{1}{(s-r)^{q}}\sum\limits_{i=0}^{q}
(-1)^{i}\binom{q}{i}G_{n,p-i,i}^{r,s,m}.
\end{array}
$$
\end{proof}

\begin{theorem}\label{thmVpqrsm}
Define, for integers $m\ge 1$, $p,q\ge 0$ (but $p+q>1$), and $r,s\ge 0$

$$
V_{n,p,q}^{r,s,m}:=\sum\limits_{j=1}^{n}\frac{H_{j,m}^{2}}{(j+r)^{p}(j+s)^{q}}.
$$
We then we have the following recursive relationship 

$$
V_{n,p,q}^{r,s,m} = \left\{
\begin{array}{cc}
\begin{array}{l}
V_{n,p,0}^{0,0,m}-V_{r,p,0}^{0,0,m}
-2\sum\limits_{i=1}^{r}G_{n,p,m}^{r,i,m}
\\ +\sum\limits_{i=1}^{r}R_{n,p,2m}^{r,i}
-2\sum\limits_{i,k=1}^{r}R_{n,p,m,m}^{r,i,k}
+2\sum\limits_{1=i_{1}<i_{2}}^{r}R_{n,p,m,m}^{r,i_{1},i_{2}}
,
\end{array}
& q=0, \\
\frac{1}{(s-r)^{q}}\sum\limits_{i=0}^{q}(-1)^{i}\binom{q}{i}V_{n,p-i,i}^{r,s,m}& p>q, \\
\frac{1}{(s-r)^{p}}\sum\limits_{i=0}^{p}(-1)^{i}\binom{p}{i}V_{n,i,q-i}^{r,s,m}& p<q.
\end{array}
\right.
$$
\end{theorem}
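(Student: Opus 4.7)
The plan is to mirror the proof of Theorem~\ref{thmGpqrsm}. When $p\neq q$ the argument is identical to the one already used there: applying the partial fraction identity
$$
\frac{1}{(j+r)(j+s)}=\frac{1}{s-r}\left(\frac{1}{j+r}-\frac{1}{j+s}\right)
$$
raised to the smaller of the two powers $p,q$, expanding with the binomial theorem, and then interchanging the two finite sums, yields the claimed recursion. The factor $H_{j,m}^{2}$ plays no role in this manipulation and is carried along unchanged, so this case is routine bookkeeping and needs no separate attention.

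The genuine work is in the case $q=0$. The plan is to start from the telescoping identity
$$
H_{j,m}=H_{j+r,m}-\sum_{i=1}^{r}\frac{1}{(j+i)^{m}},
$$
square it to obtain
$$
H_{j,m}^{2}=H_{j+r,m}^{2}-2H_{j+r,m}\sum_{i=1}^{r}\frac{1}{(j+i)^{m}}+\Bigl(\sum_{i=1}^{r}\frac{1}{(j+i)^{m}}\Bigr)^{2},
$$
divide through by $(j+r)^{p}$, and sum $j=1,\ldots,n$. After the index shift $k=j+r$, the first piece becomes $\sum_{k=r+1}^{n+r}H_{k,m}^{2}/k^{p}=V_{n+r,p,0}^{0,0,m}-V_{r,p,0}^{0,0,m}$. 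The third piece splits, by expanding the square, into a diagonal contribution $\sum_{i=1}^{r}R_{n,p,2m}^{r,i}$ and an off--diagonal contribution $2\sum_{1\le i_{1}<i_{2}\le r}R_{n,p,m,m}^{r,i_{1},i_{2}}$, which already match two of the terms in the claim.

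The cross term is the one delicate piece. Because the target formula is written using $G_{n,p,m}^{r,i,m}$, which carries $H_{j,m}$ rather than $H_{j+r,m}$, I would re--substitute $H_{j+r,m}=H_{j,m}+\sum_{k=1}^{r}1/(j+k)^{m}$ inside the cross term. This produces the desired $-2\sum_{i=1}^{r}G_{n,p,m}^{r,i,m}$ together with an extra double rational sum $-2\sum_{i,k=1}^{r}R_{n,p,m,m}^{r,i,k}$. Assembling the four pieces then reproduces the formula in the statement.

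The main obstacle, as for many identities of this type, is purely combinatorial bookkeeping: one must correctly separate the square of the inner rational sum into its diagonal and ordered off--diagonal parts, and must take care to re--introduce $H_{j,m}$ (rather than $H_{j+r,m}$) in the cross term so that every contribution collapses into the $G$ and $R$ building blocks used by the theorem. No nontrivial analytic ingredient is required beyond the splitting identity for $H_{j+r,m}$.
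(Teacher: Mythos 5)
Your proposal follows essentially the same route as the paper: square the splitting identity $H_{j,m}=H_{j+r,m}-\sum_{i=1}^{r}(j+i)^{-m}$ for the $q=0$ case, separate the square of the inner sum into diagonal and ordered off--diagonal parts, re--substitute $H_{j,m}$ into the cross term to produce the $G$ and $R$ building blocks, and reuse the binomial/partial-fraction expansion verbatim for $p\neq q$. Note only that your (correct) first piece is $V_{n+r,p,0}^{0,0,m}-V_{r,p,0}^{0,0,m}$, consistent with the analogous term $G_{n+r,p,0}^{0,0,m}$ in Theorem~\ref{thmGpqrsm}, whereas the statement and proof in the paper write $V_{n,p,0}^{0,0,m}$, which appears to be a typo there rather than an error on your part.
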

\begin{proof}
We first consider the case $q=0$,

$$
\begin{array}{l}
V_{n,p,0}^{r,0,m}
=\sum\limits_{j=1}^{n}\frac{H_{j,m}^{2}}{(j+r)^{p}}
=\sum\limits_{j=1}^{n}\frac{\left(H_{j+r,m}-\left(\frac{1}{(j+1)^{m}}+\ldots+\frac{1}{(j+r)^{m}} \right)\right)^{2}}{(j+r)^{p}}
\\ =
V_{n,p,0}^{0,0,m}-V_{r,p,0}^{0,0,m}
-2\sum\limits_{i=1}^{r}G_{n,p,m}^{r,i,m}
-2\sum\limits_{i=1}^{r}\sum\limits_{k=1}^{r}R_{n,p,m,m}^{r,i,k}
+\sum\limits_{i=1}^{r}R_{n,p,2m}^{r,i}
\\ +2\sum\limits_{i_{1}=1}^{r-1}\sum\limits_{i_{2}=i_{1}+1}^{r}R_{n,p,m,m}^{r,i_{1},i_{2}}
\end{array}
$$
We turn to the case $q<p$ (the case $p>q$ can be handled symmetrically),

$$
\begin{array}{l}
V_{n,p,q}^{r,s,m}
=\sum\limits_{j=1}^{n}\frac{H_{j,m}^{2}}{(j+r)^{p}(j+s)^{q}}
= \sum\limits_{j=1}^{n}\frac{H_{j,m}^{2}}{(j+r)^{p-q}}\left(\frac{1}{s-r}\left(\frac{1}{j+r}-\frac{1}{j+s}\right)\right)^{q}
\\ =
\frac{1}{(s-r)^{q}}\sum\limits_{i=0}^{q}
(-1)^{i}\binom{q}{i}\sum\limits_{j=1}^{n}
\frac{H_{j,m}^{2}}{(j+r)^{p-i}(j+s)^{i}}
=
\frac{1}{(s-r)^{q}}\sum\limits_{i=0}^{q}(-1)^{i}\binom{q}{i}V_{n,p-i,i}^{r,s,m}.
\end{array}
$$
\end{proof}

We can also provide a number of more specialized recursive formul\ae\ , in terms of $m$, with $n$ held constant, 
which will be useful in particular cases. 

\begin{lemma}\label{lemSumj1sqjm}
For $m\ge 2$ it holds that 

\be\label{eqSumj1sqjm}
\begin{array}{rcl}
\sum\limits_{j=1}^{n}\frac{1}{(j+1)^{2}(j+m)} & = &
\left\{
\begin{array}{lc}
H_{n+1,3}-1,~~~~~~~~~~~~~~~~~~~~~~~~~~~~~~~~~~~~~~~~~~~~  m=1, \\
\frac{H_{n+1,2}}{m-1} - \frac{1}{m-1}
- \frac{H_{m,1}}{(m-1)^{2}}+ \frac{1}{(m-1)^{2}}+\frac{1}{(m-1)^{2}}\sum\limits_{j=n+2}^{n+m}\frac{1}{j}, m \ge 2.
\end{array}
\right.
\\ & \xlongrightarrow{n \to \infty} &
\left\{
\begin{array}{lc}
\zeta(3)-1, & m= 1,\\
\frac{\pi^{2}}{6(m-1)} - \frac{1}{m-1} 
- \frac{H_{m,1}}{(m-1)^{2}}+ \frac{1}{(m-1)^{2}}, & m \ge 2.
\end{array}
\right.
\end{array}
\ee
\end{lemma}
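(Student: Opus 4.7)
The plan is to split on the two cases and attack each by partial fractions.

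For $m=1$ the summand becomes $(j+1)^{-3}$, so the finite sum is $\sum_{j=1}^{n}(j+1)^{-3}=H_{n+1,3}-1$ by re-indexing $j'=j+1$ and subtracting off the missing $j'=1$ term. Nothing deeper is required here.

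For $m\ge 2$ the main step is to perform a partial fraction decomposition
\[
\frac{1}{(j+1)^{2}(j+m)} = \frac{A}{(j+1)^{2}}+\frac{B}{j+1}+\frac{C}{j+m}.
\]
Plugging $j=-1$ gives $A=1/(m-1)$; plugging $j=-m$ gives $C=1/(m-1)^{2}$; and matching the $j^{2}$ coefficient (which must vanish) forces $B=-C=-1/(m-1)^{2}$. I would then sum each of the three pieces separately from $j=1$ to $n$, using
\[
\sum_{j=1}^{n}\frac{1}{(j+1)^{2}}=H_{n+1,2}-1,\qquad \sum_{j=1}^{n}\frac{1}{j+1}=H_{n+1,1}-1,\qquad \sum_{j=1}^{n}\frac{1}{j+m}=H_{n+m,1}-H_{m,1},
\]
and combine. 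The only mild bookkeeping issue is the appearance of both $H_{n+1,1}$ (from the $B$ term) and $H_{n+m,1}$ (from the $C$ term); these must be reconciled by writing
\[
H_{n+m,1}-H_{n+1,1}=\sum_{j=n+2}^{n+m}\frac{1}{j},
\]
after which the remaining $H_{m,1}$ and the two constants $1/(m-1)$ and $1/(m-1)^{2}$ assemble into precisely the displayed formula.

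For the $n\to\infty$ asymptotics, $H_{n+1,2}\to\zeta(2)=\pi^{2}/6$ and the tail sum $\sum_{j=n+2}^{n+m}j^{-1}$ is a sum of $m-1$ terms each tending to $0$, so it vanishes in the limit; the remaining constants give the stated closed form. The whole proof is essentially routine manipulation, so there is no real obstacle beyond keeping the index shifts straight when reducing the $C$-term contribution and recognizing that only finitely many tail terms are being discarded in the limit.
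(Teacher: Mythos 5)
Your proof is correct and takes essentially the same route as the paper: the paper performs the partial-fraction decomposition in two nested steps (first peeling off $\frac{1}{m-1}\bigl(\frac{1}{j+1}-\frac{1}{j+m}\bigr)$, then decomposing the resulting cross-term), which produces exactly the three-term decomposition with coefficients $A=\frac{1}{m-1}$, $B=-\frac{1}{(m-1)^{2}}$, $C=\frac{1}{(m-1)^{2}}$ that you write down directly. The subsequent bookkeeping, including collecting $H_{n+m,1}-H_{n+1,1}$ into the tail sum $\sum_{j=n+2}^{n+m}\frac{1}{j}$ and the limit argument, matches the paper's proof.
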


\begin{proof}
For $m=1$ the result is obvious and for $m\ge 2$ we have
$$
\begin{array}{l}
\sum\limits_{j=1}^{n}\frac{1}{(j+1)^{2}(j+m)} 
=\sum\limits_{j=1}^{n}\frac{1}{(j+1)}\frac{1}{m-1}\left(\frac{1}{j+1}-\frac{1}{j+m}\right)
\\=\frac{1}{m-1}\left(\sum\limits_{j=1}^{n}\frac{1}{(j+1)^{2}}-\sum\limits_{j=1}^{n}\frac{1}{(j+1)(j+m)}\right)
=\frac{1}{m-1}\left(H_{n+1,2}-1-\sum\limits_{j=1}^{n}\frac{1}{(j+1)(j+m)}\right) 
\\=\frac{1}{m-1}\left(H_{n+1,2}-1-\frac{1}{m-1}\left(\sum\limits_{j=1}^{n}\frac{1}{j+1}-\sum\limits_{j=1}^{n}\frac{1}{j+m}\right)\right) 
\\ =\frac{1}{m-1}\left(H_{n+1,2}-1-\frac{1}{m-1}\left(H_{n+1,1}-1-H_{n+m,1}+H_{m,1}\right)\right)
\\ =
\frac{H_{n+1,2}}{m-1} - \frac{1}{m-1}
- \frac{H_{m,1}}{(m-1)^{2}}+ \frac{1}{(m-1)^{2}}+\frac{1}{(m-1)^{2}}\sum\limits_{j=n+2}^{n+m}\frac{1}{j} .
\end{array}
$$
\end{proof}

\newpage
\begin{lemma}\label{lemSumHj1jm}
For $m>1$ we have the following recursive representation 

\be\label{eqSumHj1jm}
G_{n,1,0}^{m,0,1}=\sum\limits_{j=1}^{n}\frac{H_{j,1}}{j+m} = 
G_{n,1,0}^{m-1,0,1}
+\frac{1}{m-1}\sum\limits_{j=n+2}^{n+m}\frac{1}{j}
-\frac{H_{m,1}}{m-1}
+\frac{1}{m(m-1)}
+\frac{H_{n+1,1}}{n+m},
\ee
with initial condition $m=1$ 

\be\label{eqSumHj1jq}
G_{n,1,0}^{1,0,1}=\sum\limits_{j=1}^{n}\frac{H_{j,1}}{j+1} 
= \frac{1}{2}\left(H_{n+1,1}^{2}-H_{n+1,2} \right)\sim \frac{1}{2}\ln^{2} n.
\ee
\end{lemma}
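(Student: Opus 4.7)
The plan is to peel off the $m$-th shift with the identity
$\frac{1}{j+m}=\frac{1}{j+m-1}-\frac{1}{(j+m-1)(j+m)}$, which immediately gives
$$
G_{n,1,0}^{m,0,1}=G_{n,1,0}^{m-1,0,1}-\sum_{j=1}^{n}\frac{H_{j,1}}{(j+m-1)(j+m)},
$$
so the whole content of the recursion is the evaluation of that residual sum.

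To handle the residual sum I would apply Abel summation by parts, taking $a_j=H_{j,1}$ (so $a_j-a_{j-1}=1/j$) against $b_j=1/(j+m-1)$, using the fact that $b_{j+1}-b_j=-1/((j+m-1)(j+m))$ telescopes. The boundary terms contribute $H_{n,1}/(n+m)-1/m$, and the interior term is the ordinary sum $\sum_{j=2}^{n}\frac{1}{j(j+m-1)}$, which — and this is where the hypothesis $m>1$ matters — splits via the partial fraction $\frac{1}{j(j+m-1)}=\frac{1}{m-1}\bigl(\frac{1}{j}-\frac{1}{j+m-1}\bigr)$ into harmonic numbers $H_{n,1}$, $H_{n+m-1,1}$, $H_{m,1}$.

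Collecting everything yields a form such as
$$
G_{n,1,0}^{m-1,0,1}+\frac{H_{n,1}}{n+m}+\frac{1}{m-1}\sum_{k=n+1}^{n+m-1}\frac{1}{k}+\frac{1}{m(m-1)}-\frac{H_{m,1}}{m-1}.
$$
This is the step I expect to be the real obstacle — not because it is deep, but because the statement is phrased asymmetrically (with $H_{n+1,1}$ and summation range $n+2,\dots,n+m$). Reconciling the two forms requires verifying that $\frac{H_{n+1,1}-H_{n,1}}{n+m}=\frac{1}{(n+1)(n+m)}$ cancels against the shift in summation range, using the elementary identity $\frac{1}{n+1}-\frac{1}{n+m}=\frac{m-1}{(n+1)(n+m)}$. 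Once this bookkeeping is done the recursion matches \eqref{eqSumHj1jm} exactly.

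For the initial condition $m=1$ I would rewrite $H_{j,1}=H_{j+1,1}-\frac{1}{j+1}$ and reindex:
$$
\sum_{j=1}^{n}\frac{H_{j,1}}{j+1}=\sum_{k=2}^{n+1}\frac{H_{k,1}}{k}-\bigl(H_{n+1,2}-1\bigr),
$$
and then invoke the standard identity $\sum_{k=1}^{N}H_{k,1}/k=\tfrac{1}{2}(H_{N,1}^{2}+H_{N,2})$ (which can be proved by swapping the order of summation) at $N=n+1$. The constant $1$ from the $k=1$ term cancels the $+1$ above, and $\tfrac12 H_{n+1,2}-H_{n+1,2}=-\tfrac12 H_{n+1,2}$ produces exactly $\tfrac12\bigl(H_{n+1,1}^{2}-H_{n+1,2}\bigr)$. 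The asymptotic $\sim\tfrac12\ln^{2} n$ is then immediate from $H_{n+1,1}\sim\ln n$ and $H_{n+1,2}\to\zeta(2)$.
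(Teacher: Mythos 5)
Your argument is correct, and for the recursive step it takes a genuinely different route from the paper. The paper substitutes $H_{j,1}=H_{j+1,1}-\frac{1}{j+1}$ and reindexes $j\mapsto j-1$ in the first resulting sum, which lowers the denominator shift from $m$ to $m-1$ in one stroke and leaves the elementary sum $\sum_{j=1}^{n}\frac{1}{(j+1)(j+m)}$ to be evaluated by partial fractions. You instead split the denominator via $\frac{1}{j+m}=\frac{1}{j+m-1}-\frac{1}{(j+m-1)(j+m)}$, so that $G_{n,1,0}^{m-1,0,1}$ appears immediately, and then dispose of the residual $\sum_{j=1}^{n}\frac{H_{j,1}}{(j+m-1)(j+m)}$ by Abel summation against $b_j=1/(j+m-1)$; the interior sum $\sum_{j=2}^{n}\frac{1}{j(j+m-1)}$ then plays the role that $\sum_{j=1}^{n}\frac{1}{(j+1)(j+m)}$ plays in the paper. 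The two are dual: the paper's numerator shift plus reindexing is in effect a summation by parts carried out by hand, so both routes end in the same partial-fraction bookkeeping and the same boundary reconciliation $\frac{H_{n+1,1}-H_{n,1}}{n+m}=\frac{1}{m-1}\bigl(\frac{1}{n+1}-\frac{1}{n+m}\bigr)$, which you correctly identify and verify. Your approach makes the increment $G_{n,1,0}^{m,0,1}-G_{n,1,0}^{m-1,0,1}$ visible from the outset, at the price of invoking Abel summation where the paper gets by with a bare reindex. For the initial condition $m=1$ your derivation coincides with the paper's: both shift $H_{j,1}$ to $H_{j+1,1}$, reindex, and invoke $\sum_{k=1}^{N}H_{k,1}/k=\frac{1}{2}(H_{N,1}^{2}+H_{N,2})$ (the paper's Lemma \ref{lemSumHii1}).
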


\begin{proof}
We first consider the special case $m=1$

$$
\begin{array}{l}
G_{n,1,0}^{1,0,1}=\sum\limits_{j=1}^{n}\frac{H_{j,1}}{j+1} 
= \sum\limits_{j=1}^{n}\frac{H_{j+1,1}}{j+1} - \sum\limits_{j=1}^{n}\frac{1}{(j+1)^{2}}
= \sum\limits_{j=1}^{n+1}\frac{H_{j,1}}{j}-1 - H_{n+1,2}+1
\\ \stackrel{Lemma~\ref{lemSumHii1}}{=}\frac{1}{2}\left(H_{n+1,1}^{2}+H_{n+1,2} \right)-H_{n+1,2}
=\frac{1}{2}\left(H_{n+1,1}^{2}-H_{n+1,2} \right).
\end{array}
$$
Now turning to the general case

$$
\begin{array}{l}
G_{n,1,0}^{m,0,1}=\sum\limits_{j=1}^{n}\frac{H_{j,1}}{j+m} = 
\sum\limits_{j=1}^{n}\frac{H_{j+1,1}}{j+m} -\sum\limits_{j=1}^{n}\frac{1}{(j+1)(j+m)}
\\ =\sum\limits_{j=1}^{n+1}\frac{H_{j,1}}{j+m-1}-\frac{1}{m} 
-\frac{1}{m-1}\left(
\sum\limits_{j=1}^{n}\frac{1}{j+1}- \sum\limits_{j=1}^{n}\frac{1}{j+m}
\right)
\\ =\sum\limits_{j=1}^{n+1}\frac{H_{j,1}}{j+m-1}-\frac{1}{m} 
-\frac{1}{m-1}\left(H_{n+1}-1-H_{n+m,1}+H_{m,1}
\right)
\\=\sum\limits_{j=1}^{n}\frac{H_{j,1}}{j+m-1}
+\frac{1}{m-1}\sum\limits_{j=n+2}^{n+m}\frac{1}{j}
-\frac{H_{m,1}}{m-1}
+\frac{1}{m(m-1)}
+\frac{H_{n+1,1}}{n+m}.
\end{array}
$$
\end{proof}

\newpage
\begin{lemma}\label{lemSumHj2jm}
For $m>1$ we have the following recursive representation 

\be\label{eqSumHj2jm}
\begin{array}{rcl}
G_{n,1,0}^{m,0,2}=\sum\limits_{j=1}^{n}\frac{H_{j,2}}{j+m}& = &
G_{n,1,0}^{m-1,0,2}
-\frac{1}{m(m-1)^{2}}-\frac{H_{n+1,2}}{m-1} + \frac{H_{m,1}}{(m-1)^{2}}
+\frac{H_{n+1,2}}{n+m} 
\\&&-\frac{1}{(m-1)^{2}}\sum\limits_{j=n+2}^{n+m}\frac{1}{j}, 
\end{array}
\ee
with initial condition $m=1$ 

\be\label{eqSumHj2j1}
G_{n,1,0}^{1,0,2}=\sum\limits_{j=1}^{n}\frac{H_{j,2}}{j+1} 
= H_{n+1,1}H_{n+1,2}-\sum\limits_{j=1}^{n+1}\frac{H_{j,1}}{j^{2}}.
\ee

\end{lemma}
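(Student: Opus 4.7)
The plan is to handle the two cases separately, with the base case $m=1$ done by a summation-by-parts (Abel) manoeuvre and the recursive case $m\ge 2$ reduced to the previously proved Lemma \ref{lemSumj1sqjm}.

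For the initial case $m=1$, the idea is to telescope $H_{j,2}$ up by one index using $H_{j,2} = H_{j+1,2} - 1/(j+1)^{2}$ (or, equivalently, apply Abel summation with $a_{j}=1/(j+1)$ and $b_{j}=H_{j,2}$). One variant is to write
\[
\sum_{j=1}^{n}\frac{H_{j,2}}{j+1}
= \sum_{j=1}^{n}\frac{H_{j+1,2}}{j+1} - \sum_{j=1}^{n}\frac{1}{(j+1)^{3}}
= \sum_{k=2}^{n+1}\frac{H_{k,2}}{k} - \bigl(H_{n+1,3}-1\bigr),
\]
and then exchange the order of summation in $\sum_{k} H_{k,2}/k$ to pair each $1/i^{2}$ (from $H_{k,2}$) with the tail $\sum_{k\ge i}1/k$. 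Equivalently, one applies Abel summation to $\sum 1/(j+1)\cdot H_{j,2}$ using $H_{j+1,2}-H_{j,2}=1/(j+1)^{2}$; after the boundary term and a small bookkeeping shift between $H_{n,2}$ and $H_{n+1,2}$, the telescoping leaves exactly $H_{n+1,1}H_{n+1,2}-\sum_{j=1}^{n+1}H_{j,1}/j^{2}$.

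For the recursive step $m>1$, the plan is to reduce $G_{n,1,0}^{m,0,2}$ to $G_{n,1,0}^{m-1,0,2}$ plus an explicit correction by the same shift trick that was used in Lemma \ref{lemSumHj1jm}. Writing $H_{j,2}=H_{j+1,2}-1/(j+1)^{2}$ gives
\[
G_{n,1,0}^{m,0,2} = \sum_{j=1}^{n}\frac{H_{j+1,2}}{j+m}
-\sum_{j=1}^{n}\frac{1}{(j+1)^{2}(j+m)}.
\]
Re-indexing $j'=j+1$ in the first sum produces $G_{n,1,0}^{m-1,0,2}+H_{n+1,2}/(n+m)-1/m$, and the second sum is precisely what Lemma \ref{lemSumj1sqjm} evaluates for $m\ge 2$.

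The only non-mechanical step, and the one where a slip is easiest, is collecting the constants: after inserting the closed form from Lemma \ref{lemSumj1sqjm}, the scalar pieces combine as
\[
-\frac{1}{m} + \frac{1}{m-1} - \frac{1}{(m-1)^{2}} = -\frac{1}{m(m-1)^{2}},
\]
which is what produces the term $-1/\bigl(m(m-1)^{2}\bigr)$ in \eqref{eqSumHj2jm}. Once this algebraic simplification is verified, the remaining $H_{n+1,2}/(m-1)$, $H_{m,1}/(m-1)^{2}$, $H_{n+1,2}/(n+m)$, and $(m-1)^{-2}\sum_{j=n+2}^{n+m}1/j$ contributions match the stated recursion termwise. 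So the main obstacle is not conceptual but bookkeeping: correctly shifting indices and tracking signs coming out of Lemma \ref{lemSumj1sqjm}.
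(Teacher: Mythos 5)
Your proposal is correct and follows essentially the same route as the paper: the $m=1$ case is the shift $H_{j,2}=H_{j+1,2}-1/(j+1)^{2}$ followed by the order-of-summation exchange (which the paper delegates to Lemma \ref{lemSumHi2i} rather than inlining), and the recursive step is exactly the paper's reindexing plus Lemma \ref{lemSumj1sqjmapp}, with the constants collapsing to $-1/(m(m-1)^{2})$ as you verified.
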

\begin{proof}

The special case $m=1$  can be written as  

$$G_{n,1,0}^{1,0,2}=H_{n,1}H_{n,2}-G_{n,2,0}^{0,0,1}-\frac{H_{n,2}}{n+1}$$
and proven as follows

$$
\begin{array}{l} \sum\limits_{j=1}^{n}\frac{H_{j,2}}{j+1} 
=\sum\limits_{j=1}^{n}\frac{H_{j+1,2}}{j+1} - H_{n+1,3}+1
=\sum\limits_{j=1}^{n+1}\frac{H_{j,2}}{j}-1 - H_{n+1,3}+1
\\ \stackrel{Lemma~\ref{lemSumHi2i}}{=} H_{n+1,1}H_{n+1,2} + H_{n+1,3}-  \sum\limits_{j=1}^{n+1}\frac{H_{j,1}}{j^{2}} -1 - H_{n+1,3}+1
\\ = H_{n+1,1}H_{n+1,2} -  \sum\limits_{j=1}^{n+1}\frac{H_{j,1}}{j^{2}}.
\end{array}
$$
In the case $m>1$ we have.

$$
\begin{array}{l}
\sum\limits_{j=1}^{n}\frac{H_{j,2}}{j+m} 
= \sum\limits_{j=1}^{n}\frac{H_{j+1,2}}{j+m} -\sum\limits_{j=1}^{n}\frac{1}{(j+1)^{2}(j+m)}
= \sum\limits_{j=2}^{n+1}\frac{H_{j,2}}{j+m-1} -\sum\limits_{j=1}^{n}\frac{1}{(j+1)^{2}(j+m)}
\\ \stackrel{Lemma~ \ref{lemSumj1sqjmapp}}{=}
\sum\limits_{j=1}^{n}\frac{H_{j,2}}{j+m-1}-\frac{1}{m}+\frac{H_{n+1,2}}{n+m}
-\left(
\frac{H_{n+1,2}}{m-1} - \frac{1}{m-1}
- \frac{H_{m,1}}{(m-1)^{2}}+ \frac{1}{(m-1)^{2}}+\frac{1}{(m-1)^{2}}\sum\limits_{j=n+2}^{n+m}\frac{1}{j}
\right)
\\
= G_{n,1,0}^{m-1,0,2}
-\frac{1}{m}+ \frac{1}{m-1} - \frac{1}{(m-1)^{2}}
+\frac{H_{n+1,2}}{n+m}-\frac{H_{n+1,2}}{m-1} + \frac{H_{m,1}}{(m-1)^{2}}
-\frac{1}{(m-1)^{2}}\sum\limits_{j=n+2}^{n+m}\frac{1}{j}
\\ =
G_{n,1,0}^{m-1,0,2}
-\frac{1}{m(m-1)^{2}}-\frac{H_{n+1,2}}{m-1} + \frac{H_{m,1}}{(m-1)^{2}}
+\frac{H_{n+1,2}}{n+m} -\frac{1}{(m-1)^{2}}\sum\limits_{j=n+2}^{n+m}\frac{1}{j}.
\end{array}
$$
\end{proof}

\begin{lemma}\label{lemSumHj1sqjm}
For $m>1$ we have the following recursive representation 

\be\label{eqSumHj1sqjm}
\begin{array}{rcl}
V_{n,1,0}^{m,0,1}=\sum\limits_{j=1}^{n}\frac{H_{j,1}^{2}}{j+m} & = &
V_{n,1,0}^{m-1,0,1}+\frac{2}{m-1}G_{n,1,0}^{m-1,0,1}
-\frac{H_{n+1,1}^{2}}{m-1}
+ \frac{1}{m(m-1)^{2}}
- \frac{H_{m,1}}{(m-1)^{2}}
\\&& +\frac{1}{(m-1)^{2}}\sum\limits_{j=n+2}^{n+m}\frac{1}{j}
+\frac{H_{n+1,1}^{2}}{n+m}+\frac{2H_{n+1,1}}{(m-1)(n+m)}
\end{array}
\ee
with initial condition $m=1$ 

\be\label{lemSumHj1sqj1}
\begin{array}{rcl}
V_{n,1,0}^{1,0,1}&=&\sum\limits_{j=1}^{n}\frac{H_{j,1}^{2}}{j+1} 
= \frac{1}{3}H_{n,1}^{3}-\sum\limits_{j=1}^{n}\frac{H_{j,1}}{j^{2}}+\frac{2}{3}H_{n,3}
+\frac{H_{n,1}^{2}}{n+1}
\sim\frac{1}{3}\ln^{3}n.
\end{array}
\ee
\end{lemma}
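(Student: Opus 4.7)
The plan is to mirror the two preceding lemmata. The core step is to rewrite the inner harmonic number via $H_{j,1}=H_{j+1,1}-\frac{1}{j+1}$, so that squaring yields
$$
H_{j,1}^{2}=H_{j+1,1}^{2}-\frac{2H_{j+1,1}}{j+1}+\frac{1}{(j+1)^{2}}.
$$
Substituting this into the definition of $V_{n,1,0}^{m,0,1}$ splits the sum into three pieces $S_{1}-2S_{2}+S_{3}$, each of which becomes amenable to reindexing $k=j+1$ and to the partial fraction $\tfrac{1}{(j+1)(j+m)}=\tfrac{1}{m-1}\bigl(\tfrac{1}{j+1}-\tfrac{1}{j+m}\bigr)$. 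The desired output is built from $V_{n,1,0}^{m-1,0,1}$, $G_{n,1,0}^{m-1,0,1}$, and the auxiliary sum $\sum_{j=1}^{n}\tfrac{1}{(j+1)^{2}(j+m)}$ already evaluated in Lemma \ref{lemSumj1sqjm}.

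For the general step $m\ge 2$, I would reindex $S_{1}=\sum_{j=1}^{n}\tfrac{H_{j+1,1}^{2}}{j+m}$ to extract $V_{n,1,0}^{m-1,0,1}$, producing the boundary corrections $\tfrac{H_{n+1,1}^{2}}{n+m}-\tfrac{1}{m}$. I would split $S_{2}$ by partial fractions, reindex its two halves to recognise $G_{n,1,0}^{m-1,0,1}$ and $\sum_{k=1}^{n+1}\tfrac{H_{k,1}}{k}$, and then invoke Lemma \ref{lemSumHii1} to rewrite the latter as $\tfrac{1}{2}(H_{n+1,1}^{2}+H_{n+1,2})$. Then I would evaluate $S_{3}$ directly by Lemma \ref{lemSumj1sqjm}. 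Collecting terms, the $H_{n+1,2}$ contributions originating from Lemma \ref{lemSumHii1} should cancel exactly against the $H_{n+1,2}/(m-1)$ term coming out of Lemma \ref{lemSumj1sqjm}, and the constant debris $-\tfrac{1}{m}+\tfrac{2}{m-1}-\tfrac{2}{m(m-1)}-\tfrac{1}{m-1}+\tfrac{1}{(m-1)^{2}}$ should collapse to $\tfrac{1}{m(m-1)^{2}}$.

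For the initial condition $m=1$, the same expansion applied with denominator $j+1$ gives
$$
\sum_{j=1}^{n}\frac{H_{j,1}^{2}}{j+1}=\sum_{k=2}^{n+1}\frac{H_{k,1}^{2}}{k}-2\sum_{k=2}^{n+1}\frac{H_{k,1}}{k^{2}}+\sum_{k=2}^{n+1}\frac{1}{k^{3}}.
$$
Here I would appeal to the appendix identity $\sum_{k=1}^{n}\tfrac{H_{k,1}^{2}}{k}=\tfrac{1}{3}H_{n,1}^{3}-\tfrac{1}{3}H_{n,3}+\sum_{k=1}^{n}\tfrac{H_{k,1}}{k^{2}}$, which follows from telescoping $H_{j,1}^{3}-H_{j-1,1}^{3}=\tfrac{3H_{j,1}^{2}}{j}-\tfrac{3H_{j,1}}{j^{2}}+\tfrac{1}{j^{3}}$. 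The $H_{k,1}/k^{2}$ sums then cancel between the first two pieces, the $H_{n+1,3}$ contributions cancel against the third piece, and converting $H_{n+1,1}$, $H_{n+1,3}$ back to $H_{n,\cdot}$ via $H_{n+1,r}=H_{n,r}+(n+1)^{-r}$ produces the extra $\tfrac{H_{n,1}^{2}}{n+1}$ boundary term on the right-hand side.

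The main obstacle will be purely the bookkeeping in the $m\ge 2$ step: keeping track of the numerous $\tfrac{1}{m}$, $\tfrac{1}{m-1}$, $\tfrac{1}{m(m-1)}$ and $\tfrac{1}{(m-1)^{2}}$ constants produced by Lemmata \ref{lemSumHii1} and \ref{lemSumj1sqjm}, and verifying that after combination all $H_{n+1,2}$ dependence drops out and the residual constant is exactly $\tfrac{1}{m(m-1)^{2}}$. Once that algebraic collapse is confirmed, the assembly into the claimed recursion is mechanical.
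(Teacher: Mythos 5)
Your proposal follows essentially the same route as the paper: the identity $H_{j,1}=H_{j+1,1}-\tfrac{1}{j+1}$, squaring and splitting into $S_{1}-2S_{2}+S_{3}$, reindexing $S_{1}$ to $V_{n,1,0}^{m-1,0,1}$, partial fractions on $S_{2}$ combined with Lemmata \ref{lemSumHj1jmapp} and \ref{lemSumHii1}, Lemma \ref{lemSumj1sqjmapp} for $S_{3}$, and Lemma \ref{lemSumHj1sqj} for the $m=1$ case, with the $H_{n+1,2}$ terms cancelling and the constants collapsing to $\tfrac{1}{m(m-1)^{2}}$ exactly as you predict. The plan is correct and matches the paper's own proof.
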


\begin{proof}
We first consider the initial condition $m=1$ 

$$
\begin{array}{l}
V_{n,1,0}^{1,0,1}=\sum\limits_{j=1}^{n}\frac{H_{j,1}^{2}}{j+1} 
=\sum\limits_{j=1}^{n}\frac{\left(H_{j+1,1}-\frac{1}{j+1}\right)^{2}}{j+1} 
=\sum\limits_{j=1}^{n}\frac{H_{j+1,1}^{2}}{j+1} 
-2\sum\limits_{j=1}^{n}\frac{H_{j+1,1}}{(j+1)^{2}} 
+\sum\limits_{j=1}^{n}\frac{1}{(j+1)^{3}} 
\\ = \sum\limits_{j=1}^{n+1}\frac{H_{j,1}^{2}}{j} -1
-2\left(
\sum\limits_{j=1}^{n}\frac{H_{j,1}}{j^{2}}+\frac{H_{n+1,1}}{(n+1)^{2}}
\right)
+H_{n+1,3}-1
\\ =
\sum\limits_{j=1}^{n}\frac{H_{j,1}^{2}}{j}-2\sum\limits_{j=1}^{n}\frac{H_{j,1}}{j^{2}}+H_{n+1,3}
+\frac{H_{n+1,1}^{2}}{n+1}-\frac{2H_{n+1,1}}{(n+1)^{2}}
\\ \stackrel{Lemma~\ref{lemSumHj1sqj}}{=}
\frac{1}{3}H_{n,1}^{3}+\sum\limits_{i=1}^{n}\frac{H_{j,1}}{j^{2}}-\frac{1}{3}H_{n,3}
-2\sum\limits_{j=1}^{n}\frac{H_{j,1}}{j^{2}}+H_{n+1,3}
+\frac{H_{n+1,1}^{2}}{n+1}-\frac{2H_{n+1,1}}{(n+1)^{2}}
\\=
\frac{1}{3}H_{n,1}^{3}-\sum\limits_{j=1}^{n}\frac{H_{j,1}}{j^{2}}+\frac{2}{3}H_{n,3}
+\frac{H_{n+1,1}^{2}}{n+1}-\frac{2H_{n+1,1}}{(n+1)^{2}}+\frac{1}{(n+1)^{3}}.
\end{array}
$$
Then in the case $m>1$ we have

$$
\begin{array}{l}
V_{n,1,0}^{m,0,1}=\sum\limits_{j=1}^{n}\frac{H_{j,1}^{2}}{j+m} 
= \sum\limits_{j=1}^{n}\frac{\left(H_{j+1,1}-\frac{1}{j+1}\right)^{2}}{j+m}
= 
\sum\limits_{j=1}^{n}\frac{H_{j+1,1}^{2}}{j+m}
-2\sum\limits_{j=1}^{n}\frac{H_{j+1,1}}{(j+1)(j+m)}
\\+\sum\limits_{j=1}^{n}\frac{1}{(j+1)^{2}(j+m)}
\\ \stackrel{Lemma~\ref{lemSumj1sqjmapp}}{=}
\sum\limits_{j=2}^{n+1}\frac{H_{j,1}^{2}}{j+m-1}
-\frac{2}{m-1}\left(\sum\limits_{j=1}^{n}\frac{H_{j+1,1}}{j+1}-\sum\limits_{j=1}^{n}\frac{H_{j+1,1}}{j+m}\right)
\\+\frac{H_{n+1,2}}{m-1} - \frac{1}{m-1}
- \frac{H_{m,1}}{(m-1)^{2}}+ \frac{1}{(m-1)^{2}}
+\frac{1}{(m-1)^{2}}\sum\limits_{j=n+2}^{n+m}\frac{1}{j}
\end{array}
$$

$$
\begin{array}{l}
=
\sum\limits_{j=1}^{n}\frac{H_{j,1}^{2}}{j+m-1}-\frac{1}{m}+\frac{H_{n+1,1}^{2}}{n+m}
-\frac{2}{m-1}\left(\sum\limits_{j=2}^{n+1}\frac{H_{j,1}}{j}
-\sum\limits_{j=1}^{n}\frac{H_{j,1}}{j+m}
-\sum\limits_{j=1}^{n}\frac{1}{(j+1)(j+m)}
\right)
\\ +
\frac{H_{n+1,2}}{m-1} - \frac{1}{m-1}- \frac{H_{m,1}}{(m-1)^{2}}+ \frac{1}{(m-1)^{2}}+\frac{1}{(m-1)^{2}}\sum\limits_{j=n+2}^{n+m}\frac{1}{j}
\end{array}
$$

$$
\begin{array}{l}
\stackrel{Lemma ~\ref{lemSumHj1jmapp}}{=}
\sum\limits_{j=1}^{n}\frac{H_{j,1}^{2}}{j+m-1}
-\frac{2}{m-1}
\left(\sum\limits_{j=1}^{n+1}\frac{H_{j,1}}{j}-1
-\left(
\sum\limits_{j=1}^{n}\frac{H_{j,1}}{j+m-1}
+\frac{1}{m-1}\sum\limits_{j=n+2}^{n+m}\frac{1}{j}
-\frac{H_{m,1}}{m-1}
\right.\right.\\ \left. \left.
+\frac{1}{m(m-1)}
+\frac{H_{n+1,1}}{n+m}
\right)
-\frac{1}{m-1}\left(\sum\limits_{j=1}^{n}\frac{1}{j+1}
- \sum\limits_{j=1}^{n}\frac{1}{j+m}\right)
\right)
+
\frac{H_{n+1,2}}{m-1} - \frac{1}{m-1}-\frac{1}{m}- \frac{H_{m,1}}{(m-1)^{2}}
\\ + \frac{1}{(m-1)^{2}}+\frac{1}{(m-1)^{2}}\sum\limits_{j=n+2}^{n+m}\frac{1}{j}
+\frac{H_{n+1,1}^{2}}{n+m}
\end{array}
$$

$$
\begin{array}{l}
\stackrel{Lemma~\ref{lemSumHii1}}{=}
\\
\sum\limits_{j=1}^{n}\frac{H_{j,1}^{2}}{j+m-1}
-\frac{H_{n+1,1}^{2}}{m-1}-\frac{H_{n+1,2}}{m-1}+\frac{2}{m-1}
+\frac{2}{m-1}\sum\limits_{j=1}^{n}\frac{H_{j,1}}{j+m-1}
\\ +
\frac{H_{n+1,2}}{m-1}- \frac{H_{m,1}}{(m-1)^{2}}-\frac{2H_{m,1}}{(m-1)^{2}}+\frac{2H_{m,1}}{(m-1)^{2}}
 - \frac{2m^{2}-4m+1}{m(m-1)^{2}}+\frac{2}{m(m-1)^{2}}-\frac{2}{(m-1)^{2}}
\\ +\frac{1}{(m-1)^{2}}\sum\limits_{j=n+2}^{n+m}\frac{1}{j}
+\frac{2}{(m-1)^{2}}\sum\limits_{j=n+2}^{n+m}\frac{1}{j}
-\frac{2}{(m-1)^{2}}\sum\limits_{j=n+2}^{n+m}\frac{1}{j}
+\frac{H_{n+1,1}^{2}}{n+m}+\frac{2H_{n+1,1}}{(m-1)(n+m)}
\end{array}
$$

$$
\begin{array}{l}
=
V_{n,1,0}^{m-1,0,1}+\frac{2}{m-1}G_{n,1,0}^{m-1,0,1}
-\frac{H_{n+1,1}^{2}}{m-1}
+ \frac{1}{m(m-1)^{2}}
- \frac{H_{m,1}}{(m-1)^{2}}
+\frac{1}{(m-1)^{2}}\sum\limits_{j=n+2}^{n+m}\frac{1}{j}
\\+\frac{H_{n+1,1}^{2}}{n+m}+\frac{2H_{n+1,1}}{(m-1)(n+m)}
.
\end{array}
$$
\end{proof}

\begin{lemma}\label{lemSumHj1sqjmsq}
For $m>1$ we have the following recursive representation 

\be\label{eqSumHj1sqjmsq}
\begin{array}{rcl}
V_{n,2,0}^{m,0,1}&=&\sum\limits_{j=1}^{n}\frac{H_{j,1}^{2}}{(j+m)^{2}}  = 
V_{n,2,0}^{m-1,0,1}+\frac{2}{m-1}G_{n,2,0}^{m-1,0,1}+\frac{2}{(m-1)^{2}}G_{n,1,0}^{m-1,0,1}
\\&&
-\frac{H_{n,1}^{2}}{(m-1)^{2}}
+\frac{H_{n,2}}{(m-1)^{2}}
-\frac{H_{m-1,2}}{(m-1)^{2}}-\frac{2H_{m-1,1}}{(m-1)^{3}}
+\frac{H_{n,1}^{2}}{(n+m)^{2}}
\\&&
+\frac{2}{(m-1)^{3}}\sum\limits_{j=n+1}^{n+m-1}\frac{1}{j}
+\frac{1}{(m-1)^{2}}\sum\limits_{j=n+1}^{n+m-1}\frac{1}{j^{2}}.
\end{array}
\ee
with initial condition, $m=1$, 

\be\label{eqSumHisqip1sq}
\begin{array}{rcl}
V_{n,2,0}^{1,0,1} &=& \sum\limits_{j=1}^{n}\frac{H_{j,1}^{2}}{(j+1)^{2}}  
 =  \sum\limits_{j=1}^{n}\frac{H_{j,1}^{2}}{j^{2}}-2\sum\limits_{j=1}^{n}\frac{H_{j,1}}{j^{3}}+H_{n,4} +\frac{H_{n,1}^{2}}{(n+1)^{2}}.
\end{array}
\ee
\end{lemma}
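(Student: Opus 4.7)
The plan mirrors the structure of Lemma \ref{lemSumHj1sqjm}. For the initial condition $m=1$ I expand $H_{j,1}^{2}=(H_{j+1,1}-\tfrac{1}{j+1})^{2}$ inside the sum, obtaining three pieces: $\sum_{j=1}^{n}H_{j+1,1}^{2}/(j+1)^{2}$, $-2\sum_{j=1}^{n}H_{j+1,1}/(j+1)^{3}$, and $\sum_{j=1}^{n}1/(j+1)^{4}$. After re-indexing $j\mapsto j-1$ and folding in the missing $j=1$ boundary corrections, the first two become $\sum_{j=1}^{n+1}H_{j,1}^{2}/j^{2}$ and $\sum_{j=1}^{n+1}H_{j,1}/j^{3}$ (up to explicit constants), and the third collapses to $H_{n+1,4}-1$. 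Splitting off the $j=n+1$ terms and applying $H_{n+1,1}=H_{n,1}+\tfrac{1}{n+1}$ together with $H_{n+1,k}=H_{n,k}+\tfrac{1}{(n+1)^{k}}$, all the $1/(n+1)^{k}$ fragments cancel except for $H_{n,1}^{2}/(n+1)^{2}$, matching \eqref{eqSumHisqip1sq}.

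For $m>1$ the same expansion produces
\[
V_{n,2,0}^{m,0,1} = \sum_{j=1}^{n}\frac{H_{j+1,1}^{2}}{(j+m)^{2}} - 2\sum_{j=1}^{n}\frac{H_{j+1,1}}{(j+1)(j+m)^{2}} + \sum_{j=1}^{n}\frac{1}{(j+1)^{2}(j+m)^{2}}.
\]
After $j\mapsto j-1$, the first sum becomes $V_{n,2,0}^{m-1,0,1} - 1/m^{2} + H_{n+1,1}^{2}/(n+m)^{2}$. The remaining two are handled by the partial-fraction identities
\[
\frac{1}{(j+1)(j+m)^{2}} = \frac{1}{(m-1)^{2}}\Bigl(\frac{1}{j+1}-\frac{1}{j+m}\Bigr) - \frac{1}{m-1}\cdot\frac{1}{(j+m)^{2}},
\]
\[
\frac{1}{(j+1)^{2}(j+m)^{2}} = \frac{1}{(m-1)^{2}}\Bigl(\frac{1}{(j+1)^{2}} + \frac{1}{(j+m)^{2}}\Bigr) + \frac{2}{(m-1)^{3}}\Bigl(\frac{1}{j+m}-\frac{1}{j+1}\Bigr),
\]
both checked by clearing denominators. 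The cross sum thus splits into $\sum_{j=2}^{n+1}H_{j,1}/j$ (evaluated via Lemma \ref{lemSumHii1} as $\tfrac{1}{2}(H_{n+1,1}^{2}+H_{n+1,2})-1$) plus the index-shifted copies of $G_{n,1,0}^{m-1,0,1}$ and $G_{n,2,0}^{m-1,0,1}$, each with boundary corrections at $j=1$ and $j=n+1$. The square sum reduces to elementary linear combinations of $H_{n+1,2}$, $H_{n+m,2}$, $H_{m,2}$, $H_{n+1,1}$, $H_{n+m,1}$, $H_{m,1}$.

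The main obstacle is the bookkeeping at the end. I would group the boundary terms by type and use $H_{n+m,k}-H_{n+1,k}=\sum_{j=n+2}^{n+m}1/j^{k}$, then absorb the isolated $1/(n+1)^{k}$ and $1/(n+m)^{k}$ pieces so as to rewrite the residual as $\sum_{j=n+1}^{n+m-1}1/j^{k}$, matching the stated form. Similarly I would convert each $H_{m,k}$ to $H_{m-1,k}$ via a one-term shift. After these reductions the residual constants in $1/m^{k}$ and $1/(m-1)^{k}$ must vanish identically; this cancellation, together with the correct accounting of the square $H_{n+1,1}^{2}=H_{n,1}^{2}+2H_{n,1}/(n+1)+1/(n+1)^{2}$ contributions, is the most error-prone step and where I would be most careful.
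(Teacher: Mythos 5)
Your proposal is correct and follows essentially the same route as the paper's proof: expand $H_{j,1}^{2}=(H_{j+1,1}-\tfrac{1}{j+1})^{2}$, reindex, apply partial fractions across the gap $m-1$, and evaluate the surviving $\sum H_{j,1}/j$ piece via Lemma \ref{lemSumHii1}, with the remaining work being exactly the boundary-term bookkeeping you identify. The only difference is cosmetic — you apply the partial-fraction identities before shifting the index whereas the paper shifts first and decomposes $\tfrac{1}{j(j+m-1)^{2}}$ and $\tfrac{1}{j^{2}(j+m-1)^{2}}$ afterwards — and both of your displayed identities check out.
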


\begin{proof}
We first consider the initial condition $m=1$ 

$$
\begin{array}{l}
\sum\limits_{j=1}^{n}\frac{H_{j,1}^{2}}{(j+1)^{2}} = 
\sum\limits_{j=1}^{n}\frac{\left(H_{j+1,1}-\frac{1}{j+1}\right)^{2}}{(j+1)^{2}}
=
\sum\limits_{j=1}^{n}\frac{H_{j+1,1}^{2}}{(j+1)^{2}}
-2\sum\limits_{j=1}^{n}\frac{H_{j+1,1}}{(j+1)^{3}}
    +\sum\limits_{j=1}^{n}\frac{1}{(j+1)^{4}}
\\=
\sum\limits_{j=2}^{n+1}\frac{H_{j,1}^{2}}{j^{2}}
-2\sum\limits_{j=2}^{n+1}\frac{H_{j,1}}{j^{3}}
+\sum\limits_{j=2}^{n=1}\frac{1}{j^{4}}
=
\sum\limits_{j=1}^{n+1}\frac{H_{j,1}^{2}}{j^{2}}-1
-2\sum\limits_{j=1}^{n+1}\frac{H_{j,1}}{j^{3}}+2
+\sum\limits_{j=1}^{n+1}\frac{1}{j^{4}}-1
\\=
\sum\limits_{j=1}^{n+1}\frac{H_{j,1}^{2}}{j^{2}}
-2\sum\limits_{j=1}^{n+1}\frac{H_{j,1}}{j^{3}}
+H_{n+1,4}
.
\end{array}
$$
Then, in the case $m>1$, we have 

$$
\begin{array}{l}
\sum\limits_{j=1}^{n}\frac{H_{j,1}^{2}}{(j+m)^{2}} 
= \sum\limits_{j=1}^{n}\frac{\left(H_{j+1,1}-\frac{1}{j+1}\right)^{2}}{(j+m)^{2}}
\\= 
\sum\limits_{j=1}^{n}\frac{H_{j+1,1}^{2}}{(j+1+m-1)^{2}}
-2\sum\limits_{j=1}^{n}\frac{H_{j+1,1}}{(j+1)(j+1+m-1)^{2}}
+\sum\limits_{j=1}^{n}\frac{1}{(j+1)^{2}(j+1+m-1)^{2}}
\\=
\sum\limits_{j=1}^{n}\frac{H_{j,1}^{2}}{(j+m-1)^{2}}-\frac{1}{m^{2}}+\frac{H_{n+1,1}^{2}}{(n+m)^{2}}
-2\sum\limits_{j=1}^{n}\frac{H_{j,1}}{j(j+m-1)^{2}}+\frac{2}{m^{2}}-\frac{2H_{n+1,1}}{(n+1)(n+m)^{2}}
\\+\sum\limits_{j=1}^{n}\frac{1}{j^{2}(j+m-1)^{2}}-\frac{1}{m^{2}}+\frac{1}{(n+1)^{2}(n+m)^{2}}
\\=
V_{n,2,0}^{m-1,0,1}
-\frac{2}{m-1}\sum\limits_{j=1}^{n}\frac{H_{j,1}}{j(j+m-1)}
+\frac{2}{m-1}\sum\limits_{j=1}^{n}\frac{H_{j,1}}{(j+m-1)^{2}}
+\frac{1}{m-1}\sum\limits_{j=1}^{n}\frac{1}{j^{2}(j+m-1)}
-\frac{1}{m-1}\sum\limits_{j=1}^{n}\frac{1}{j(j+m-1)^{2}}
\\+\frac{H_{n,1}^{2}}{(n+m)^{2}}
+2\frac{H_{n,1}}{(n+1)(n+m)^{2}}
+\frac{1}{(n+1)^{2}(n+m)^{2}}
-\frac{2H_{n,1}}{(n+1)(n+m)^{2}}
-\frac{2}{(n+1)^{2}(n+m)^{2}}
+\frac{1}{(n+1)^{2}(n+m)^{2}}
\\=
V_{n,2,0}^{m-1,0,1}
-\frac{2}{(m-1)^{2}}\sum\limits_{j=1}^{n}\frac{H_{j,1}}{j}
+\frac{2}{(m-1)^{2}}\sum\limits_{j=1}^{n}\frac{H_{j,1}}{(j+m-1)}
+\frac{2}{m-1}G_{n,2,0}^{m-1,0,1}
+\frac{1}{(m-1)^{2}}\sum\limits_{j=1}^{n}\frac{1}{j^{2}}
\\-\frac{1}{(m-1)^{2}}\sum\limits_{j=1}^{n}\frac{1}{j(j+m-1)}
-\frac{1}{(m-1)^{2}}\sum\limits_{j=1}^{n}\frac{1}{j(j+m-1)}
+\frac{1}{(m-1)^{2}}\sum\limits_{j=1}^{n}\frac{1}{(j+m-1)^{2}}
+\frac{H_{n,1}^{2}}{(n+m)^{2}}
\end{array}
$$

$$
\begin{array}{l}
\stackrel{Lemma~\ref{lemSumHii1}}{=}
V_{n,2,0}^{m-1,0,1}
-\frac{2}{(m-1)^{2}}\frac{1}{2}\left(H_{n,1}^{2}+H_{n,2} \right)
+\frac{2}{(m-1)^{2}}G_{n,1,0}^{m-1,0,1}
+\frac{2}{m-1}G_{n,2,0}^{m-1,0,1}
+\frac{1}{(m-1)^{2}}\sum\limits_{j=1}^{n}\frac{1}{j^{2}}
\\-\frac{1}{(m-1)^{2}}\sum\limits_{j=1}^{n}\frac{1}{j(j+m-1)}
-\frac{1}{(m-1)^{2}}\sum\limits_{j=1}^{n}\frac{1}{j(j+m-1)}
+\frac{1}{(m-1)^{2}}\sum\limits_{j=1}^{n}\frac{1}{(j+m-1)^{2}}
+\frac{H_{n,1}^{2}}{(n+m)^{2}}
\\=V_{n,2,0}^{m-1,0,1}+\frac{2}{m-1}G_{n,2,0}^{m-1,0,1}+\frac{2}{(m-1)^{2}}G_{n,1,0}^{m-1,0,1}
-\frac{H_{n,1}^{2}}{(m-1)^{2}}+\frac{H_{n,2}}{(m-1)^{2}}
-\frac{H_{m-1,2}}{(m-1)^{2}}-\frac{2H_{m-1,1}}{(m-1)^{3}}
\\+\frac{H_{n,1}^{2}}{(n+m)^{2}}
+\frac{2}{(m-1)^{3}}\sum\limits_{j=n+1}^{n+m-1}\frac{1}{j}
+\frac{1}{(m-1)^{2}}\sum\limits_{j=n+1}^{n+m-1}\frac{1}{j^{2}}.
\end{array}
$$
\end{proof}

\begin{lemma}\label{lemSumHi2sqim}
For $m>1$ we have the following recursive representation 

\be\label{eqSumHi2sqim}
\begin{array}{rcl}
V_{n,1,0}^{m,0,2}&=&\sum\limits_{j=1}^{n}\frac{H_{j,2}^{2}}{(j+m)}  = 
V_{n,1,0}^{m-1,0,2}-\frac{2}{(m-1)^{2}}G_{n,1,0}^{m-1,0,2}
-\frac{H_{n,2}^{2}}{m-1}+\frac{2H_{n,1}H_{n,2}}{(m-1)^{2}}
\\&&+\frac{H_{n,2}}{(m-1)^{3}}+\frac{H_{n,3}}{(m-1)^{2}}
-\frac{2}{(m-1)^{2}}\sum\limits_{j=1}^{n}\frac{H_{j,1}}{j^{2}}
 -\frac{H_{m-1,1}}{(m-1)^{4}}
+\frac{H_{n,2}^{2}}{(n+m)}
\\&&
+\frac{1}{(m-1)^{4}}\sum\limits_{j=n+1}^{n+m-1}\frac{1}{j}.
\end{array}
\ee
with initial condition, $m=1$,

\be\label{eqSumHi2sqi1}
V_{n,1,0}^{1,0,2}=
\sum\limits_{j=1}^{n}\frac{H_{j,2}^{2}}{j+1} = 
\sum\limits_{j=1}^{n}\frac{H_{j,2}^{2}}{j} - 2\sum\limits_{j=1}^{n}\frac{H_{j,2}}{j^{3}} +H_{n,5} + \frac{H_{n,2}^{2}}{n+1}.
\ee 
\end{lemma}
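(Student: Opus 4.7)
The plan is to mirror the strategy used for Lemma~\ref{lemSumHj1sqjm}. I would exploit the identity $H_{j,2}=H_{j+1,2}-1/(j+1)^{2}$ to expand the square and split $V_{n,1,0}^{m,0,2}$ into three pieces,
\[
\sum_{j=1}^{n}\frac{H_{j+1,2}^{2}}{j+m}-2\sum_{j=1}^{n}\frac{H_{j+1,2}}{(j+1)^{2}(j+m)}+\sum_{j=1}^{n}\frac{1}{(j+1)^{4}(j+m)},
\]
and then process each piece with an index shift $j\mapsto j-1$ and, wherever a product $(j+1)^{k}(j+m)$ appears, with partial fractions.

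For the initial condition $m=1$ the partial fractions collapse: the shifted first sum is $\sum_{j=1}^{n+1}H_{j,2}^{2}/j-1$, the shifted second is $-2\sum_{j=1}^{n+1}H_{j,2}/j^{3}+2$, and the third is $H_{n+1,5}-1$. Recombining via $H_{n+1,2}=H_{n,2}+1/(n+1)^{2}$ and bringing every summation back to the range $j=1,\ldots,n$ then yields \eqref{eqSumHi2sqi1} after the boundary pieces cancel against one another.

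For $m>1$ the shifted first sum becomes $V_{n,1,0}^{m-1,0,2}-1/m+H_{n+1,2}^{2}/(n+m)$. The second sum, after inserting
\[
\frac{1}{(j+1)^{2}(j+m)}=\frac{1}{m-1}\frac{1}{(j+1)^{2}}-\frac{1}{(m-1)^{2}}\frac{1}{j+1}+\frac{1}{(m-1)^{2}}\frac{1}{j+m},
\]
breaks into three sub-sums that evaluate (after the shift) to $\sum_{j=1}^{n}H_{j,2}/j^{2}$, $\sum_{j=1}^{n}H_{j,2}/j$, and $G_{n,1,0}^{m-1,0,2}$. The first two can be collapsed through Appendix identities---a symmetrization gives $\sum_{j=1}^{n}H_{j,2}/j^{2}=\tfrac12(H_{n,2}^{2}+H_{n,4})$, and $\sum_{j=1}^{n}H_{j,2}/j=H_{n,1}H_{n,2}+H_{n,3}-\sum_{j=1}^{n}H_{j,1}/j^{2}$ is already on record---supplying the $H_{n,2}^{2}/(m-1)$, $H_{n,1}H_{n,2}/(m-1)^{2}$, $H_{n,3}/(m-1)^{2}$ and $\sum H_{j,1}/j^{2}$ contributions visible in \eqref{eqSumHi2sqim}. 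The third sum would be unwound by iterating $\frac{1}{(j+1)^{k}(j+m)}=\frac{1}{m-1}\bigl(\frac{1}{(j+1)^{k}}-\frac{1}{(j+1)^{k-1}(j+m)}\bigr)$ four times, producing elementary sums whose evaluation delivers the tail $\sum_{j=n+1}^{n+m-1}1/j$ weighted by $1/(m-1)^{4}$, the constant $H_{m-1,1}/(m-1)^{4}$, and a leftover $H_{n,4}/(m-1)$ needed to cancel the residual $H_{n,4}/(m-1)$ produced by the symmetrization of $\sum H_{j,2}/j^{2}$.

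The main obstacle is the bookkeeping. There are many boundary contributions (instances of $H_{n+1,k}$ that appear after each index shift, the constants $-1/m$, and various powers of $1/(m-1)$), and the $H_{n,4}/(m-1)$ released by the symmetrization of $\sum H_{j,2}/j^{2}$ must be absorbed exactly by the $H_{n,4}$ piece extracted from the partial-fraction cascade of $1/((j+1)^{4}(j+m))$. Confirming these cancellations, while keeping the remaining higher-order pieces in a form compatible with $V_{n,1,0}^{m-1,0,2}$ and $G_{n,1,0}^{m-1,0,2}$ so that the right-hand side of \eqref{eqSumHi2sqim} emerges intact, is the delicate part.
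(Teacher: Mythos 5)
Your proposal follows essentially the same route as the paper's proof: expand via $H_{j,2}=H_{j+1,2}-1/(j+1)^{2}$, shift indices, apply partial fractions to $1/((j+1)^{2}(j+m))$ and $1/((j+1)^{4}(j+m))$, and close the resulting sub-sums with the symmetrization identity $\sum_{j=1}^{n}H_{j,2}/j^{2}=\tfrac{1}{2}(H_{n,2}^{2}+H_{n,4})$ and $\sum_{j=1}^{n}H_{j,2}/j=H_{n,1}H_{n,2}+H_{n,3}-\sum_{j=1}^{n}H_{j,1}/j^{2}$ (Lemmata \ref{lemSumHi2isq} and \ref{lemSumHi2i}). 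The cancellations you flag (the $H_{n,4}/(m-1)$ terms, and the boundary pieces recombining into $H_{n,2}^{2}/(n+m)$) are exactly the ones that occur in the paper's computation, so the plan is sound.
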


\begin{proof}
We first consider the initial condition $m=1$ 

$$
\begin{array}{l}
\sum\limits_{j=1}^{n}\frac{H_{j,2}^{2}}{j+1} =
\sum\limits_{j=1}^{n}\frac{\left(H_{j+1,2}-\frac{1}{(j+1)^{2}}\right)^{2}}{j+1} 
=
\sum\limits_{j=1}^{n}\frac{H_{j+1,2}^{2}}{j+1} -
2\sum\limits_{j=1}^{n}\frac{H_{j+1,2}}{(j+1)^{3}} 
+\sum\limits_{j=1}^{n}\frac{1}{(j+1)^{5}} 
\\ =
\sum\limits_{j=1}^{n+1}\frac{H_{j,2}^{2}}{j} - 1
- 2\sum\limits_{j=1}^{n+1}\frac{H_{j,2}}{j^{3}} + 2
+H_{n+1,5}-1
=
\sum\limits_{j=1}^{n+1}\frac{H_{j,2}^{2}}{j} 
- 2\sum\limits_{j=1}^{n+1}\frac{H_{j,2}}{j^{3}} 
+H_{n+1,5}.
\end{array}
$$
Then, in the case $m>1$, we have 

$$
\begin{array}{l}
V_{n,1,0}^{m,0,2}=\sum\limits_{j=1}^{n}\frac{H_{j,2}^{2}}{(j+m)} 
= \sum\limits_{j=1}^{n}\frac{\left(H_{j+1,2}-\frac{1}{j+2}\right)^{2}}{(j+m)}
\\= 
\sum\limits_{j=1}^{n}\frac{H_{j+1,2}^{2}}{(j+1+m-1)}
-2\sum\limits_{j=1}^{n}\frac{H_{j+1,2}}{(j+1)^{2}(j+1+m-1)}
+\sum\limits_{j=1}^{n}\frac{1}{(j+1)^{4}(j+1+m-1)}
\\=
\sum\limits_{j=1}^{n}\frac{H_{j,2}^{2}}{(j+m-1)}-\frac{1}{m}+\frac{H_{n+1,2}^{2}}{(n+m)}
-2\sum\limits_{j=1}^{n}\frac{H_{j,2}}{j^{2}(j+m-1)}+\frac{2}{m}-\frac{2H_{n+1,2}}{(n+1)^{2}(n+m)}
\\+\sum\limits_{j=1}^{n}\frac{1}{j^{4}(j+m-1)}-\frac{1}{m}+\frac{1}{(n+1)^{4}(n+m)}
\\=
V_{n,1,0}^{m-1,0,2}
-\frac{2}{m-1}\sum\limits_{j=1}^{n}\frac{H_{j,2}}{j^{2}}
+\frac{2}{m-1}\sum\limits_{j=1}^{n}\frac{H_{j,2}}{j(j+m-1)}
+\frac{1}{m-1}\sum\limits_{j=1}^{n}\frac{1}{j^{4}}
-\frac{1}{m-1}\sum\limits_{j=1}^{n}\frac{1}{j^{3}(j+m-1)}
\\ +\frac{H_{n+1,2}^{2}}{(n+m)}
-\frac{2H_{n+1,2}}{(n+1)^{2}(n+m)}
+\frac{1}{(n+1)^{4}(n+m)}
\end{array}
$$

$$
\begin{array}{l}
 \stackrel{Lemma~\ref{lemSumHi2isq}}{=}
V_{n,1,0}^{m-1,0,2}
-\frac{2}{m-1}\frac{1}{2}(H_{n,2}^{2}+H_{n,4})
+\frac{2}{(m-1)^{2}}\sum\limits_{j=1}^{n}\frac{H_{j,2}}{j}
-\frac{2}{(m-1)^{2}}\sum\limits_{j=1}^{n}\frac{H_{j,2}}{(j+m-1)}
\\ +\frac{H_{n,4}}{m-1}
-\frac{1}{(m-1)^{2}}\sum\limits_{j=1}^{n}\frac{1}{j^{3}}
+\frac{1}{(m-1)^{2}}\sum\limits_{j=1}^{n}\frac{1}{j^{2}(j+m-1)}
+\frac{H_{n,2}^{2}}{(n+m)}
\\ \stackrel{Lemma~\ref{lemSumHi2i}}{=}
V_{n,1,0}^{m-1,0,2}-\frac{2}{(m-1)^{2}}G_{n,1,0}^{m-1,0,2}
-\frac{H_{n,2}^{2}}{m-1}
+\frac{2}{(m-1)^{2}}(H_{n,1}H_{n,2}+H_{n,3}-\sum\limits_{j=1}^{n}\frac{H_{j,1}}{j^{2}})
\\ -\frac{H_{n,3}}{(m-1)^{2}}
+\frac{1}{(m-1)^{3}}\sum\limits_{j=1}^{n}\frac{1}{j^{2}}
-\frac{1}{(m-1)^{3}}\sum\limits_{j=1}^{n}\frac{1}{j(j+m-1)}
+\frac{H_{n,2}^{2}}{(n+m)}
\\=
V_{n,1,0}^{m-1,0,2}-\frac{2}{(m-1)^{2}}G_{n,1,0}^{m-1,0,2}
-\frac{H_{n,2}^{2}}{m-1}
+\frac{2H_{n,1}H_{n,2}}{(m-1)^{2}}
+\frac{H_{n,3}}{(m-1)^{2}}
-\frac{2}{(m-1)^{2}}\sum\limits_{j=1}^{n}\frac{H_{j,1}}{j^{2}}
+\frac{H_{n,2}}{(m-1)^{3}}
\\ -\frac{1}{(m-1)^{4}}\sum\limits_{j=1}^{n}\frac{1}{j}
+\frac{1}{(m-1)^{4}}\sum\limits_{j=1}^{n}\frac{1}{(j+m-1)}
+\frac{H_{n,2}^{2}}{(n+m)}
\\=
V_{n,1,0}^{m-1,0,2}-\frac{2}{(m-1)^{2}}G_{n,1,0}^{m-1,0,2}
-\frac{H_{n,2}^{2}}{m-1}+\frac{2H_{n,1}H_{n,2}}{(m-1)^{2}}
+\frac{H_{n,2}}{(m-1)^{3}}+\frac{H_{n,3}}{(m-1)^{2}}
-\frac{2}{(m-1)^{2}}\sum\limits_{j=1}^{n}\frac{H_{j,1}}{j^{2}}
\\-\frac{H_{n,1}}{(m-1)^{4}}
+\frac{H_{n,1}}{(m-1)^{4}}
-\frac{H_{m-1,1}}{(m-1)^{4}}
+\frac{1}{(m-1)^{4}}\sum\limits_{j=n+1}^{n+m-1}\frac{1}{j}
+\frac{H_{n,2}^{2}}{(n+m)}
\end{array}
$$

$$
\begin{array}{l}
=
V_{n,1,0}^{m-1,0,2}-\frac{2}{(m-1)^{2}}G_{n,1,0}^{m-1,0,2}
-\frac{H_{n,2}^{2}}{m-1}+\frac{2H_{n,1}H_{n,2}}{(m-1)^{2}}
+\frac{H_{n,2}}{(m-1)^{3}}+\frac{H_{n,3}}{(m-1)^{2}}
-\frac{2}{(m-1)^{2}}\sum\limits_{j=1}^{n}\frac{H_{j,1}}{j^{2}}
\\ -\frac{H_{m-1,1}}{(m-1)^{4}}
+\frac{H_{n,2}^{2}}{(n+m)}
+\frac{1}{(m-1)^{4}}\sum\limits_{j=n+1}^{n+m-1}\frac{1}{j}.
\end{array}
$$
\end{proof}

\section{Example special harmonic and quadratic harmonic sums}
Here we prove our main results, certain harmonic and quadratic harmonic sums in closed form.
We were unable to locate these in the literature, nor could we obtain them with 
\proglang{Mathematica}. In the Appendix 
we further present
a number of sums (that we could not get \proglang{Mathematica 12.0.0} to handle, and in some
cases we could not the limit using the software), which
are presented in terms of harmonic numbers.
These are sums of the form

$$
G_{n,p,q}^{1,2,m}, V_{n,p,1}^{1,2,m}
~\mathrm{and}~\sum\limits_{j=1}^{n}\frac{H_{j+p,1}H_{j+p,2}}{j}
$$
for $p,q\in\{0,1,2\}$, and $m=1,2$.
Here we derive Lemma \ref{lemSumHi2sqisq}, whose sum
is terms of $G_{n,4,0}^{0,0,2}$ 
but whose proof is more involved,
and it is also a building block for other sums
in the Appendix 
(Lemmata \ref{lemSumHi2sqimapp}--\ref{lemSumHi2sqip1sqip2sq}). 
Furthermore, in the Appendix, 
we consider 
$G_{n,2,0}^{0,0,1}$, $G_{n,3,0}^{0,0,1}$, $G_{n,4,0}^{0,0,1}$, $V_{n,2,0}^{0,0,1}$, and $V_{n,4,0}^{0,0,2}$
(Lemma \ref{lemSumHiip1sq} and most following it).

\begin{lemma}\label{lemSumHi2i1sqi2}

\be\label{eqSumHi2i1sqi2}
\begin{array}{rcl}
G_{n,2,1}^{1,2,2}=
\sum\limits_{j=1}^{n}\frac{H_{j,2}}{(j+1)^{2}(j+2)} & = &
\frac{1}{2}H_{n,2}^{2}- H_{n,2}-\frac{1}{2}H_{n,4}  + 1 +\frac{H_{n,2}}{n+2} - \frac{1}{n+1} +\frac{H_{n,2}}{(n+1)^{2}}
\\  & \xrightarrow{n\to \infty} &
\frac{\pi^{4}}{120}-\frac{\pi^{2}}{6}+1  .
\end{array}
\ee
\end{lemma}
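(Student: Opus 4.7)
The plan is to apply partial fractions to $\frac{1}{(j+1)^{2}(j+2)}$ and reduce the sum to three pieces, each of which either is tractable directly or reduces to a sum already handled in the paper. Specifically, the decomposition
$$\frac{1}{(j+1)^{2}(j+2)} = \frac{1}{(j+1)^{2}} - \frac{1}{j+1} + \frac{1}{j+2}$$
gives
$$G_{n,2,1}^{1,2,2} = \sum_{j=1}^{n}\frac{H_{j,2}}{(j+1)^{2}} - G_{n,1,0}^{1,0,2} + G_{n,1,0}^{2,0,2}.$$
For the first piece I would shift the index via $H_{j,2}=H_{j+1,2}-(j+1)^{-2}$ to reach $\sum_{j=1}^{n+1}H_{j,2}/j^{2} - H_{n+1,4}$, and then use the symmetric double--sum identity $\sum_{j=1}^{m} H_{j,2}/j^{2} = \tfrac{1}{2}(H_{m,2}^{2}+H_{m,4})$, which follows from $H_{m,2}^{2} = 2\sum_{i<j}1/(i^{2}j^{2}) + H_{m,4}$ and is recorded in the Appendix. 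This collapses the first piece to $\tfrac{1}{2}(H_{n+1,2}^{2}-H_{n+1,4})$.

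For the remaining two pieces, $G_{n,1,0}^{1,0,2}$ is exactly Eq.~\ref{eqSumHj2j1} and $G_{n,1,0}^{2,0,2}$ is obtained by running the recursion of Lemma~\ref{lemSumHj2jm} once at $m=2$. The crucial observation is that both expressions carry the intractable building block $H_{n+1,1}H_{n+1,2}-\sum_{j=1}^{n+1}H_{j,1}/j^{2}$, and it cancels in the difference
$$G_{n,1,0}^{2,0,2}-G_{n,1,0}^{1,0,2} = 1 - H_{n+1,2} + \frac{H_{n+1,2}}{n+2} - \frac{1}{n+2},$$
which is fully elementary. This cancellation is what makes the closed form possible.

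Finally I would combine the three pieces and rewrite each $H_{n+1,k}$ as $H_{n,k}+(n+1)^{-k}$. The residual rational fragments collapse via $-1/(n+1)^{2}+1/((n+1)^{2}(n+2))-1/(n+2) = -1/(n+1)$, yielding precisely the stated formula. The limit $\pi^{4}/120-\pi^{2}/6+1$ then follows from $H_{n,2}\to\pi^{2}/6$, $H_{n,4}\to\pi^{4}/90$ together with $\tfrac{1}{2}(\pi^{2}/6)^{2}-\tfrac{1}{2}\pi^{4}/90 = \pi^{4}/120$ and the obvious vanishing of the $1/(n+1)$, $1/(n+2)$ tails. The only mild nuisance is the bookkeeping of the rational residuals when merging the three pieces; the heart of the argument is simply the partial--fraction split together with the cancellation of the intractable $\sum H_{j,1}/j^{2}$ term.
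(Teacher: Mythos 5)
Your proof is correct and follows essentially the same route as the paper: the paper writes $\frac{1}{(j+1)^{2}(j+2)}=\frac{1}{(j+1)^{2}}-\frac{1}{(j+1)(j+2)}$ and cites Lemmata \ref{lemSumHi2i1sq} and \ref{lemSumHi2i1i2}, while you split the second term one step further into $\sum H_{j,2}/(j+1)-\sum H_{j,2}/(j+2)$ and observe the cancellation of the $H_{n+1,1}H_{n+1,2}-\sum_{j\le n+1}H_{j,1}/j^{2}$ blocks — which is exactly the content of the paper's proof of Lemma \ref{lemSumHi2i1i2}. All of your intermediate identities (including the residual collapse to $-1/(n+1)$) check out.
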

\begin{proof}

$$
\begin{array}{l}
\sum\limits_{j=1}^{n}\frac{H_{j,2}}{(j+1)^{2}(j+2)} = 
\sum\limits_{j=1}^{n}\frac{H_{j,2}}{(j+1)}\left(\frac{1}{j+1}-\frac{1}{j+2} \right)
=
\sum\limits_{j=1}^{n}\frac{H_{j,2}}{(j+1)^{2}}-\sum\limits_{j=1}^{n}\frac{H_{j,2}}{(j+1)(j+2)}
\\ \stackrel{Lemmata~\ref{lemSumHi2i1i2}, \ref{lemSumHi2i1sq}}{=}
\frac{1}{2}\left(H_{n+1,2}^{2}-H_{n+1,4}\right)- \frac{n+1}{n+2}\left(H_{n+1,2}-1\right)
\\ =
\frac{1}{2}H_{n+1,2}^{2}-\frac{1}{2}H_{n+1,4}
- H_{n+1,2} + 1 +\frac{H_{n+1,2}}{n+2} - \frac{1}{n+2}
 \xrightarrow{n\to \infty}
\frac{\pi^{4}}{120}-\frac{\pi^{2}}{6}+1 \approx 0.167. 
\end{array}
$$
\end{proof}

\begin{lemma}\label{lemSumHi2i1i2sq}

\be\label{eqSumHi2i1i2sq}
\begin{array}{rcl}
G_{n,1,2}^{1,2,2}&=&
\sum\limits_{j=1}^{n}\frac{H_{j,2}}{(j+1)(j+2)^{2}}  = 
3H_{n,2}-\frac{1}{2}H_{n,2}^{2}+\frac{1}{2}H_{n,4}-4
-\frac{H_{n,2}}{n+2}+\frac{3}{n+1}
\\&&-\frac{H_{n,2}}{(n+1)^{2}}-\frac{H_{n,2}}{(n+2)^{2}}+\frac{1}{(n+1)^{2}}
\xrightarrow{n\to \infty} 
 \frac{\pi^{2}}{2}-\frac{\pi^{4}}{120}-4.
\end{array}
\ee
\end{lemma}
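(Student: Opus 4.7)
The plan is to mirror the proof of Lemma \ref{lemSumHi2i1sqi2} by exploiting a partial-fraction decomposition. I would write
$$
\frac{1}{(j+1)(j+2)^{2}} = \frac{1}{j+2}\left(\frac{1}{j+1}-\frac{1}{j+2}\right) = \frac{1}{(j+1)(j+2)} - \frac{1}{(j+2)^{2}},
$$
which splits the target sum as
$$
G_{n,1,2}^{1,2,2} = \sum_{j=1}^{n}\frac{H_{j,2}}{(j+1)(j+2)} - \sum_{j=1}^{n}\frac{H_{j,2}}{(j+2)^{2}}.
$$
The first piece is supplied directly by Lemma \ref{lemSumHi2i1i2} (already used in the proof of Lemma \ref{lemSumHi2i1sqi2}), giving $\tfrac{n+1}{n+2}(H_{n+1,2}-1)$.

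The real work is the second piece, call it $S$. I would evaluate it by the telescoping substitution $H_{j,2} = H_{j+2,2} - (j+1)^{-2} - (j+2)^{-2}$ followed by an index shift $k=j+2$:
$$
S = \sum_{k=3}^{n+2}\frac{H_{k,2}}{k^{2}} - \sum_{j=1}^{n}\frac{1}{(j+1)^{2}(j+2)^{2}} - \sum_{j=1}^{n}\frac{1}{(j+2)^{4}}.
$$
The first sum reduces via the standard identity $\sum_{k=1}^{N}H_{k,2}/k^{2}=\tfrac{1}{2}(H_{N,2}^{2}+H_{N,4})$ (the same one cited in the proof of Lemma \ref{lemSumHj1sqjmsq}), after peeling off the $k=1,2$ terms $1+5/16=21/16$. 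The second sum is finite via $[(j+1)(j+2)]^{-2}=(j+1)^{-2}-2(j+1)^{-1}(j+2)^{-1}+(j+2)^{-2}$, each piece telescoping or giving $H_{n+1,2},H_{n+2,2}$ with known truncations. The third is a trivial shift of $H_{n+2,4}$.

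Finally, I would reassemble the pieces and convert $H_{n+1,2},H_{n+2,2},H_{n+2,4}$ and $H_{n+2,2}^{2}$ back to $H_{n,2},H_{n,4}$ plus explicit corrections $1/(n+1)^{2}$, $1/(n+2)^{2}$, $1/(n+1)^{4}$, etc. The coefficient of $H_{n,2}^{2}$ comes out to $-\tfrac{1}{2}$ (producing the cross-terms $-H_{n,2}/(n+1)^{2} - H_{n,2}/(n+2)^{2}$ automatically), the coefficient of $H_{n,2}$ becomes $3$, and the coefficient of $H_{n,4}$ becomes $\tfrac{1}{2}$. The main obstacle will be verifying that the accumulated constant-plus-rational residue collapses to the claimed $-4 + 3/(n+1) + 1/(n+1)^{2} - H_{n,2}/(n+2)$; this requires careful bookkeeping of terms like $3/(n+2)-(n+1)^{-2}(n+2)^{-1}-(n+1)^{-2}(n+2)^{-2}$ and is best confirmed by clearing the common denominator $(n+1)^{2}(n+2)^{2}$. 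The asymptotic limit then drops out: $H_{n,2}\to\pi^{2}/6$ and $H_{n,4}\to\pi^{4}/90$ yield $\pi^{2}/2 - \pi^{4}/72 + \pi^{4}/180 - 4 = \pi^{2}/2 - \pi^{4}/120 - 4$.
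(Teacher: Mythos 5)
Your proposal is correct and follows essentially the same route as the paper: the identical partial--fraction split into $\sum_{j}H_{j,2}/((j+1)(j+2))$ (handled by Lemma~\ref{lemSumHi2i1i2}) minus $\sum_{j}H_{j,2}/(j+2)^{2}$, with the same closing bookkeeping and limit. The only difference is that for the second piece the paper simply cites Lemma~\ref{lemSumHi2i2sq}, whereas you re-derive it inline via the shift $H_{j,2}=H_{j+2,2}-(j+1)^{-2}-(j+2)^{-2}$ together with Lemma~\ref{lemSumHi2isq}; your intermediate values (the peeled constant $21/16$, the coefficients $3$, $-\tfrac12$, $\tfrac12$, and the limit $\pi^{2}/2-\pi^{4}/120-4$) all check out.
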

\begin{proof}

$$
\begin{array}{l}
\sum\limits_{j=1}^{n}\frac{H_{j,2}}{(j+1)(j+2)^{2}}
=
\sum\limits_{j=1}^{n}\frac{H_{j,2}}{j+2}\left(\frac{1}{j+1}-\frac{1}{j+2}\right)
=
\sum\limits_{j=1}^{n}\frac{H_{j,2}}{(j+1)(j+2)}
-\sum\limits_{j=1}^{n}\frac{H_{j,2}}{(j+2)^{2}}
\\ \stackrel{Lemmata~\ref{lemSumHi2i1i2},\ref{lemSumHi2i2sq}}{=}
\frac{n+1}{n+2}\left(H_{n+1,2}-1\right)
-
\frac{1}{2}\left(H_{n+2,2}^{2}-H_{n+2,4}\right)+2H_{n+2,2} - 3 + \frac{2n+3}{(n+2)^{2}}
\\
=
H_{n+1,2}-1-\frac{H_{n+1,2}}{n+2}+\frac{1}{n+2}
-\frac{1}{2}H_{n+2,2}^{2}
+\frac{1}{2}H_{n+2,4}
+2H_{n+2,2} - 3 + \frac{2n+3}{(n+2)^{2}}
\\=
3H_{n+1,2}+\frac{2}{(n+2)^{2}}
-\frac{1}{2}H_{n+2,2}^{2}
+\frac{1}{2}H_{n+2,4}
-4
-\frac{H_{n+1,2}}{n+2}
+\frac{1}{n+2}
+ \frac{2n+3}{(n+2)^{2}}
\\
=
3H_{n+1,2}
-\frac{1}{2}H_{n+2,2}^{2}
+\frac{1}{2}H_{n+2,4}
-4
-\frac{H_{n+1,2}}{n+2}
+\frac{1}{n+2}
+ \frac{2n+5}{(n+2)^{2}}
\\ \xrightarrow{n\to \infty}
 \frac{\pi^{2}}{2}-\frac{\pi^{4}}{120}-4 \approx 0.123.
\end{array}
$$
\end{proof}

\begin{lemma}\label{lemSumHi2i1sqi2sq}

\be\label{eqSumHi2i1sqi2sq}
\begin{array}{rcl}
G_{n,2,2}^{1,2,2}&=&
\sum\limits_{j=1}^{n}\frac{H_{j,2}}{(j+1)^{2}(j+2)^{2}}  = 
H_{n,2}^{2} -4H_{n,2}-H_{n,4}  +5 + \frac{2H_{n,2}}{n+2}
\\ &&+ \frac{2H_{n,2}}{(n+1)^{2}}+ \frac{H_{n,2}}{(n+2)^{2}}-\frac{4}{n+1}-\frac{1}{(n+1)^{2}}
\xrightarrow{n\to \infty} 
\frac{\pi^{4}}{60}-\frac{2\pi^{2}}{3}+5.
\end{array}
\ee
\end{lemma}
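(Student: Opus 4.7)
The plan is to reduce $G_{n,2,2}^{1,2,2}$ to the two sums already established in Lemmata \ref{lemSumHi2i1sqi2} and \ref{lemSumHi2i1i2sq}. The key observation is the algebraic identity
$$
\frac{1}{(j+1)^{2}(j+2)^{2}}
=\frac{1}{(j+1)(j+2)}\cdot\left(\frac{1}{j+1}-\frac{1}{j+2}\right)
=\frac{1}{(j+1)^{2}(j+2)}-\frac{1}{(j+1)(j+2)^{2}},
$$
which is an immediate consequence of the telescoping $\frac{1}{j+1}-\frac{1}{j+2}=\frac{1}{(j+1)(j+2)}$. Multiplying through by $H_{j,2}$ and summing from $j=1$ to $n$ gives the identity
$$
G_{n,2,2}^{1,2,2}=G_{n,2,1}^{1,2,2}-G_{n,1,2}^{1,2,2}.
$$

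From here the proof reduces to substituting the closed forms previously obtained. First I would write out $G_{n,2,1}^{1,2,2}$ from Lemma \ref{lemSumHi2i1sqi2} and subtract $G_{n,1,2}^{1,2,2}$ from Lemma \ref{lemSumHi2i1i2sq}. The $\frac{1}{2}H_{n,2}^{2}$ terms add to $H_{n,2}^{2}$, the $\pm\frac{1}{2}H_{n,4}$ terms add to $-H_{n,4}$, the constant terms $1-(-4)=5$, the $H_{n,2}$ terms give $-4H_{n,2}$, the $\frac{H_{n,2}}{n+2}$ terms double, the $\frac{H_{n,2}}{(n+1)^{2}}$ terms double, and we pick up the remaining $\frac{H_{n,2}}{(n+2)^{2}}$, $-\frac{4}{n+1}$, and $-\frac{1}{(n+1)^{2}}$. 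This yields the claimed expression. The asymptotic $n\to\infty$ follows by using $H_{n,2}\to\pi^{2}/6$ and $H_{n,4}\to\pi^{4}/90$, so that the leading terms collapse to $(\pi^{2}/6)^{2}-4\pi^{2}/6-\pi^{4}/90+5=\pi^{4}/60-2\pi^{2}/3+5$.

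There is no genuine obstacle: the only task is careful bookkeeping of the many boundary terms of the shape $H_{n,2}/(n+1)^{2}$, $H_{n,2}/(n+2)^{2}$, $1/(n+1)$, and $1/(n+1)^{2}$, since the two input lemmata already carry such tails. The proof is essentially a one-line partial-fraction reduction followed by a routine algebraic simplification, which is why this lemma sits naturally in a group with the preceding two.
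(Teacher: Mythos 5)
Your proof is correct, and the arithmetic checks out exactly: summing the closed forms of Lemmata \ref{lemSumHi2i1sqi2} and \ref{lemSumHi2i1i2sq} with the sign pattern $G_{n,2,2}^{1,2,2}=G_{n,2,1}^{1,2,2}-G_{n,1,2}^{1,2,2}$ reproduces every term of Eq.~\eqref{eqSumHi2i1sqi2sq}, and the limit computation $(\pi^{2}/6)^{2}-\pi^{4}/90=\pi^{4}/60$ is right. However, your route differs from the paper's. The paper writes $\frac{1}{(j+1)^{2}(j+2)^{2}}=\left(\frac{1}{j+1}-\frac{1}{j+2}\right)^{2}$ and expands the square, obtaining the three sums $G_{n,2,0}^{1,0,2}-2G_{n,1,1}^{1,2,2}+G_{n,2,0}^{2,0,2}$, which it then evaluates via the appendix Lemmata \ref{lemSumHi2i1sq}, \ref{lemSumHi2i1i2}, and \ref{lemSumHi2i2sq}; the resulting simplification is longer because $G_{n,2,0}^{2,0,2}$ carries shifted harmonic numbers $H_{n+2,\cdot}$ that must be unwound. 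Your decomposition $\frac{1}{(j+1)^{2}(j+2)^{2}}=\frac{1}{(j+1)^{2}(j+2)}-\frac{1}{(j+1)(j+2)^{2}}$ instead reuses the two immediately preceding lemmata of the same section, so the bookkeeping reduces to a term-by-term subtraction of two expressions already written in the same normal form (all in $H_{n,\cdot}$ and explicit tails). This buys a shorter and less error-prone computation at the cost of depending on Lemmata \ref{lemSumHi2i1sqi2} and \ref{lemSumHi2i1i2sq} rather than on the more atomic appendix results; since those two lemmata are themselves proved from the same three appendix lemmata the paper invokes here, the two arguments are logically equivalent but yours is the more economical presentation.
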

\begin{proof}

$$
\begin{array}{l}
\sum\limits_{j=1}^{n}\frac{H_{j,2}}{(j+1)^{2}(j+2)^{2}} =
\sum\limits_{j=1}^{n}H_{j,2}\left(\frac{1}{j+1}-\frac{1}{j+2}\right)^{2} =
\sum\limits_{j=1}^{n}\frac{H_{j,2}}{(j+1)^{2}}
-2\sum\limits_{j=1}^{n}\frac{H_{j,2}}{(j+1)(j+2)}
+\sum\limits_{j=1}^{n}\frac{H_{j,2}}{(j+2)^{2}}
\\ \stackrel{Lemmata~\ref{lemSumHi2i1i2},\ref{lemSumHi2i1sq},\ref{lemSumHi2i2sq}}{=}
 \frac{1}{2}\left(H_{n+1,2}^{2}-H_{n+1,4}\right)
-2\frac{n+1}{n+2}\left(H_{n+1,2}-1\right)
\\ +\frac{1}{2}\left(H_{n+2,2}^{2}-H_{n+2,4}\right)-2H_{n+2,2} 
 + 3 - \frac{2n+3}{(n+2)^{2}}
\\ =
\frac{1}{2}\left(H_{n+1,2}^{2}+H_{n+2,2}^{2}-H_{n+1,4}-H_{n+2,4}\right)
-2H_{n+1,2}-2H_{n+2,2}+ 5
\\ +\frac{2H_{n+1,2}}{n+2}-\frac{2}{n+2}- \frac{2n+3}{(n+2)^{2}}
\\ =
\frac{1}{2}\left(2H_{n+1,2}^{2}+\frac{2H_{n+1,2}}{(n+2)^{2}}+\frac{1}{(n+2)^{4}}-2H_{n+1,4}-\frac{1}{(n+2)^{4}}\right)
-4H_{n+1,2}-\frac{2}{(n+2)^{2}}
\\ +5
+\frac{2H_{n+1,2}}{n+2}-\frac{2}{n+2}- \frac{2n+3}{(n+2)^{2}}
\\ =H_{n+1,2}^{2}-H_{n+1,4} -4H_{n+1,2}  +5
+ \frac{H_{n+1,2}}{(n+2)^{2}} +\frac{2H_{n+1,2}}{n+2}-\frac{2}{n+2}- \frac{2n+5}{(n+2)^{2}}
\\ \xrightarrow{n\to \infty} 
\frac{\pi^{4}}{36}-\frac{\pi^{4}}{90}-\frac{2\pi^{2}}{3}+5 
=\frac{\pi^{4}}{60}-\frac{2\pi^{2}}{3}+5 \approx 0.044.
\\
\end{array}
$$
\end{proof}

\begin{lemma}[\citepos{KBarSSag2015bart} p. $74$]\label{lemSumHisqi1i2}

\be\label{eqSumHisqi1i2}
V_{n,1,1}^{1,2,1}=
\sum\limits_{j=1}^{n}\frac{H_{j,1}^{2}}{(j+1)(j+2)} = 
H_{n,2} +1 - \frac{2H_{n,1}}{n+1}-\frac{1}{n+1} -\frac{H_{n,1}^{2}}{n+2} 
\xrightarrow{n\to \infty} \frac{\pi^{2}}{6} +1
\ee
\end{lemma}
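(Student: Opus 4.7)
The plan is to combine a single Abel-style shift with two standard partial-fraction evaluations. Using $\frac{1}{(j+1)(j+2)}=\frac{1}{j+1}-\frac{1}{j+2}$ I would first rewrite
$$
V_{n,1,1}^{1,2,1}=\sum_{j=1}^{n}\frac{H_{j,1}^{2}}{j+1}-\sum_{j=1}^{n}\frac{H_{j,1}^{2}}{j+2},
$$
and then in the second sum re-index $k=j+1$ and expand $H_{k-1,1}^{2}=H_{k,1}^{2}-2H_{k,1}/k+1/k^{2}$. The point of this step is that the two quadratic pieces $\sum_{j=1}^{n}H_{j,1}^{2}/(j+1)$ and $\sum_{k=2}^{n+1}H_{k,1}^{2}/(k+1)$ telescope against each other, leaving only the boundary contribution $\tfrac{1}{2}-H_{n+1,1}^{2}/(n+2)$ together with the two much simpler residual sums $2\sum_{k=2}^{n+1}H_{k,1}/(k(k+1))$ and $-\sum_{k=2}^{n+1}1/(k^{2}(k+1))$.

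For the linear residual I would apply the same Abel trick, using $\Delta(H_{j,1}/(j+1))=-H_{j,1}/((j+1)(j+2))+1/((j+1)(j+2))$, which reduces the sum to $-H_{n+1,1}/(n+2)+\tfrac12$ plus the trivial telescoping $\sum_{j=1}^{n}1/((j+1)(j+2))=1/2-1/(n+2)$. For the purely rational residual I would split by partial fractions $\frac{1}{k^{2}(k+1)}=\frac{1}{k^{2}}-\frac{1}{k}+\frac{1}{k+1}$, after which the $1/k$ and $1/(k+1)$ parts telescope and the $1/k^{2}$ part produces a piece of $H_{n+1,2}$. This gives an explicit closed form for both residuals in terms of $H_{n+1,1}$, $H_{n+1,2}$ and simple rational tails in $n+1, n+2$.

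Substituting both evaluations into the telescoped expression yields a preliminary formula in the variables $H_{n+1,1}$ and $H_{n+1,2}$; the final step is to convert back to $H_{n,1}$ and $H_{n,2}$ via $H_{n+1,1}=H_{n,1}+1/(n+1)$ and $H_{n+1,2}=H_{n,2}+1/(n+1)^{2}$, expand $H_{n+1,1}^{2}/(n+2)$, and collect the terms in $H_{n,1}^{2}$, $H_{n,1}$, $H_{n,2}$, and pure rationals. I expect the main obstacle to be purely algebraic bookkeeping: the stray $H_{n,1}/((n+1)(n+2))$ and $H_{n,1}/(n+2)$ contributions must combine via $\frac{1}{(n+1)(n+2)}+\frac{1}{n+2}=\frac{1}{n+1}$ to produce the clean coefficient $-2H_{n,1}/(n+1)$, and a similar collapse should reduce the leftover rational tail to $-1/(n+1)$. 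Once that simplification is verified the stated form $H_{n,2}+1-\tfrac{2H_{n,1}}{n+1}-\tfrac{1}{n+1}-\tfrac{H_{n,1}^{2}}{n+2}$ follows, and the limit $\pi^{2}/6+1$ is immediate from $H_{n,2}\to\zeta(2)$, $H_{n,1}/n\to 0$ and $H_{n,1}^{2}/n\to 0$.
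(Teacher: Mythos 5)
Your proposal is correct; I checked the computation through to the end. The partial-fraction split gives $\sum_{j=1}^{n}H_{j,1}^{2}/(j+1)-\sum_{j=1}^{n}H_{j,1}^{2}/(j+2)$, the shift $k=j+1$ with $H_{k-1,1}^{2}=H_{k,1}^{2}-2H_{k,1}/k+1/k^{2}$ indeed collapses the quadratic pieces to the boundary term $\tfrac12-H_{n+1,1}^{2}/(n+2)$, and the two residuals evaluate to $2H_{n+1,2}-2H_{n+1,1}/(n+2)-1$ and $H_{n+1,2}-\tfrac32+1/(n+2)$ respectively, yielding $H_{n+1,2}+1-H_{n+1,1}^{2}/(n+2)-2H_{n+1,1}/(n+2)-1/(n+2)$, which re-expands to the stated formula exactly as you predict (the two collapses $\frac{1}{(n+1)(n+2)}+\frac{1}{n+2}=\frac{1}{n+1}$ both occur). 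This is, however, a genuinely different route from the paper's. The paper expands $H_{j,1}^{2}=H_{j,2}+2\sum_{i_{2}<i_{1}}i_{2}^{-1}i_{1}^{-1}$, arranges matters so that the target sum reappears on the right with coefficient $2$, moves it to the left (a bootstrapping step), and then swaps the order of summation to reduce to the previously established Lemmata \ref{lemSumHii1}, \ref{lemSumHiii1} and \ref{lemSumHi2i1i2}; in particular it needs the closed form of $\sum H_{j,2}/((j+1)(j+2))$ as an input. Your telescoping/Abel argument avoids both the bootstrap and the quadratic-harmonic prerequisite: the only nontrivial ingredient is $\sum_{k\le m}H_{k,1}/(k(k+1))=H_{m,2}-H_{m,1}/(m+1)$ (the paper's Lemma \ref{lemSumHiii1}, which you reprove by the same summation-by-parts device), the rest being elementary rational telescopes. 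That makes your proof shorter and more self-contained, at the cost of not exhibiting the connection to the $H_{j,2}$ sums that the paper exploits systematically elsewhere.
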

\begin{proof}

$$
\begin{array}{l}
\sum\limits_{j=1}^{n}\frac{H_{j,1}^{2}}{(j+1)(j+2)} 
= \sum\limits_{j=1}^{n}\frac{H_{j,2}+2\sum\limits_{i_{2}=1}^{j-1}i_{2}^{-1}\sum\limits_{i_{1}=i_{2}+1}^{j}i_{1}^{-1}}{(j+1)(j+2)} 
=\sum\limits_{j=1}^{n}\frac{H_{j,2}+2\left(H_{j,1}(H_{j,1}-j^{-1})-\sum\limits_{i_{2}=1}^{j}i_{2}^{-1}H_{i_{2},1}+j^{-1}H_{j,1}\right)}{(j+1)(j+2)} 
\\=\sum\limits_{j=1}^{n}\frac{H_{j,2}}{(j+1)(j+2)}
+2\sum\limits_{j=1}^{n}\frac{H_{j,1}^{2}}{(j+1)(j+2)}
-2\sum\limits_{j=1}^{n}\frac{\sum\limits_{i_{2}=1}^{j}i_{2}^{-1}H_{i_{2},1}}{(j+1)(j+2)}.
\end{array}
$$
Taking over to the left hand-side we obtain that

$$
\begin{array}{l}
\sum\limits_{j=1}^{n}\frac{H_{j,1}^{2}}{(j+1)(j+2)} 
=
2\sum\limits_{j=1}^{n}\frac{\sum\limits_{i_{2}=1}^{j}i_{2}^{-1}H_{i_{2},1}}{(j+1)(j+2)}
-\sum\limits_{j=1}^{n}\frac{H_{j,2}}{(j+1)(j+2)}.
\end{array}
$$
We will use Lemma \ref{lemSum1} to simplify

$$
\begin{array}{l}
\sum\limits_{j=1}^{n}\frac{\sum\limits_{i=1}^{j}i^{-1}H_{i,1}}{(j+1)(j+2)} 
= \sum\limits_{i=1}^{n}i^{-1}H_{i,1}\sum\limits_{j=i}^{n}\frac{1}{(j+1)(j+2)} 
=\sum\limits_{i=1}^{n}\frac{H_{i,1}}{i(i+1)}
- \frac{1}{n+2}\sum\limits_{i=1}^{n}i^{-1}H_{i,1}
\end{array}
$$
obtaining

$$
\begin{array}{l}
\sum\limits_{j=1}^{n}\frac{H_{j,1}^{2}}{(j+1)(j+2)} 
=
2\sum\limits_{j=1}^{n}\frac{H_{j,1}}{j(j+1)}
- 2\frac{1}{n+2}\sum\limits_{j=1}^{n}\frac{H_{j,1}}{j}
-\sum\limits_{j=1}^{n}\frac{H_{j,2}}{(j+1)(j+2)}.
\end{array}
$$
We input formul\ae\  obtained in Lemmata 
\ref{lemSumHii1}, \ref{lemSumHiii1},  and \ref{lemSumHi2i1i2}
to obtain

$$
\begin{array}{l}
\sum\limits_{j=1}^{n}\frac{H_{j,1}^{2}}{(j+1)(j+2)} 
=
2\sum\limits_{j=1}^{n}\frac{H_{j,1}}{j(j+1)}
- 2\frac{1}{n+2}\sum\limits_{j=1}^{n}\frac{H_{j,1}}{j}
-\sum\limits_{j=1}^{n}\frac{H_{j,2}}{(j+1)(j+2)}
\\ =
2\left( H_{n+1,2} - \frac{H_{n+1,1}}{n+1} \right)
- 2\frac{1}{n+2}\frac{1}{2}\left(H_{n,1}^{2} +H_{n,2}\right)
-\frac{n+1}{n+2}\left(H_{n+1,2}-1\right) 
\\=
H_{n+1,2}
+\frac{n^{2}+n-1}{(n+1)^{2}}
- 2\frac{H_{n,1}}{n+1} -\frac{H_{n,1}^{2}}{n+2} 
\xrightarrow{n\to \infty} \frac{\pi^{2}}{6} +1 \approx 2.645.
\end{array}
$$
\end{proof}

\begin{remark}
\citet{ASofMHas2012} considered (their Corollary $2$) a similar infinite sum

$$
\sum\limits_{j=1}^{\infty}\frac{H_{j,1}^{2}}{j(j+1)} = 3\zeta(3),
$$
\citet{CXuMZhaWZhu2016}, in their Eq. $(2.39)$, considered 
$\sum_{j=1}^{\infty}H_{j,1}^{2}/(j(j+k))$, 
\citet{KCheYChe2020} in their Thm. $3.1$ consider the limit for a more general series,
and
\citet{XWanWChu2018}, in their Corollary $3$, present the 
limit of Eq. \eqref{eqSumHisqi1i2}. 
The closed form of the sum is presented, without proof, on \citepos{KBarSSag2015bart} p. $74$.
\end{remark}

\begin{lemma}[see also \citepos{KCheYChe2020} Thm. $3.1$ for the limit]\label{lemSumHi2sqi1i2}

\be\label{eqSumHi2sqi1i2}
\begin{array}{rcl}
V_{n,1,1}^{1,2,2}=
\sum\limits_{j=1}^{n}\frac{H_{j,2}^{2}}{(j+1)(j+2)} & = &
H_{n+1,2}^{2}  -H_{n+1,2}   -H_{n+1,3} 
+1 
\\ && -\frac{H_{n+2,2}^{2}}{n+2}+\frac{2H_{n+1,2}}{n+2} 
- \frac{1}{n+2} 
 +2\frac{H_{n+2,2}}{(n+2)^{3}}-\frac{1}{(n+2)^{5}}
\\ &  \xrightarrow{n\to \infty} &
\frac{\pi^{4}}{36}-\frac{\pi^{2}}{6}-\zeta(3)+1.
\end{array}	
\ee
\end{lemma}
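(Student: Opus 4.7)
The plan is to start from the partial fraction decomposition $\frac{1}{(j+1)(j+2)} = \frac{1}{j+1} - \frac{1}{j+2}$, which splits the target sum as $V_{n,1,1}^{1,2,2} = V_{n,1,0}^{1,0,2} - V_{n,1,0}^{2,0,2}$. Both pieces are accessible through Lemma~\ref{lemSumHi2sqim}: the first is its initial condition at $m=1$ (Eq.~\eqref{eqSumHi2sqi1}), and the second comes from applying the recursion at $m=2$, in which $V_{n,1,0}^{2,0,2}$ is expressed in terms of $V_{n,1,0}^{1,0,2}$, $G_{n,1,0}^{1,0,2}$, $\sum_{j=1}^{n} H_{j,1}/j^{2}$, and elementary harmonic residues. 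The crucial observation is that on taking the difference the intractable $V_{n,1,0}^{1,0,2}$ cancels entirely, reducing the problem to known or computable quantities.

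Next I substitute the closed form $G_{n,1,0}^{1,0,2} = H_{n+1,1} H_{n+1,2} - \sum_{j=1}^{n+1} H_{j,1}/j^{2}$ from Eq.~\eqref{eqSumHj2j1}. The $-2\sum_{j=1}^{n+1} H_{j,1}/j^{2}$ introduced here combines with the $+2\sum_{j=1}^{n} H_{j,1}/j^{2}$ inherited from the previous step to leave only the isolated residue $-2 H_{n+1,1}/(n+1)^{2}$, so the other unknown sum $\sum H_{j,1}/j^{2}$ drops out as well. Expanding $H_{n+1,1} = H_{n,1}+1/(n+1)$ and $H_{n+1,2} = H_{n,2}+1/(n+1)^{2}$ makes the $H_{n,1}H_{n,2}$ cross terms cancel exactly, and after collecting the remaining pieces the identity collapses to the manifestly finite form
\[
V_{n,1,1}^{1,2,2} = H_{n,2}^{2} - H_{n,2} - H_{n,3} + 1 - \frac{H_{n,2}^{2}}{n+2} + \frac{2 H_{n,2}}{n+1} - \frac{1}{n+1}.
\]

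It remains to recast this in the indexing of the lemma statement. Using $H_{n+2,2} = H_{n+1,2} + 1/(n+2)^{2}$, the three ``padding'' terms $-\frac{H_{n+2,2}^{2}}{n+2} + \frac{2 H_{n+2,2}}{(n+2)^{3}} - \frac{1}{(n+2)^{5}}$ collapse to the single term $-\frac{H_{n+1,2}^{2}}{n+2}$, so the stated right-hand side reduces to $H_{n+1,2}^{2} - H_{n+1,2} - H_{n+1,3} + 1 - \frac{H_{n+1,2}^{2}}{n+2} + \frac{2 H_{n+1,2}}{n+2} - \frac{1}{n+2}$; further expanding each $H_{n+1,m}$ around $H_{n,m}$ and simplifying the resulting rationals by partial fractions recovers the displayed expression above. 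The limit $n\to\infty$ is then immediate and gives $\zeta(2)^{2} - \zeta(2) - \zeta(3) + 1 = \pi^{4}/36 - \pi^{2}/6 - \zeta(3) + 1$ as advertised.

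The main obstacle is pure bookkeeping: tracking the $O(1/n)$ residues that Lemma~\ref{lemSumHi2sqim} produces at $m=2$ (where several coefficients specialise with $m-1=1$) and keeping the two copies of $\sum_{j=1}^{\cdot} H_{j,1}/j^{2}$ aligned on compatible index ranges so that they cancel cleanly into $-2H_{n+1,1}/(n+1)^{2}$. A useful cross-check during the computation is to verify at $n=1$ that both the simplified form and the stated form evaluate to $H_{1,2}^{2}/(2\cdot 3) = 1/6$, and an alternative derivation via the telescoping substitution $b_{j} = -H_{j+1,2}^{2}/(j+2)$ together with Lemma~\ref{lemSumHi2i1sqi2} provides an independent confirmation should the bookkeeping go wrong.
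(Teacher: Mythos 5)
Your proposal is correct, and it reaches the paper's result by a genuinely different route through the lemma network. The paper opens with the same partial-fraction split $V_{n,1,1}^{1,2,2}=V_{n,1,0}^{1,0,2}-V_{n,1,0}^{2,0,2}$, but then substitutes the two \emph{directly derived} closed forms (Lemmata \ref{lemSumHi2sqi1app} and \ref{lemSumHi2sqi2}), each written in terms of the intractable sums $\sum H_{j,2}^{2}/j$ and $\sum H_{j,2}/j^{3}$ over slightly offset ranges; the cancellation then happens at the level of those two sums, leaving exactly the boundary residues $-H_{n+2,2}^{2}/(n+2)+2H_{n+2,2}/(n+2)^{3}-1/(n+2)^{5}$ that appear in the statement. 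You instead apply the recursion of Lemma \ref{lemSumHi2sqim} at $m=2$, so that $V_{n,1,0}^{1,0,2}$ cancels before it ever needs to be expanded, and the burden shifts to $G_{n,1,0}^{1,0,2}$ and $\sum H_{j,1}/j^{2}$, which you correctly observe annihilate each other via Eq.~\eqref{eqSumHj2j1} up to the residue $-2H_{n+1,1}/(n+1)^{2}$; I checked that the cross-term algebra $2H_{n+1,1}H_{n+1,2}-2H_{n,1}H_{n,2}-2H_{n+1,1}/(n+1)^{2}=2H_{n,2}/(n+1)$ and the final identification with the stated form (after collapsing the three padding terms to $-H_{n+1,2}^{2}/(n+2)$) both go through, and your simplified expression is precisely the form in which the appendix records this same lemma (Lemma \ref{lemSumHi2sqi1i2app}). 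Your route buys a shorter computation, since Lemma \ref{lemSumHi2sqi2} itself costs a long direct derivation that you bypass entirely; the paper's route has the advantage of producing the boundary terms in exactly the shape displayed in the statement without further manipulation.
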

\begin{proof}

$$
\begin{array}{l}
\sum\limits_{j=1}^{n}\frac{H_{j,2}^{2}}{(j+1)(j+2)} =
\sum\limits_{j=1}^{n}\frac{H_{j,2}^{2}}{j+1} -\sum\limits_{j=1}^{n}\frac{H_{j,2}^{2}}{j+2} 
\end{array}
$$

$$
\begin{array}{l}
 \stackrel{Lemmata~\ref{lemSumHi2sqi1app}, \ref{lemSumHi2sqi2}}{=}
\sum\limits_{j=1}^{n+1}\frac{H_{j,2}^{2}}{j} 
- 2\sum\limits_{j=1}^{n+1}\frac{H_{j,2}}{j^{3}} 
+H_{n+1,5}
\\ -
\left(
\sum\limits_{j=1}^{n+2}\frac{H_{j,2}^{2}}{j} 
-2\sum\limits_{j=1}^{n+2}\frac{H_{j,2}}{j^{3}}
-H_{n+1,2}^{2}  +H_{n+1,2}   +H_{n+1,3} +H_{n+2,5} 
-1 
\right. \\ \left. 
-\frac{2H_{n+1,2}}{n+2} + \frac{1}{n+2}
\right)
\\ =
\sum\limits_{j=1}^{n+1}\frac{H_{j,2}^{2}}{j} 
- 2\sum\limits_{j=1}^{n+1}\frac{H_{j,2}}{j^{3}} 
+H_{n+1,5}
-\sum\limits_{j=1}^{n+2}\frac{H_{j,2}^{2}}{j} 
+2\sum\limits_{j=1}^{n+2}\frac{H_{j,2}}{j^{3}}
+H_{n+1,2}^{2}  -H_{n+1,2} 
\\   -H_{n+1,3}  -H_{n+2,5} +1 
+\frac{2H_{n+1,2}}{n+2} - \frac{1}{n+2}
\\ =
H_{n+1,2}^{2}  -H_{n+1,2}   -H_{n+1,3} 
+1 
-\frac{H_{n+2,2}^{2}}{n+2}+\frac{2H_{n+1,2}}{n+2} 
- \frac{1}{n+2} 
 +2\frac{H_{n+2,2}}{(n+2)^{3}}-\frac{1}{(n+2)^{5}}
\\ \xrightarrow{n\to \infty}
\frac{\pi^{4}}{36}-\frac{\pi^{2}}{6}-\zeta(3)+1 \approx 0.859.
\end{array}
$$
\end{proof}

\begin{lemma}\label{lemSumHi2sqisq} 

\be\label{eqSumHi2sqisq}
\begin{array}{rcl}
V_{n,2,0}^{0,0,1}=
\sum\limits_{j=1}^{n}\frac{H_{j,2}^{2}}{j^{2}}  
& = &  
\frac{1}{3}H_{n,2}^{3}  
+\sum\limits_{j=1}^{n}\frac{H_{j,2}}{j^{4}} 
- \frac{1}{3}H_{n,6}
\xrightarrow{n\to \infty}  
\zeta^{2}(3)+\frac{19\pi^{6}}{22680}
.
\end{array}
\ee
\end{lemma}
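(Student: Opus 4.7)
The plan is to obtain the identity by a telescoping expansion of $H_{j,2}^{3}$. Because $H_{j,2} = H_{j-1,2} + j^{-2}$, I can write
\[
H_{j,2}^{3} - H_{j-1,2}^{3} = 3\,\frac{H_{j-1,2}^{2}}{j^{2}} + 3\,\frac{H_{j-1,2}}{j^{4}} + \frac{1}{j^{6}}.
\]
Summing from $j=1$ to $n$ (using $H_{0,2}=0$) telescopes the left-hand side to $H_{n,2}^{3}$, giving
\[
H_{n,2}^{3} = 3\sum_{j=1}^{n}\frac{H_{j-1,2}^{2}}{j^{2}} + 3\sum_{j=1}^{n}\frac{H_{j-1,2}}{j^{4}} + H_{n,6}.
\]

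Next I would convert the sums on the right from $H_{j-1,2}$ to $H_{j,2}$ by substituting $H_{j-1,2} = H_{j,2} - j^{-2}$ and $H_{j-1,2}^{2} = H_{j,2}^{2} - 2 H_{j,2} j^{-2} + j^{-4}$. This yields
\[
\sum_{j=1}^{n}\frac{H_{j-1,2}^{2}}{j^{2}} = \sum_{j=1}^{n}\frac{H_{j,2}^{2}}{j^{2}} - 2\sum_{j=1}^{n}\frac{H_{j,2}}{j^{4}} + H_{n,6}, \qquad
\sum_{j=1}^{n}\frac{H_{j-1,2}}{j^{4}} = \sum_{j=1}^{n}\frac{H_{j,2}}{j^{4}} - H_{n,6}.
\]
Substituting these back and solving the resulting linear equation for $\sum_{j=1}^{n} H_{j,2}^{2}/j^{2}$ gives exactly the claimed closed form $\tfrac{1}{3}H_{n,2}^{3} + \sum_{j=1}^{n} H_{j,2}/j^{4} - \tfrac{1}{3}H_{n,6}$.

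For the asymptotic statement I would take $n\to\infty$, using $H_{n,2} \to \zeta(2) = \pi^{2}/6$ (so $H_{n,2}^{3} \to \pi^{6}/216$), $H_{n,6}\to\zeta(6) = \pi^{6}/945$, and the value $G_{\infty,4,0}^{0,0,2} = \zeta(3)^{2} - \pi^{6}/2835$ recorded in Section \ref{secNotation}. The only non-mechanical part is a short rational arithmetic: collecting the $\pi^{6}$ terms gives $\tfrac{1}{648} - \tfrac{1}{2835} - \tfrac{1}{3\cdot 945} = \tfrac{35-16}{22680} = \tfrac{19}{22680}$, producing the stated limit $\zeta(3)^{2} + 19\pi^{6}/22680$. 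I don't anticipate any genuine obstacle here; the whole proof reduces to a single telescoping identity followed by bookkeeping, and the $\sum H_{j,2}/j^{4}$ term conveniently remains intact in the finite statement because no closed form for its partial sum is available.
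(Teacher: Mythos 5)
Your proof is correct, and it takes a genuinely cleaner route than the paper's. The paper invokes Abel's summation by parts with $A_{j}=H_{j,2}$, $B_{j}=H_{j,2}^{2}$, which forces it to work through the shifted sum $\sum_{j=1}^{n}H_{j,2}^{2}/(j+1)^{2}$, expand $\left(H_{j+1,2}-\tfrac{1}{(j+1)^{2}}\right)^{2}$, and track several boundary terms before arriving at the same linear relation $3\sum_{j=1}^{n}H_{j,2}^{2}/j^{2}=H_{n,2}^{3}+3\sum_{j=1}^{n}H_{j,2}/j^{4}-H_{n,6}$ that you obtain. Your direct telescoping of the cube, $H_{j,2}^{3}-H_{j-1,2}^{3}=3H_{j-1,2}^{2}j^{-2}+3H_{j-1,2}j^{-4}+j^{-6}$, reaches that relation in three lines with no boundary bookkeeping at all (in your substitution the $H_{n,6}$ contributions combine as $3-3+1=1$, which is where the $-\tfrac{1}{3}H_{n,6}$ comes from); this is essentially the symmetry/shuffle identity underlying the paper's own Remark~\ref{remlemSumHi2isqRef}, applied to the pair $\left(H_{j,2}^{2},H_{j,2}\right)$. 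The two arguments are morally the same telescoping idea, but yours avoids the detour through $(j+1)^{2}$ entirely and is easier to verify. Your limit computation also matches the paper's: $\tfrac{1}{648}-\tfrac{2}{2835}=\tfrac{35-16}{22680}=\tfrac{19}{22680}$, using the tabulated value $G_{\infty,4,0}^{0,0,2}=\zeta(3)^{2}-\pi^{6}/2835$.
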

\begin{proof}
We will use Abel's summation by parts, \citep[see, e.g,][]{WChu2007} for two sequences $A_{j}$ 
and $B_{j}$

\be\label{eqAbelSumParts}
\sum\limits_{j=1}^{\infty}B_{j}\bigtriangledown A_{j+1} = B_{n}A_{n+1} -B_{1}A_{1}+\sum\limits_{j=2}^{n}A_{j}\bigtriangledown B_{j},
\ee
where $\bigtriangledown \tau_{j}:=\tau_{j}-\tau_{j-1}$. 
We take $A_{j}:=H_{j,2}$ and
$B_{j}:=H_{j,2}^{2}$. Then, $\bigtriangledown A_{j} = 1/j^{2}$ and 

\be\label{eqtridownBj}
\begin{array}{l}
\bigtriangledown B_{j} = H_{j,2}^{2}-H_{j-1,2}^{2} 
= \left(H_{j-1,2}+\frac{1}{j^{2}}\right)^{2}- H_{j-1,2}^{2}
= H_{j-1,2}^{2}-H_{j-1,2}^{2}-\frac{2H_{j-1,2}}{j^{2}}-\frac{1}{j^{4}} 
\\ = \frac{2H_{j-1,2}}{j^{2}}+\frac{1}{j^{4}}.
\end{array}
\ee
With the above we apply Eq. \eqref{eqAbelSumParts}

$$
\begin{array}{l}
\sum\limits_{j=1}^{n}\frac{H_{j,2}^{2}}{(j+1)^{2}}  
\stackrel{Eq. \eqref{eqtridownBj}}{=} 
H_{n,2}^{2}H_{n+1,2} - 1
-\sum\limits_{j=2}^{n}H_{j,2}\left(\frac{2H_{j-1,2}}{j^{2}}+\frac{1}{j^{4}}\right)
\\ 
\sum\limits_{j=1}^{n}\frac{\left(H_{j+1,2}-\frac{1}{(j+1)^{2}}\right)^{2}}{(j+1)^{2}}  
=
H_{n,2}^{2}H_{n+1,2} - 1
-\sum\limits_{j=2}^{n}\left(\frac{2H^{2}_{j,2}}{j^{2}}-\frac{2}{j^{4}}+\frac{1}{j^{4}}\right)
\\ 
\sum\limits_{j=1}^{n}\frac{H_{j+1,2}^{2}}{(j+1)^{2}}  
-2\sum\limits_{j=1}^{n}\frac{H_{j+1,2}}{(j+1)^{4}}  
+\sum\limits_{j=1}^{n}\frac{1}{(j+1)^{6}}  
=
H_{n,2}^{2}H_{n+1,2} - 1
-2\sum\limits_{j=2}^{n}\frac{H^{2}_{j,2}}{j^{2}}
+\sum\limits_{j=2}^{n}\frac{H_{j,2}}{j^{4}}
\\ 
\sum\limits_{j=1}^{n+1}\frac{H_{j,2}^{2}}{j^{2}}  -1
-2\sum\limits_{j=1}^{n+1}\frac{H_{j,2}}{j^{4}}  
 +2
+\sum\limits_{j=1}^{n+1}\frac{1}{j^{6}}  -1
 =
H_{n,2}^{3}+ \frac{H_{n,2}^{2}}{(n+1)^{2}} - 1
-2\sum\limits_{j=1}^{n}\frac{H^{2}_{j,2}}{j^{2}}
\\ +2
+\sum\limits_{j=1}^{n}\frac{H_{j,2}}{j^{4}} -1
\\
\sum\limits_{j=1}^{n}\frac{H_{j,2}^{2}}{j^{2}}  +\frac{H_{n+1,2}^{2}}{(n+1)^{2}}
=
H_{n,2}^{3}+ \frac{H_{n,2}^{2}}{(n+1)^{2}} 
-2\sum\limits_{j=1}^{n}\frac{H^{2}_{j,2}}{j^{2}}
+\sum\limits_{j=1}^{n}\frac{H_{j,2}}{j^{4}} 
+2\sum\limits_{j=1}^{n+1}\frac{H_{j,2}}{j^{4}}  
-\sum\limits_{j=1}^{n+1}\frac{1}{j^{6}}  
\\
\sum\limits_{j=1}^{n}\frac{H_{j,2}^{2}}{j^{2}}  
=
H_{n,2}^{3}+ \frac{H_{n,2}^{2}}{(n+1)^{2}} -\frac{H_{n+1,2}^{2}}{(n+1)^{2}}
-2\sum\limits_{j=1}^{n}\frac{H^{2}_{j,2}}{j^{2}}
+3\sum\limits_{j=1}^{n}\frac{H_{j,2}}{j^{4}} 
+\frac{2H_{n+1,2}}{(n+1)^{4}} - H_{n+1,6}
.
\end{array}
$$
Taking $2\sum\limits_{j=1}^{n}\frac{H^{2}_{j,2}}{j^{2}}$ to the other side we obtain

$$
\begin{array}{l}
3\sum\limits_{j=1}^{n}\frac{H_{j,2}^{2}}{j^{2}}  
=
H_{n,2}^{3}  -\frac{2H_{n,2}}{(n+1)^{4}}-\frac{1}{(n+1)^{6}}
+3\sum\limits_{j=1}^{n}\frac{H_{j,2}}{j^{4}} 
+\frac{2H_{n,2}}{(n+1)^{4}} +\frac{2}{(n+1)^{6}} 
- H_{n+1,6}
\\
\sum\limits_{j=1}^{n}\frac{H_{j,2}^{2}}{j^{2}}  
=
\frac{1}{3}H_{n,2}^{3}  
+\sum\limits_{j=1}^{n}\frac{H_{j,2}}{j^{4}} 
- \frac{1}{3}H_{n,6}  .
\end{array}
$$
Taking the limit $n\to \infty$ in the above we obtain

$$
\begin{array}{rcl}
\sum\limits_{j=1}^{\infty}\frac{H_{j,2}^{2}}{j^{2}}  
& = &
\frac{1}{3}\left(\frac{\pi^{2}}{6}\right)^{3}  
+ \zeta^{2}(3)-\frac{\pi^{6}}{2835}
- \frac{\pi^{6}}{3\cdot 945}
=
\frac{\pi^{6}}{3\cdot 216}
+ \zeta^{2}(3)-\frac{\pi^{6}}{2835}
- \frac{\pi^{6}}{2835}
\\ & = &
\zeta^{2}(3)+\pi^{6}\left(\frac{1}{648}-\frac{2}{2835} \right)
=
\zeta^{2}(3)+\frac{19\pi^{6}}{22680} \approx 2.250.
\end{array}
$$
\end{proof}

\section*{Software availability}
In \url{https://github.com/krzbar/HarmonicSums}  we provide a \proglang{Mathematica} script 
that contains all the code we used here.

\section*{Acknowledgments}
KB is supported by an ELLIIT Call C grant.

\renewcommand{\theequation}{H.\arabic{equation}}
\renewcommand{\thelemma}{H.\arabic{lemma}}
\renewcommand{\theremark}{H.\arabic{remark}}
\setcounter{equation}{0}
\setcounter{lemma}{0}
\setcounter{theorem}{0}

\newpage
\section*{Appendix: Harmonic sums lemmata}
\begin{lemma}\label{lemSumjmjr}
For $0\le m<r$ we have

\be\label{eqSumjmjr}
\begin{array}{rcl}
\sum\limits_{j=1}^{n}\frac{1}{(j+m)(j+r)} & = & \frac{1}{r-m}\left(H_{r,1}-H_{m,1}-\left(H_{n+r,1}-H_{n+m,1} \right) 
\right)\xlongrightarrow{n\to\infty}\frac{1}{r-m}\left(H_{r,1}-H_{m,1}\right)
.
\end{array}
\ee
\end{lemma}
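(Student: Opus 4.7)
The plan is to prove this lemma by partial fraction decomposition followed by telescoping, which is the standard and most direct route for a product of two distinct linear factors in the denominator.

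First, I would decompose the summand using the identity
\[
\frac{1}{(j+m)(j+r)} = \frac{1}{r-m}\left(\frac{1}{j+m} - \frac{1}{j+r}\right),
\]
which is valid since $r > m$ ensures $r - m \neq 0$. Summing this over $j = 1, \ldots, n$ and pulling out the constant factor $1/(r-m)$ reduces the problem to evaluating the two shifted harmonic-type sums
\[
\sum_{j=1}^{n}\frac{1}{j+m} \quad \text{and} \quad \sum_{j=1}^{n}\frac{1}{j+r}.
\]
Next, I would reindex each of these sums. Setting $k = j+m$ in the first gives $\sum_{k=m+1}^{n+m}\frac{1}{k} = H_{n+m,1} - H_{m,1}$, and analogously the second equals $H_{n+r,1} - H_{r,1}$. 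Substituting back and collecting terms yields
\[
\sum_{j=1}^{n}\frac{1}{(j+m)(j+r)} = \frac{1}{r-m}\left(H_{n+m,1} - H_{m,1} - H_{n+r,1} + H_{r,1}\right),
\]
which after rearrangement matches the claimed finite-$n$ formula.

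Finally, for the asymptotic statement, I would note that $H_{n+r,1} - H_{n+m,1} = \sum_{j=n+m+1}^{n+r}\frac{1}{j}$ is a finite sum of $r-m$ terms, each bounded above by $1/(n+m+1)$, so it tends to $0$ as $n \to \infty$. Passing to the limit in the closed-form expression gives $\frac{1}{r-m}(H_{r,1} - H_{m,1})$, completing the proof. There is no real obstacle here; the only thing to keep track of is the sign conventions in the reindexing, and the hypothesis $0 \le m < r$ guarantees the partial fraction step is well defined.
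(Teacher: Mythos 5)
Your proposal is correct and follows essentially the same route as the paper: partial fraction decomposition of $\frac{1}{(j+m)(j+r)}$ into $\frac{1}{r-m}\left(\frac{1}{j+m}-\frac{1}{j+r}\right)$, reindexing the two shifted sums as differences of harmonic numbers, and observing that $H_{n+r,1}-H_{n+m,1}\to 0$ for the limit. No gaps.
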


\begin{proof}

$$
\begin{array}{l}
\sum\limits_{j=1}^{n}\frac{1}{(j+m)(j+r)} = \frac{1}{r-m}\left(\sum\limits_{j+1}^{n}\frac{1}{j+m}-\sum\limits_{j+1}^{n}\frac{1}{j+r} \right)
=\frac{1}{r-m}\left(H_{r,1}-H_{m,1}-\left(H_{n+r,1}-H_{n+m,1} \right) \right).
\end{array}
$$
\end{proof}

\begin{lemma}\label{lemSum0}
\be\label{eqSum0}
\sum\limits_{j=1}^{n}\frac{1}{j(j+1)} = 1-\frac{1}{n+1} \xrightarrow{n\to \infty} 1
\ee
\end{lemma}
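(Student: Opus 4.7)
The plan is to prove this by a standard telescoping argument via partial fractions, which is essentially the $m=0$, $r=1$ special case of Lemma \ref{lemSumjmjr} just established above (so one could alternatively just invoke that lemma and simplify $H_{1,1}-H_{0,1} - (H_{n+1,1}-H_{n,1}) = 1 - \tfrac{1}{n+1}$, noting the convention $H_{0,1}=0$).

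The first step is to decompose the summand via partial fractions:
$$
\frac{1}{j(j+1)} \;=\; \frac{1}{j} \;-\; \frac{1}{j+1}.
$$
Substituting this into the sum collapses it telescopically, leaving only the first term of the $1/j$ series and the last term of the $1/(j+1)$ series:
$$
\sum_{j=1}^{n}\frac{1}{j(j+1)} \;=\; \sum_{j=1}^{n}\frac{1}{j} \;-\; \sum_{j=1}^{n}\frac{1}{j+1} \;=\; H_{n,1} - (H_{n+1,1}-1) \;=\; 1 - \frac{1}{n+1}.
$$

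Finally, taking $n\to\infty$ in the closed-form expression $1 - 1/(n+1)$ gives the stated limit of $1$. There is no genuine obstacle here; the only subtlety worth flagging is consistency of the index shift when rewriting $\sum_{j=1}^{n}1/(j+1)$ as $H_{n+1,1}-1$, but this is immediate. The lemma is included essentially as a base case / building block used in the derivations above (e.g., it underlies the $\frac{n+1}{n+2}(H_{n+1,2}-1)$ style simplifications seen in Lemmata \ref{lemSumHi2i1sqi2}--\ref{lemSumHi2i1sqi2sq}), so although elementary, it is reported with proof for completeness as the author notes in the introduction.
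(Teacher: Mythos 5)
Your proof is correct and matches the paper's own argument: the same partial-fraction decomposition $\frac{1}{j(j+1)}=\frac{1}{j}-\frac{1}{j+1}$ followed by telescoping. The extra remark about invoking Lemma \ref{lemSumjmjr} is a valid alternative but not needed.
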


\begin{proof}
We show this property 
by transforming it into a telescoping sum,

$$
\sum\limits_{j=1}^{n}\frac{1}{j(j+1)}
=\sum\limits_{j=1}^{n}\left(\frac{1}{j}-\frac{1}{j+1}\right)=1-\frac{1}{n+1}.
$$
\end{proof}

\begin{lemma}\label{lemSum02}
\be\label{eqSum02}
\sum\limits_{j=1}^{n}\frac{1}{j(j+2)} = \frac{3}{4} - \frac{1}{2(n+1)} - \frac{1}{2(n+2)} \xrightarrow{n\to \infty} \frac{3}{4}
\ee
\end{lemma}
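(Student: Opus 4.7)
The plan is to reduce this to a telescoping sum via partial fractions, in direct analogy with the proof of Lemma \ref{lemSum0}. Specifically, I would decompose
\[
\frac{1}{j(j+2)} = \frac{1}{2}\left(\frac{1}{j}-\frac{1}{j+2}\right),
\]
which is the $m=0$, $r=2$ case of the partial fraction identity underlying Lemma \ref{lemSumjmjr}.

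Next I would sum both sides from $j=1$ to $n$. Unlike the step-one telescope in Lemma \ref{lemSum0}, this is a step-two telescope: the terms $1/j$ for $j=1,\ldots,n$ cancel with the terms $1/(j+2)$ for $j=-1,\ldots,n-2$ (shifted). After cancellation, only the first two positive contributions $1/1$ and $1/2$ and the last two negative contributions $1/(n+1)$ and $1/(n+2)$ survive, giving
\[
\sum_{j=1}^{n}\frac{1}{j(j+2)} = \frac{1}{2}\left(1+\frac{1}{2}-\frac{1}{n+1}-\frac{1}{n+2}\right) = \frac{3}{4}-\frac{1}{2(n+1)}-\frac{1}{2(n+2)}.
\]
Finally, sending $n\to\infty$ the two trailing reciprocal terms vanish, yielding the stated limit $3/4$.

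An alternative (and essentially instantaneous) route is to invoke Lemma \ref{lemSumjmjr} directly with $m=0$ and $r=2$: this substitutes $H_{2,1}-H_{0,1}=3/2$ and $H_{n+2,1}-H_{n,1}=1/(n+1)+1/(n+2)$ into the formula, after scaling by $1/(r-m)=1/2$. I expect no real obstacle here; the only minor point to be careful about is tracking the step-two cancellation pattern correctly, or equivalently verifying the trivial boundary values $H_{0,1}=0$ and $H_{2,1}=3/2$ when specializing the general lemma.
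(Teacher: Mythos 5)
Your proposal is correct and follows essentially the same route as the paper: the partial fraction decomposition $\frac{1}{j(j+2)}=\frac{1}{2}\left(\frac{1}{j}-\frac{1}{j+2}\right)$ followed by the step-two telescoping, which leaves exactly the boundary terms $1+\frac{1}{2}-\frac{1}{n+1}-\frac{1}{n+2}$ inside the factor $\frac{1}{2}$. The alternative of specializing Lemma \ref{lemSumjmjr} with $m=0$, $r=2$ is also valid and is in fact how the paper handles the analogous sums in Lemmata \ref{lemSumj2j3} and \ref{lemSumj3j4}.
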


\begin{proof}
$$
\begin{array}{l}
\sum\limits_{j=1}^{n}\frac{1}{j(j+2)}
=
\frac{1}{2}\left(
\sum\limits_{j=1}^{n}\frac{1}{j} - \sum\limits_{j=1}^{n}\frac{1}{j+2}
\right)
=
\frac{1}{2}\left(
\sum\limits_{j=1}^{n}\frac{1}{j} - \sum\limits_{j=1}^{n+2}\frac{1}{j} + 1 + \frac{1}{2}
\right)
\\ =
\frac{3}{4} - \frac{1}{2(n+1)} - \frac{1}{2(n+2)}.
\end{array}
$$
\end{proof}

\begin{lemma}\label{lemSum1}
\be\label{eqSum1}
\sum\limits_{j=1}^{n}\frac{1}{(j+1)(j+2)} = \frac{1}{2}-\frac{1}{n+2} \xrightarrow{n\to \infty} \frac{1}{2}.
\ee
\end{lemma}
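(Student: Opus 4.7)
The plan is to prove this by the same partial-fraction/telescoping approach used in Lemma \ref{lemSum0}. The factors in the denominator differ by exactly one, so the summand decomposes cleanly. Specifically, I would write
$$
\frac{1}{(j+1)(j+2)} \;=\; \frac{1}{j+1} - \frac{1}{j+2},
$$
which is the standard partial-fraction identity, and then recognize that summing this expression from $j=1$ to $j=n$ is a telescoping sum.

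After that, the key step is just bookkeeping: the telescoping collapses to the first term of the $1/(j+1)$ piece minus the last term of the $1/(j+2)$ piece, i.e.\ $\frac{1}{2} - \frac{1}{n+2}$, which is precisely the claimed formula. Taking $n\to\infty$ kills the $1/(n+2)$ term and leaves $1/2$, giving the stated limit.

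Alternatively, one could derive this as an immediate corollary of Lemma \ref{lemSumjmjr} with $m=1$ and $r=2$: that lemma gives $\frac{1}{r-m}(H_{r,1} - H_{m,1} - (H_{n+r,1} - H_{n+m,1}))$, which specializes to $H_{2,1} - H_{1,1} - (H_{n+2,1} - H_{n+1,1}) = \frac{3}{2} - 1 - \frac{1}{n+2} = \frac{1}{2} - \frac{1}{n+2}$. Either route works, and neither has any real obstacle — this is essentially a sanity-check lemma included for later reference. I would favor the direct telescoping proof for brevity and transparency, since it does not require invoking a more general lemma for a one-line computation.
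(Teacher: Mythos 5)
Your proof is correct and follows essentially the same route as the paper: the paper shifts the index to $\sum_{j=2}^{n+1}\frac{1}{j(j+1)}$ and then telescopes, which is the same partial-fraction telescoping you carry out directly. Your alternative via Lemma \ref{lemSumjmjr} with $m=1$, $r=2$ also checks out, but the direct telescoping matches the paper's argument.
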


\begin{proof}
We show this property 
by transforming it into a telescoping sum,

$$
\sum\limits_{j=1}^{n}\frac{1}{(j+1)(j+2)} = \sum\limits_{j=2}^{n+1}\frac{1}{j(j+1)}
=\sum\limits_{j=2}^{n+1}\left(\frac{1}{j}-\frac{1}{j+1}\right)=\frac{1}{2}-\frac{1}{n+2}.
$$
\end{proof}

\begin{lemma}\label{lemSumj2j3}

\be\label{eqSumj2j3}
\begin{array}{rcl}
\sum\limits_{j=1}^{n}\frac{1}{(j+2)(j+3)} & = & \frac{1}{3}-\frac{1}{n+3} 
\xlongrightarrow{n\to\infty} \frac{1}{3}.
\end{array}
\ee
\end{lemma}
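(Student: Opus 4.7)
The plan is to treat this as a straightforward telescoping sum, in direct parallel to Lemma \ref{lemSum1} (and as a special case of Lemma \ref{lemSumjmjr} with $m=2$, $r=3$). The key observation is the partial fraction decomposition $\frac{1}{(j+2)(j+3)} = \frac{1}{j+2} - \frac{1}{j+3}$, after which the sum collapses.

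First I would write the summand via partial fractions, then either shift the index (letting $k=j+2$) to rewrite the sum as $\sum_{k=3}^{n+2}\tfrac{1}{k(k+1)}$ and invoke the same collapsing-sum argument used in Lemma \ref{lemSum1}, or equivalently apply Lemma \ref{lemSumjmjr} directly with $m=2$ and $r=3$, which yields $H_{3,1}-H_{2,1}-(H_{n+3,1}-H_{n+2,1})=\tfrac{1}{3}-\tfrac{1}{n+3}$. Either route immediately delivers the stated finite-$n$ formula.

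The asymptotic $n\to\infty$ claim then follows because the single residual term $\tfrac{1}{n+3}$ vanishes in the limit, leaving $\tfrac{1}{3}$. There is essentially no obstacle here — the only thing to be mildly careful about is bookkeeping the first two terms of the telescoped series ($\tfrac{1}{3}$, coming from $j=1$ in the $\tfrac{1}{j+2}$ piece) and the last term ($-\tfrac{1}{n+3}$ from the $\tfrac{1}{j+3}$ piece at $j=n$), so as to match the claimed closed form exactly.
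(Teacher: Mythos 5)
Your proposal is correct, and the route via Lemma \ref{lemSumjmjr} with $m=2$, $r=3$ is exactly the paper's own one-line proof; the arithmetic $H_{3,1}-H_{2,1}-(H_{n+3,1}-H_{n+2,1})=\tfrac{1}{3}-\tfrac{1}{n+3}$ checks out. The alternative telescoping route you sketch is equally valid but unnecessary here.
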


\begin{proof}
Plugging in $m=2$ and $r=3$ in Lemma \ref{lemSumjmjr} we immediately obtain the result. 
\end{proof}

\begin{lemma}\label{lemSumj3j4}

\be\label{eqSumj3j4}
\begin{array}{rcl}
\sum\limits_{j=1}^{n}\frac{1}{(j+3)(j+4)} & = & \frac{1}{4}-\frac{1}{n+4} 
\xlongrightarrow{n\to\infty} \frac{1}{4}.
\end{array}
\ee
\end{lemma}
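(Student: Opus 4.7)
The statement is a direct specialization of Lemma \ref{lemSumjmjr}, so the plan is essentially to invoke it. I would set $m=3$ and $r=4$ in Eq. \eqref{eqSumjmjr}: the factor $1/(r-m)$ becomes $1$, and the harmonic differences collapse to $H_{4,1}-H_{3,1}=1/4$ together with $H_{n+3,1}-H_{n+4,1}=-1/(n+4)$. Combining these immediately yields $1/4 - 1/(n+4)$, and the $n\to\infty$ limit follows by sending $1/(n+4)\to 0$.

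Alternatively, one can give a direct self-contained telescoping argument without appealing to the general lemma. Writing
$$
\frac{1}{(j+3)(j+4)} = \frac{1}{j+3} - \frac{1}{j+4},
$$
the sum $\sum_{j=1}^{n}\left(\frac{1}{j+3} - \frac{1}{j+4}\right)$ telescopes to $\frac{1}{4} - \frac{1}{n+4}$. This is the same pattern as Lemmas \ref{lemSum0}, \ref{lemSum1}, and \ref{lemSumj2j3} proved earlier.

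There is really no obstacle here; the only thing to be careful about is the indexing in the telescoping (making sure the first surviving term is $1/4$ and the last surviving term is $-1/(n+4)$). I would favor the one-line invocation of Lemma \ref{lemSumjmjr} for uniformity with the surrounding appendix lemmata, since the author has clearly set up Lemma \ref{lemSumjmjr} as the master formula from which Lemmas \ref{lemSum0}, \ref{lemSum02}, \ref{lemSum1}, \ref{lemSumj2j3}, and \ref{lemSumj3j4} are all immediate corollaries.
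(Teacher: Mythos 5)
Your proposal is correct and matches the paper's proof exactly: the paper also just plugs $m=3$ and $r=4$ into Lemma \ref{lemSumjmjr}, and your computation of the resulting harmonic differences ($H_{4,1}-H_{3,1}=1/4$ and $H_{n+3,1}-H_{n+4,1}=-1/(n+4)$) is right. The alternative telescoping argument you sketch is also valid but unnecessary given the master formula.
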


\begin{proof}
Plugging in $m=3$ and $r=4$ in Lemma \ref{lemSumjmjr} we immediately obtain the result. 
\end{proof}

\begin{lemma}\label{lemSumii1i2}

\be\label{eqSumii1i2}
\sum\limits_{j=1}^{n}\frac{1}{j(j+1)(j+2)} = \frac{1}{4} - \frac{1}{2(n+1)} + \frac{1}{2(n+2)}
\xrightarrow{n\to \infty} \frac{1}{4}
\ee
\end{lemma}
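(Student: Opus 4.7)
The plan is to decompose the summand as a difference of the two simpler rational functions that already appear as summands in Lemmas \ref{lemSum0} and \ref{lemSum1}. Specifically, I would use the identity
$$
\frac{1}{j(j+1)(j+2)} = \frac{1}{2}\left(\frac{1}{j(j+1)} - \frac{1}{(j+1)(j+2)}\right),
$$
which one verifies by noting that $\frac{1}{j(j+1)} - \frac{1}{(j+1)(j+2)} = \frac{(j+2)-j}{j(j+1)(j+2)} = \frac{2}{j(j+1)(j+2)}$.

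Then I would sum both sides from $j=1$ to $n$ and invoke Lemma \ref{lemSum0} to evaluate $\sum_{j=1}^{n} 1/(j(j+1)) = 1 - 1/(n+1)$ and Lemma \ref{lemSum1} to evaluate $\sum_{j=1}^{n} 1/((j+1)(j+2)) = 1/2 - 1/(n+2)$. Subtracting these and multiplying by $1/2$ immediately yields the closed form $1/4 - 1/(2(n+1)) + 1/(2(n+2))$, and the limit $1/4$ as $n \to \infty$ follows by letting the two $O(1/n)$ terms vanish.

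There is no serious obstacle here; the only subtle point is recognizing the right ``partial-fraction-telescoping'' factorization so that the result follows from the two already-proved lemmata rather than redoing a three-term partial fraction expansion $\tfrac{1}{j(j+1)(j+2)} = \tfrac{1}{2j} - \tfrac{1}{j+1} + \tfrac{1}{2(j+2)}$ from scratch. If I preferred that alternative route, I would sum each of the three pieces by shifting indices and converting to harmonic numbers, but that gives the same answer after a small amount of cancellation and is slightly less economical given that Lemmas \ref{lemSum0} and \ref{lemSum1} are already at hand.
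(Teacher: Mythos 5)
Your proof is correct, but it uses a different decomposition from the paper's. You write $\frac{1}{j(j+1)(j+2)} = \frac{1}{2}\bigl(\frac{1}{j(j+1)} - \frac{1}{(j+1)(j+2)}\bigr)$, pulling out the middle factor $\frac{1}{j+1}$, and then invoke Lemmata \ref{lemSum0} and \ref{lemSum1}. The paper instead pulls out the first factor $\frac{1}{j}$, writing $\frac{1}{j(j+1)(j+2)} = \frac{1}{j(j+1)} - \frac{1}{j(j+2)}$, and invokes Lemmata \ref{lemSum0} and \ref{lemSum02}. Both routes are two-line computations landing on the same closed form. Yours has a small aesthetic advantage: the two pieces $\frac{1}{j(j+1)}$ and $\frac{1}{(j+1)(j+2)}$ are index shifts of one another, so the difference telescopes outright to $\frac{1}{2}\bigl(\frac{1}{1\cdot 2} - \frac{1}{(n+1)(n+2)}\bigr)$ without needing the slightly messier Lemma \ref{lemSum02}; the paper's choice has the advantage of reusing the $\frac{1}{j(j+\cdot)}$ family of lemmata that it leans on repeatedly elsewhere. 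Either way the limit $\frac{1}{4}$ follows immediately, and there is no gap in your argument.
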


\begin{proof}

$$
\begin{array}{l}
\sum\limits_{j=1}^{n}\frac{1}{j(j+1)(j+2)} 
=
\sum\limits_{j=1}^{n}\frac{1}{j(j+1)}
-
\sum\limits_{j=1}^{n}\frac{1}{j(j+2)} 
\stackrel{Lemmata~\ref{lemSum0},\ref{lemSum02}}{=}
1-\frac{1}{n+1}
-\frac{3}{4}
\\ + \frac{1}{2(n+1)} + \frac{1}{2(n+2)}
 = \frac{1}{4} - \frac{1}{2(n+1)} + \frac{1}{2(n+2)} .
\end{array}
$$
\end{proof}

\begin{lemma}\label{lemSumj1j2j3}

\be\label{eqSumj1j2j3}
\begin{array}{rcl}
\sum\limits_{j=1}^{n}\frac{1}{(j+1)(j+2)(j+3)} & = &  \frac{1}{12}-\frac{1}{2(n+2)}+\frac{1}{2(n+3)}
\xlongrightarrow{n\to \infty} \frac{1}{12}.
\end{array}
\ee
\end{lemma}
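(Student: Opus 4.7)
The plan is to reduce this sum directly to Lemma \ref{lemSumii1i2} by an index shift, mirroring the strategy used in earlier lemmata such as Lemma \ref{lemSum1} (which reduces a shifted telescoping sum back to Lemma \ref{lemSum0}). Specifically, I would substitute $k = j+1$ to get
$$
\sum_{j=1}^{n}\frac{1}{(j+1)(j+2)(j+3)} = \sum_{k=2}^{n+1}\frac{1}{k(k+1)(k+2)} = \sum_{k=1}^{n+1}\frac{1}{k(k+1)(k+2)} - \frac{1}{6}.
$$

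Then Lemma \ref{lemSumii1i2} applied with $n$ replaced by $n+1$ immediately gives
$$
\sum_{k=1}^{n+1}\frac{1}{k(k+1)(k+2)} = \frac{1}{4} - \frac{1}{2(n+2)} + \frac{1}{2(n+3)},
$$
so subtracting $1/6$ yields $\tfrac{1}{12} - \tfrac{1}{2(n+2)} + \tfrac{1}{2(n+3)}$. The asymptotic $n\to\infty$ claim then follows trivially since both correction terms vanish.

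An equivalent route, which might be preferred for self-containedness, is partial fractions: write $\tfrac{1}{(j+1)(j+2)(j+3)} = \tfrac{1/2}{j+1} - \tfrac{1}{j+2} + \tfrac{1/2}{j+3}$ and sum each piece after re-indexing; the diverging harmonic tails cancel, leaving the same finite closed form. There is no real obstacle here; this is a routine telescoping identity, and the only thing to watch is getting the constant right ($\tfrac{1}{4} - \tfrac{1}{6} = \tfrac{1}{12}$) and correctly tracking the shifted boundary terms $1/(2(n+2))$ and $1/(2(n+3))$ after the index shift.
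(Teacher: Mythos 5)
Your proposal is correct and is essentially identical to the paper's proof: both shift the index to rewrite the sum as $\sum_{j=2}^{n+1}\frac{1}{j(j+1)(j+2)}$, apply Lemma \ref{lemSumii1i2} at $n+1$, and subtract the $j=1$ term $\frac{1}{6}$ to land on $\frac{1}{12}-\frac{1}{2(n+2)}+\frac{1}{2(n+3)}$. The partial-fractions alternative you mention would also work but is not needed.
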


\begin{proof}

$$
\begin{array}{l}
\sum\limits_{j=1}^{n}\frac{1}{(j+1)(j+2)(j+3)} =
\sum\limits_{j=2}^{n+1}\frac{1}{j(j+1)(j+2)} 
\stackrel{Lemma~\ref{lemSumii1i2}}{=}
\frac{1}{4}-\frac{1}{2(n+2)}+\frac{1}{2(n+3)}
-\frac{1}{6}
\\=\frac{1}{12}-\frac{1}{2(n+2)}+\frac{1}{2(n+3)}.
\end{array}
$$
\end{proof}

\begin{lemma}\label{lemSumj2j3j4}

\be\label{eqSumj2j3j4}
\begin{array}{rcl}
\sum\limits_{j=1}^{n}\frac{1}{(j+2)(j+3)(j+4)} & = &  \frac{1}{24}-\frac{1}{2(n+3)}+\frac{1}{2(n+4)}
\xlongrightarrow{n\to \infty} \frac{1}{24}
.
\end{array}
\ee
\end{lemma}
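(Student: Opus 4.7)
The plan is to reduce this to the previously established Lemma \ref{lemSumj1j2j3} by a straightforward index shift. Setting $k=j+1$, the sum becomes
\[
\sum_{j=1}^{n}\frac{1}{(j+2)(j+3)(j+4)} = \sum_{k=2}^{n+1}\frac{1}{(k+1)(k+2)(k+3)},
\]
so I would extend the lower limit back to $k=1$, subtract off the single extra term $\frac{1}{2\cdot 3\cdot 4}=\frac{1}{24}$, and apply Lemma \ref{lemSumj1j2j3} to the full sum, which gives $\frac{1}{12}-\frac{1}{2(n+3)}+\frac{1}{2(n+4)}$. Collecting constants via $\frac{1}{12}-\frac{1}{24}=\frac{1}{24}$ yields the claimed identity.

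Alternatively, one could bypass Lemma \ref{lemSumj1j2j3} and just invoke Lemma \ref{lemSumii1i2} directly through the reindexing $k = j+2$, turning the sum into $\sum_{k=3}^{n+2}\frac{1}{k(k+1)(k+2)}$ and subtracting the two boundary terms $\frac{1}{6}+\frac{1}{24}=\frac{5}{24}$; the arithmetic $\frac{1}{4}-\frac{5}{24}=\frac{1}{24}$ closes it. Either route is essentially bookkeeping.

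There is no real obstacle here: the identity is a pure consequence of the telescoping already exploited in Lemmata \ref{lemSumii1i2} and \ref{lemSumj1j2j3}. The only thing to be careful about is getting the boundary correction right, i.e., verifying that the endpoint contributions $\frac{1}{(j+1)(j+2)(j+3)}|_{j=1}$ are handled with the correct sign so that the constant term comes out to $\frac{1}{24}$ rather than, say, $\frac{3}{24}$. For the limit $n\to\infty$, the two $1/(n+\cdot)$ terms vanish and leave $\frac{1}{24}$.
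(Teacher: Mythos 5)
Your first route is exactly the paper's proof: shift the index to reduce to Lemma \ref{lemSumj1j2j3} applied up to $n+1$, then subtract the single boundary term $\frac{1}{2\cdot 3\cdot 4}=\frac{1}{24}$ to get $\frac{1}{12}-\frac{1}{24}=\frac{1}{24}$. The arithmetic checks out, so the proposal is correct and matches the paper's approach.
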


\begin{proof}

$$
\begin{array}{l}
\sum\limits_{j=1}^{n}\frac{1}{(j+2)(j+3)(j+4)} =
\sum\limits_{j=2}^{n+1}\frac{1}{(j+1)(j+2)(j+3)} 
\stackrel{Lemma~\ref{lemSumj1j2j3}}{=}
\frac{1}{12}-\frac{1}{2(n+3)}+\frac{1}{2(n+4)}
-\frac{1}{24}
\\=\frac{1}{24}-\frac{1}{2(n+3)}+\frac{1}{2(n+4)}.
\end{array}
$$
\end{proof}

\begin{lemma}\label{lemSumj1j2j3j4}

\be\label{eqSumj1j2j3j4}
\sum\limits_{j=1}^{n}\frac{1}{(j+1)(j+2)(j+3)(j+4)} = \frac{1}{72}-\frac{1}{6(n+2)}+\frac{1}{3(n+3)}-\frac{1}{6(n+4)}
\xlongrightarrow{n \to\infty} \frac{1}{72}.
\ee
\end{lemma}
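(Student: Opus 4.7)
The plan is to reduce this four-factor sum to the two three-factor sums already computed in Lemmata \ref{lemSumj1j2j3} and \ref{lemSumj2j3j4}, using the same kind of telescoping trick that underlies the earlier lemmata in this appendix. Specifically, since $(j+4)-(j+1)=3$, we have the partial-fraction identity
$$
\frac{1}{(j+1)(j+2)(j+3)(j+4)} = \frac{1}{3}\left(\frac{1}{(j+1)(j+2)(j+3)} - \frac{1}{(j+2)(j+3)(j+4)}\right),
$$
which follows from writing the difference over the common denominator $(j+1)(j+2)(j+3)(j+4)$. This is my main structural step.

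Having established this, the proof reduces to summing both terms from $j=1$ to $n$ separately, invoking Lemma \ref{lemSumj1j2j3} for the first and Lemma \ref{lemSumj2j3j4} for the second. This yields
$$
\frac{1}{3}\left(\left[\frac{1}{12}-\frac{1}{2(n+2)}+\frac{1}{2(n+3)}\right] - \left[\frac{1}{24}-\frac{1}{2(n+3)}+\frac{1}{2(n+4)}\right]\right),
$$
and algebraic simplification ($\frac{1}{12}-\frac{1}{24}=\frac{1}{24}$, combining the two $\frac{1}{2(n+3)}$ terms, etc.) gives the closed form $\frac{1}{72}-\frac{1}{6(n+2)}+\frac{1}{3(n+3)}-\frac{1}{6(n+4)}$. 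The limit $n\to\infty$ is immediate since the three $n$-dependent terms each vanish.

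There is essentially no obstacle here: the only mildly nontrivial point is spotting the decomposition identity, but it is forced by the pattern of the preceding lemmata (the appendix has been systematically building up the three-factor cases $(j+r)(j+r+1)(j+r+2)$ precisely so that this telescoping can be applied). An alternative would be a direct partial-fraction decomposition into four simple fractions $\tfrac{1/6}{j+1}-\tfrac{1/2}{j+2}+\tfrac{1/2}{j+3}-\tfrac{1/6}{j+4}$ followed by grouping into harmonic differences, but reusing the already-proved Lemmata \ref{lemSumj1j2j3} and \ref{lemSumj2j3j4} is both shorter and consistent with the style of the surrounding proofs.
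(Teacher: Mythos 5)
Your proof is correct, but it takes a different route from the paper's. The paper proves this lemma by the direct four-term partial-fraction decomposition
$$
\frac{1}{(j+1)(j+2)(j+3)(j+4)}=\frac{1}{6}\,\frac{1}{j+1}-\frac{1}{2}\,\frac{1}{j+2}+\frac{1}{2}\,\frac{1}{j+3}-\frac{1}{6}\,\frac{1}{j+4},
$$
summing each piece as a shifted harmonic number and collecting the constants $-\tfrac{1}{6}+\tfrac{3}{4}-\tfrac{11}{12}+\tfrac{25}{72}=\tfrac{1}{72}$ --- exactly the alternative you mention and set aside at the end of your proposal. Your route instead exploits $(j+4)-(j+1)=3$ to write the summand as $\tfrac{1}{3}$ times a difference of two three-factor terms, and then cites Lemmata \ref{lemSumj1j2j3} and \ref{lemSumj2j3j4}; the resulting arithmetic ($\tfrac{1}{3}(\tfrac{1}{12}-\tfrac{1}{24})=\tfrac{1}{72}$, the two $\tfrac{1}{2(n+3)}$ terms combining to $\tfrac{1}{n+3}$) checks out and reproduces the stated closed form. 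Your approach is shorter and better integrated with the appendix's chain of three-factor lemmata, at the cost of depending on those earlier results; the paper's approach is self-contained and makes all four tail terms explicit from the outset. Both are valid, and the limit follows immediately either way.
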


\begin{proof}

$$
\begin{array}{l}
\sum\limits_{j=1}^{n}\frac{1}{(j+1)(j+2)(j+3)(j+4)} 
=
\frac{1}{6}\sum\limits_{j=1}^{n}\frac{1}{j+1}-\frac{1}{2}\sum\limits_{j=1}^{n}\frac{1}{j+2}+\frac{1}{2}\sum\limits_{j=1}^{n}\frac{1}{j+3}-\frac{1}{6}\sum\limits_{j=1}^{n}\frac{1}{j+4}
\\ =
\frac{1}{6}H_{n+1,1}-\frac{1}{6}-\frac{1}{2}H_{n+2,1}+\frac{1}{2}\frac{3}{2}+\frac{1}{2}H_{n+3,1}-\frac{1}{2}\frac{11}{6}
-\frac{1}{6}H_{n+4,1}+\frac{1}{6}\frac{25}{12}
\\ =
\frac{1}{72}-\frac{1}{6(n+2)}+\frac{1}{3(n+3)}-\frac{1}{6(n+4)}.
\end{array}
$$
\end{proof}

\begin{lemma}\label{lemSumii12}

\be\label{eqSumii12}
\sum\limits_{j=1}^{n}\frac{1}{j(j+1)^{2}} =  2-H_{n+1,2} - \frac{1}{n+1}
\xrightarrow{n\to \infty} 2-\frac{\pi^{2}}{6} .
\ee
\end{lemma}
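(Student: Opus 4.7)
The plan is to reduce the sum to two easily-evaluated pieces by a partial fraction decomposition with a single repeated factor, and then read off both the finite identity and the limit directly.

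First, I would observe the identity
\[
\frac{1}{j(j+1)^{2}} \;=\; \frac{1}{j(j+1)} \;-\; \frac{1}{(j+1)^{2}},
\]
which follows from a one-line computation: $\frac{1}{j(j+1)}-\frac{1}{(j+1)^{2}} = \frac{(j+1)-j}{j(j+1)^{2}}$. This is essentially a partial-fraction decomposition written in a telescoping-friendly form; equivalently one could expand $\frac{1}{j(j+1)^{2}} = \frac{1}{j}-\frac{1}{j+1}-\frac{1}{(j+1)^{2}}$, but the grouping above lets me reuse results already established in the excerpt.

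Next, I would sum both sides from $j=1$ to $n$. The first piece is exactly $\sum_{j=1}^{n}\frac{1}{j(j+1)}$, which Lemma \ref{lemSum0} evaluates as $1-\frac{1}{n+1}$. The second piece is $\sum_{j=1}^{n}\frac{1}{(j+1)^{2}} = H_{n+1,2}-1$, after reindexing and subtracting the missing $j=1$ term. Combining,
\[
\sum_{j=1}^{n}\frac{1}{j(j+1)^{2}} \;=\; \Bigl(1-\tfrac{1}{n+1}\Bigr) - \bigl(H_{n+1,2}-1\bigr) \;=\; 2 - H_{n+1,2} - \tfrac{1}{n+1},
\]
which is the claimed finite-sum identity.

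Finally, for the limit I would simply take $n\to\infty$: $H_{n+1,2}\to\zeta(2)=\pi^{2}/6$ and $\frac{1}{n+1}\to 0$, yielding $2-\pi^{2}/6$. There is no real obstacle here; the only minor subtlety is choosing the decomposition that aligns with the previously established lemma \ref{lemSum0} rather than doing a raw three-term partial fraction and reassembling telescoping terms from scratch.
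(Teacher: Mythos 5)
Your proposal is correct and follows essentially the same route as the paper: the paper also writes $\frac{1}{j(j+1)^{2}}=\frac{1}{j+1}\bigl(\frac{1}{j}-\frac{1}{j+1}\bigr)=\frac{1}{j(j+1)}-\frac{1}{(j+1)^{2}}$, invokes Lemma \ref{lemSum0} for the first piece and $H_{n+1,2}-1$ for the second. No differences worth noting.
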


\begin{proof}
We use Lemma \ref{lemSum0}

$$
\begin{array}{l}
\sum\limits_{j=1}^{n}\frac{1}{j(j+1)^{2}}
=
\sum\limits_{j=1}^{n}\frac{1}{j+1}\left(\frac{1}{j} -\frac{1}{j+1}\right)
=1-\frac{1}{n+1} -H_{n+1,2}+1
=2-H_{n+1,2} - \frac{1}{n+1}
\\ \xrightarrow{n\to \infty} 2-\frac{\pi^{2}}{6} \approx 0.355.
\end{array}
$$
\end{proof}

\begin{lemma}\label{lemSumii2sq}

\be\label{eqSumii2sq}
\sum\limits_{j=1}^{n}\frac{1}{j(j+2)^{2}} = 1 - \frac{1}{2}H_{n+2,2} - \frac{1}{4(n+1)} - \frac{1}{4(n+2)} 
\xrightarrow{n\to \infty} 1 -\frac{\pi^{2}}{12} .
\ee
\end{lemma}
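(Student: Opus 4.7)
The plan is to mirror the approach used in Lemma \ref{lemSumii12}, which handled the analogous sum $\sum 1/(j(j+1)^2)$ by peeling off a factor of $1/(j+1)$ and invoking the telescoping identity already available. Here I would observe that since $1/(j(j+2)) = \tfrac{1}{2}(1/j - 1/(j+2))$, we may write
\[
\frac{1}{j(j+2)^{2}} \;=\; \frac{1}{j+2}\cdot\frac{1}{j(j+2)} \;=\; \frac{1}{2j(j+2)} - \frac{1}{2(j+2)^{2}},
\]
turning the target sum into a linear combination of two sums that are already under control.

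Next, I would split accordingly and apply Lemma \ref{lemSum02} to the first piece to get
\[
\tfrac{1}{2}\sum_{j=1}^{n}\frac{1}{j(j+2)} \;=\; \tfrac{3}{8} - \tfrac{1}{4(n+1)} - \tfrac{1}{4(n+2)},
\]
while for the second piece I would reindex: $\sum_{j=1}^{n} 1/(j+2)^{2} = H_{n+2,2} - 1 - \tfrac{1}{4}$. Combining and collecting the constants ($\tfrac{3}{8} + \tfrac{1}{2} + \tfrac{1}{8} = 1$) should give exactly the stated closed form
\[
1 - \tfrac{1}{2}H_{n+2,2} - \tfrac{1}{4(n+1)} - \tfrac{1}{4(n+2)}.
\]

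Finally, for the $n\to\infty$ limit I would note that the two $1/(n+k)$ terms vanish and $H_{n+2,2}\to \zeta(2) = \pi^{2}/6$, yielding $1 - \pi^{2}/12$. There is no real obstacle in this argument; the only place where one might slip up is bookkeeping the constants arising from reindexing $\sum 1/(j+2)^{2}$, so I would double-check that step arithmetically before combining.
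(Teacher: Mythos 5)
Your decomposition $\frac{1}{j(j+2)^{2}}=\frac{1}{2}\cdot\frac{1}{j+2}\left(\frac{1}{j}-\frac{1}{j+2}\right)$, the appeal to Lemma \ref{lemSum02} for the first piece, and the reindexing $\sum_{j=1}^{n}(j+2)^{-2}=H_{n+2,2}-1-\frac{1}{4}$ are exactly what the paper does, and your constant bookkeeping ($\frac{3}{8}+\frac{1}{2}+\frac{1}{8}=1$) checks out. The proof is correct and essentially identical to the paper's.
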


\begin{proof}

$$
\begin{array}{l}
\sum\limits_{j=1}^{n}\frac{1}{j(j+2)^{2}}
=
\frac{1}{2}\sum\limits_{j=1}^{n}\frac{1}{j+2}\left(\frac{1}{j} -\frac{1}{j+2}\right)
\\ \stackrel{Lemma~ \ref{lemSum02}}{=}
\frac{1}{2}\left(\frac{3}{4} - \frac{1}{2(n+1)} - \frac{1}{2(n+2)} - H_{n+2,2} +1 + \frac{1}{4}\right)
\\ =
\frac{1}{2}\left(2 - H_{n+2,2} - \frac{1}{2(n+1)} - \frac{1}{2(n+2)} \right)
= 1 - \frac{1}{2}H_{n+2,2} - \frac{1}{4(n+1)} - \frac{1}{4(n+2)} 
\\ \xrightarrow{n\to \infty} 1 -\frac{\pi^{2}}{12} \approx 0.178.
\end{array}
$$
\end{proof}

\newpage
\begin{lemma}\label{lemSumj1sqjmapp}
For $m\ge 2$ it holds that 

\be\label{eqSumj1sqjmapp}
\begin{array}{rcl}
\sum\limits_{j=1}^{n}\frac{1}{(j+1)^{2}(j+m)} & = &
\left\{
\begin{array}{lc}
H_{n+1,3}-1,~~~~~~~~~~~~~~~~~~~~~~~~~~~~~~~~~~~~~~~~~~~~  m=1, \\
\frac{H_{n+1,2}}{m-1} - \frac{1}{m-1}
- \frac{H_{m,1}}{(m-1)^{2}}+ \frac{1}{(m-1)^{2}}+\frac{1}{(m-1)^{2}}\sum\limits_{j=n+2}^{n+m}\frac{1}{j}, m \ge 2.
\end{array}
\right.
\\ & \xlongrightarrow{n \to \infty} &
\left\{
\begin{array}{lc}
\zeta(3)-1, & m= 1,\\
\frac{\pi^{2}}{6(m-1)} - \frac{1}{m-1} 
- \frac{H_{m,1}}{(m-1)^{2}}+ \frac{1}{(m-1)^{2}}, & m \ge 2.
\end{array}
\right.
\end{array}
\ee
\end{lemma}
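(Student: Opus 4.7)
The plan is to establish the identity by partial-fraction decomposition, reducing the triple denominator to sums we have already treated. The trivial case $m=1$ gives $\sum_{j=1}^{n}(j+1)^{-3}=H_{n+1,3}-1$ by a direct index shift, so I focus on $m\ge 2$.

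For $m\ge 2$, the key decomposition is
$$
\frac{1}{(j+1)^{2}(j+m)}=\frac{1}{j+1}\cdot\frac{1}{(j+1)(j+m)}=\frac{1}{m-1}\frac{1}{j+1}\left(\frac{1}{j+1}-\frac{1}{j+m}\right),
$$
which rewrites the left-hand side as
$$
\frac{1}{m-1}\sum_{j=1}^{n}\frac{1}{(j+1)^{2}}-\frac{1}{m-1}\sum_{j=1}^{n}\frac{1}{(j+1)(j+m)}.
$$
The first sum is $H_{n+1,2}-1$. For the second, I apply the same partial-fraction trick once more, getting $\frac{1}{m-1}\bigl(\sum_{j=1}^{n}(j+1)^{-1}-\sum_{j=1}^{n}(j+m)^{-1}\bigr)$, and each of these is a telescoped harmonic difference: $H_{n+1,1}-1$ and $H_{n+m,1}-H_{m,1}$ respectively. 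The leftover shift $H_{n+m,1}-H_{n+1,1}$ collapses to $\sum_{j=n+2}^{n+m}j^{-1}$, which accounts for the tail sum in the stated formula.

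Assembling the pieces and collecting the coefficients $(m-1)^{-1}$ and $(m-1)^{-2}$ yields the right-hand side of \eqref{eqSumj1sqjmapp}. Finally, letting $n\to\infty$, the tail $\sum_{j=n+2}^{n+m}j^{-1}$ vanishes (a fixed number of terms each tending to zero) and $H_{n+1,2}\to \pi^{2}/6$, producing the stated asymptotic value.

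No real obstacle arises: the entire argument is bookkeeping of partial fractions and telescoping, and the only point of mild care is tracking the residual tail $\sum_{j=n+2}^{n+m}j^{-1}$ rather than absorbing it into a harmonic number, since this is needed later (e.g., in Lemmata \ref{lemSumHj1jm}, \ref{lemSumHj2jm}, and \ref{lemSumHj1sqjm}) where such tails appear explicitly in recursions.
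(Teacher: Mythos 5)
Your proof is correct and follows essentially the same route as the paper's: the same partial-fraction decomposition $\frac{1}{(j+1)^{2}(j+m)}=\frac{1}{m-1}\frac{1}{j+1}\bigl(\frac{1}{j+1}-\frac{1}{j+m}\bigr)$ applied twice, reducing to $H_{n+1,2}-1$ and the telescoped harmonic differences, with the residual $\sum_{j=n+2}^{n+m}j^{-1}$ kept explicit. Nothing to add.
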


\begin{proof}
For $m=1$ the result is obvious and for $m\ge 2$ we have
$$
\begin{array}{l}
\sum\limits_{j=1}^{n}\frac{1}{(j+1)^{2}(j+m)} 
=\sum\limits_{j=1}^{n}\frac{1}{(j+1)}\frac{1}{m-1}\left(\frac{1}{j+1}-\frac{1}{j+m}\right)
\\=\frac{1}{m-1}\left(\sum\limits_{j=1}^{n}\frac{1}{(j+1)^{2}}-\sum\limits_{j=1}^{n}\frac{1}{(j+1)(j+m)}\right)
=\frac{1}{m-1}\left(H_{n+1,2}-1-\sum\limits_{j=1}^{n}\frac{1}{(j+1)(j+m)}\right) 
\\=\frac{1}{m-1}\left(H_{n+1,2}-1-\frac{1}{m-1}\left(\sum\limits_{j=1}^{n}\frac{1}{j+1}-\sum\limits_{j=1}^{n}\frac{1}{j+m}\right)\right) 
\\ =\frac{1}{m-1}\left(H_{n+1,2}-1-\frac{1}{m-1}\left(H_{n+1,1}-1-H_{n+m,1}+H_{m,1}\right)\right)
\\ =
\frac{H_{n+1,2}}{m-1} - \frac{1}{m-1}
- \frac{H_{m,1}}{(m-1)^{2}}+ \frac{1}{(m-1)^{2}}+\frac{1}{(m-1)^{2}}\sum\limits_{j=n+2}^{n+m}\frac{1}{j} .
\end{array}
$$
\end{proof}

\newpage
\begin{lemma}\label{lemSumi12i2}
\be\label{eqSumi12i2}
\sum\limits_{j=1}^{n}\frac{1}{(j+1)^{2}(j+2)} = H_{n+1,2} - \frac{3}{2} + \frac{1}{n+2}
\xrightarrow{n\to \infty} \frac{\pi^{2}}{6}- \frac{3}{2}.
\ee
\end{lemma}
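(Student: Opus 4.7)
The plan is to obtain this as an immediate specialization of the more general Lemma \ref{lemSumj1sqjmapp}, which already handles $\sum_{j=1}^{n}\frac{1}{(j+1)^{2}(j+m)}$ for all $m\ge 2$. Setting $m=2$ there collapses nearly everything: the prefactors $\tfrac{1}{m-1}$ and $\tfrac{1}{(m-1)^{2}}$ become $1$, the tail sum $\sum_{j=n+2}^{n+m}\tfrac{1}{j}$ reduces to the single term $\tfrac{1}{n+2}$, and the constant $H_{m,1}=H_{2,1}=\tfrac{3}{2}$ is known explicitly. Gathering the constants $-1-\tfrac{3}{2}+1$ produces $-\tfrac{3}{2}$, which matches the claimed closed form.

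If one did not wish to invoke the earlier lemma, the alternative I would use is elementary partial fractions: write
\[
\frac{1}{(j+1)^{2}(j+2)} = -\frac{1}{j+1}+\frac{1}{(j+1)^{2}}+\frac{1}{j+2},
\]
and then sum each of the three pieces from $j=1$ to $n$ using the definitions of $H_{n+1,1}$, $H_{n+1,2}$, and $H_{n+2,1}$, peeling off the first one or two terms so that the indices align. The telescoping between $-H_{n+1,1}$ and $H_{n+2,1}$ leaves only the $\tfrac{1}{n+2}$ term, while the $\tfrac{1}{(j+1)^{2}}$ contribution yields $H_{n+1,2}-1$ and the small constants assemble into $-\tfrac{3}{2}$.

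Either route is essentially bookkeeping, so there is no genuine obstacle; the only place one needs to be careful is tracking the offset in the harmonic number indices (i.e., working with $H_{n+1,1},H_{n+2,1}$ rather than $H_{n,1}$) and collecting the stray constants correctly. The asymptotic statement then follows immediately from $H_{n+1,2}\to \zeta(2)=\pi^{2}/6$ and $\tfrac{1}{n+2}\to 0$, giving the stated limit $\tfrac{\pi^{2}}{6}-\tfrac{3}{2}$.
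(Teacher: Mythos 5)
Your proposal is correct, but it reaches the identity by a slightly different route than the paper. The paper's proof splits off a single factor, writing $\frac{1}{(j+1)^{2}(j+2)}=\frac{1}{j+1}\bigl(\frac{1}{j+1}-\frac{1}{j+2}\bigr)=\frac{1}{(j+1)^{2}}-\frac{1}{(j+1)(j+2)}$, and then invokes the telescoping Lemma \ref{lemSum1} for the second piece; your primary route instead specializes the general Lemma \ref{lemSumj1sqjmapp} at $m=2$, and your fallback is the full three-term partial fraction $-\frac{1}{j+1}+\frac{1}{(j+1)^{2}}+\frac{1}{j+2}$. All three computations land on $H_{n+1,2}-\frac{3}{2}+\frac{1}{n+2}$ with the constants checking out ($-1-\frac{3}{2}+1=-\frac{3}{2}$ in the specialization; $H_{2,1}=\frac{3}{2}$ is used correctly). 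Your specialization route is in fact exactly how the paper handles the neighbouring cases $m=3$ and $m=4$ (Lemmata \ref{lemSumj1s1j3} and \ref{lemSumi1sqi4}), and since Lemma \ref{lemSumj1sqjmapp} is proved independently of the present statement there is no circularity; it buys uniformity across all $m$ at the cost of carrying the general bookkeeping, whereas the paper's direct two-term split for $m=2$ is self-contained modulo the elementary telescoping sum. The limit argument via $H_{n+1,2}\to\zeta(2)=\pi^{2}/6$ is the same in both.
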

\begin{proof}
We use Lemma \ref{lemSum1}

$$
\begin{array}{l}
\sum\limits_{j=1}^{n}\frac{1}{(j+1)^{2}(j+2)}
=
\sum\limits_{j=1}^{n}\frac{1}{j+1}\left(\frac{1}{j+1} -\frac{1}{j+2}\right)
=H_{n+1,2}-1-\frac{1}{2}+\frac{1}{n+2}
\\ =H_{n+1,2} - \frac{3n+4}{2(n+2)} \xrightarrow{n\to \infty} \frac{\pi^{2}}{6}- \frac{3}{2}
\approx 0.145.
\end{array}
$$
\end{proof}

\begin{lemma}\label{lemSumj1s1j3}

\be\label{eqSumj1s1j3}
\sum\limits_{j=1}^{n}\frac{1}{(j+1)^{2}(j+3)} =
\frac{1}{2}H_{n+1,2} -\frac{17}{24}
+\frac{1}{4(n+2)}+\frac{1}{4(n+3)} \xlongrightarrow{n \to \infty}
\frac{\pi^{2}}{12} -\frac{17}{24}
.
\ee
\end{lemma}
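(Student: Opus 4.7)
The plan is to observe that this is simply the special case $m=3$ of the general formula already proved as Lemma \ref{lemSumj1sqjmapp}. No new work is needed beyond a direct substitution and simplification of constants.

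First I would recall the general recursive identity established in Lemma \ref{lemSumj1sqjmapp}: for $m \ge 2$,
$$
\sum_{j=1}^{n}\frac{1}{(j+1)^{2}(j+m)} = \frac{H_{n+1,2}}{m-1} - \frac{1}{m-1} - \frac{H_{m,1}}{(m-1)^{2}} + \frac{1}{(m-1)^{2}} + \frac{1}{(m-1)^{2}}\sum_{j=n+2}^{n+m}\frac{1}{j}.
$$
Then I would substitute $m=3$, so that $(m-1)=2$, $(m-1)^2=4$, $H_{3,1}=1+\tfrac{1}{2}+\tfrac{1}{3}=\tfrac{11}{6}$, and the trailing sum becomes $\tfrac{1}{n+2}+\tfrac{1}{n+3}$.

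Carrying out the arithmetic, the constant terms combine as
$$
-\frac{1}{2} - \frac{11/6}{4} + \frac{1}{4} = \frac{-12 - 11 + 6}{24} = -\frac{17}{24},
$$
which yields exactly
$$
\sum_{j=1}^{n}\frac{1}{(j+1)^{2}(j+3)} = \frac{1}{2}H_{n+1,2} - \frac{17}{24} + \frac{1}{4(n+2)} + \frac{1}{4(n+3)}.
$$
Letting $n\to\infty$, the two $\tfrac{1}{4(n+k)}$ terms vanish and $H_{n+1,2}\to\pi^2/6$, giving the stated limit $\pi^2/12 - 17/24$.

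There is no real obstacle here; the only thing to watch is consolidating the rational constants correctly. If one prefers a self-contained derivation instead of invoking Lemma \ref{lemSumj1sqjmapp}, the same approach used there works verbatim: write $\tfrac{1}{(j+1)(j+3)} = \tfrac{1}{2}\bigl(\tfrac{1}{j+1}-\tfrac{1}{j+3}\bigr)$, factor out $\tfrac{1}{j+1}$, and reduce to the already known sums $H_{n+1,2}-1$ and the telescoping pair $\sum \tfrac{1}{j+1}-\sum\tfrac{1}{j+3}$.
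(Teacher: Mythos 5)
Your proposal is correct and matches the paper's own proof, which likewise just invokes Lemma \ref{lemSumj1sqjmapp} with $m=3$; your explicit arithmetic for the constant $-\tfrac{17}{24}$ checks out. Nothing further is needed.
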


\begin{proof}
We directly apply Lemma \ref{lemSumj1sqjmapp} with $m=3$. 
\end{proof}

\begin{lemma}\label{lemSumi1sqi4}

\be\label{eqSumi1sqi4}
\sum\limits_{j=1}^{n}\frac{1}{(j+1)^{2}(j+4)} =
\frac{1}{3}H_{n+1,2} 
-\frac{49}{108}
+\frac{1}{9(n+2)}+\frac{1}{9(n+3)}+\frac{1}{9(n+4)} \xlongrightarrow{n \to \infty} \frac{\pi^{2}}{18}-\frac{49}{108}
.
\ee
\end{lemma}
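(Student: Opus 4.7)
The plan is to invoke Lemma \ref{lemSumj1sqjmapp} directly with $m=4$, since the statement of Lemma \ref{lemSumi1sqi4} is exactly the $m=4$ specialization of that general identity; this is the same one-line reduction that was used for Lemma \ref{lemSumj1s1j3} at $m=3$. Substituting $m=4$ into the general formula yields
$$
\sum_{j=1}^{n}\frac{1}{(j+1)^{2}(j+4)}
= \frac{H_{n+1,2}}{3} - \frac{1}{3} - \frac{H_{4,1}}{9} + \frac{1}{9}
+ \frac{1}{9}\sum_{j=n+2}^{n+4}\frac{1}{j}.
$$

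The only remaining task is to collapse the constants. Using $H_{4,1}=1+\tfrac{1}{2}+\tfrac{1}{3}+\tfrac{1}{4}=\tfrac{25}{12}$, the constant part becomes
$$
-\frac{1}{3} - \frac{25}{108} + \frac{1}{9}
= -\frac{36+25-12}{108}
= -\frac{49}{108},
$$
and the residual sum $\tfrac{1}{9}\sum_{j=n+2}^{n+4}\tfrac{1}{j}$ expands into the three explicit $\tfrac{1}{9(n+k)}$ terms in the stated formula. The asymptotic claim then follows immediately from $H_{n+1,2}\to\zeta(2)=\pi^{2}/6$ together with the vanishing of each $\tfrac{1}{9(n+k)}$ as $n\to\infty$. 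There is no real obstacle here: the entire argument is a direct substitution into Lemma \ref{lemSumj1sqjmapp} followed by a short arithmetic simplification.
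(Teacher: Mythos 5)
Your proposal is correct and follows exactly the paper's approach: a direct specialization of Lemma \ref{lemSumj1sqjmapp}, here with $m=4$, followed by the arithmetic $-\tfrac{1}{3}-\tfrac{25}{108}+\tfrac{1}{9}=-\tfrac{49}{108}$. In fact your write-up is slightly more careful than the paper's one-line proof, which misstates the substitution as $m=3$ (an evident copy-paste slip from the preceding lemma) where $m=4$ is required.
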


\begin{proof}
We directly apply Lemma \ref{lemSumj1sqjmapp} with $m=3$. 
\end{proof}

\newpage 

\begin{lemma}\label{lemSumi1i2sq}

\be\label{eqSumi1i2sq}
\sum\limits_{j=1}^{n}\frac{1}{(j+1)(j+2)^{2}} = \frac{7}{4}-H_{n+2,2}-\frac{1}{n+2}
 \xrightarrow{n\to \infty} \frac{7}{4}-\frac{\pi^{2}}{6}.
\ee
\end{lemma}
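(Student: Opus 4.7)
The plan is to mimic the partial-fraction/telescoping technique already used for Lemma \ref{lemSumii12} and Lemma \ref{lemSumi12i2}. Observe that one of the two factors in the denominator is linear, so I split off the repeated factor via
$$
\frac{1}{(j+1)(j+2)^{2}} \;=\; \frac{1}{j+2}\left(\frac{1}{j+1}-\frac{1}{j+2}\right) \;=\; \frac{1}{(j+1)(j+2)} - \frac{1}{(j+2)^{2}}.
$$
This reduces the problem to two sums that have already been (or can trivially be) evaluated.

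Summing from $j=1$ to $n$, the first piece is exactly the sum handled by Lemma \ref{lemSum1}, giving $\tfrac{1}{2} - \tfrac{1}{n+2}$. The second piece is a shifted tail of a generalized harmonic number, namely
$$
\sum_{j=1}^{n}\frac{1}{(j+2)^{2}} = \sum_{j=3}^{n+2}\frac{1}{j^{2}} = H_{n+2,2} - 1 - \tfrac{1}{4}.
$$
Subtracting yields $\tfrac{1}{2} - \tfrac{1}{n+2} - H_{n+2,2} + \tfrac{5}{4} = \tfrac{7}{4} - H_{n+2,2} - \tfrac{1}{n+2}$, exactly the claimed closed form.

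Taking $n\to\infty$ uses only $H_{n+2,2}\to\zeta(2)=\pi^{2}/6$ and $1/(n+2)\to 0$, recovering the stated limit $\tfrac{7}{4}-\pi^{2}/6$. There is no genuine obstacle: the whole proof is one partial-fraction identity plus two bookkeeping evaluations, completely parallel to the neighbouring lemmata in the appendix.
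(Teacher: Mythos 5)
Your proof is correct and is essentially identical to the paper's: the same partial-fraction split $\frac{1}{(j+1)(j+2)^{2}}=\frac{1}{j+2}\bigl(\frac{1}{j+1}-\frac{1}{j+2}\bigr)$, the same appeal to Lemma \ref{lemSum1} for the telescoping piece, and the same reindexing of $\sum_{j=1}^{n}(j+2)^{-2}$ to $H_{n+2,2}-\frac{5}{4}$. Nothing to add.
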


\begin{proof}
We use Lemma \ref{lemSum1}

$$
\begin{array}{l}
\sum\limits_{j=1}^{n}\frac{1}{(j+1)(j+2)^{2}}
=
\sum\limits_{j=1}^{n}\frac{1}{j+2}\left(\frac{1}{j+1} -\frac{1}{j+2}\right)
=
\frac{1}{2}-\frac{1}{n+2}-H_{n+2,2}+1+\frac{1}{4}
\\ =\frac{7}{4}-H_{n+2,2}-\frac{1}{n+2}
\to \frac{7}{4}-\frac{\pi^{2}}{6}\approx 0.105.
\end{array}
$$
\end{proof}

\begin{lemma}\label{lemSumj2jm}

\be\label{eqSumj2jm}
\begin{array}{rcl}
\sum\limits_{j=1}^{n}\frac{1}{(j+2)^{2}(j+m)} & = &
\left\{
\begin{array}{cc}
\frac{7}{4}-H_{n+2,2}-\frac{1}{n+2}, & m=1, \\
H_{n+2,3}-\frac{9}{8}, & m=2, \\
\frac{H_{n+2,2}}{m-2}-\frac{5}{4(m-2)}
+\frac{3}{2(m-2)^{2}} -\frac{1}{(m-2)^{2}}H_{m,1}
+ \frac{1}{(m-2)^{2}}\sum\limits_{j=n+3}^{n+m}\frac{1}{j},
 & m\ge 3, 
\end{array}
\right.
\\
& \xlongrightarrow{n \to \infty} &
\left\{
\begin{array}{cc}
\frac{7}{4}-\frac{\pi^{2}}{6}, & m=1, \\
\zeta(3)-\frac{9}{8}, & m=2, \\
\frac{\pi^{2}}{6(m-2)}-\frac{5}{4(m-2)}
+\frac{3}{2(m-2)^{2}} -\frac{1}{(m-2)^{2}}H_{m,1},
 & m\ge 3.
\end{array}
\right.

\end{array}
\ee
\end{lemma}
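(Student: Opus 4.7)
The plan is to split into the three cases by partial-fraction expansion and then reduce to elementary telescoping sums (or cite the already-proved lemmata).

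For $m=1$ the claimed formula is just a restatement of Lemma \ref{lemSumi1i2sq} after rewriting $(j+2)^{2}(j+1)=(j+1)(j+2)^{2}$, so no new work is needed. For $m=2$ the sum collapses to $\sum_{j=1}^{n}(j+2)^{-3}=\sum_{j=3}^{n+2}j^{-3}=H_{n+2,3}-1-\frac{1}{8}=H_{n+2,3}-\frac{9}{8}$, which matches the stated expression.

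For the main case $m\ge 3$ I would use the partial-fraction identity
$$\frac{1}{(j+2)^{2}(j+m)}=\frac{1}{m-2}\left(\frac{1}{(j+2)^{2}}-\frac{1}{(j+2)(j+m)}\right),$$
and then apply $\frac{1}{(j+2)(j+m)}=\frac{1}{m-2}\bigl(\frac{1}{j+2}-\frac{1}{j+m}\bigr)$ to the second piece. Summing termwise gives
$$\sum_{j=1}^{n}\frac{1}{(j+2)^{2}}=H_{n+2,2}-\tfrac{5}{4},\qquad \sum_{j=1}^{n}\Bigl(\tfrac{1}{j+2}-\tfrac{1}{j+m}\Bigr)=H_{n+2,1}-\tfrac{3}{2}-H_{n+m,1}+H_{m,1},$$
and using $H_{n+m,1}-H_{n+2,1}=\sum_{j=n+3}^{n+m}\frac{1}{j}$ collects the finite-$n$ formula. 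Letting $n\to\infty$ makes the tail sum and the $H_{n+2,1}/(m-2)^{2}$-style cancellation vanish (only $H_{n+2,2}\to\pi^{2}/6$ survives), yielding the asymptotic expression.

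No real obstacle is expected: everything is linear partial fractions and the calculation is essentially the same structure as Lemma \ref{lemSumj1sqjmapp}, just shifted by one in the squared-denominator index. The only bookkeeping care is making sure the constants $-\tfrac{5}{4}$ and $+\tfrac{3}{2}/(m-2)^{2}$ (coming from the $j=1,2$ terms that are missing relative to the full harmonic numbers $H_{n+2,2}$ and $H_{n+2,1}$) are tracked correctly, and that the residual tail $\sum_{j=n+3}^{n+m}\frac{1}{j}$ is written with the right endpoints.
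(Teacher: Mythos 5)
Your proposal is correct and follows essentially the same route as the paper: the $m=1$ case is delegated to Lemma \ref{lemSumi1i2sq}, the $m=2$ case is the elementary cube sum, and for $m\ge 3$ the paper uses exactly your two-stage partial-fraction decomposition with the same bookkeeping of the constants $-\tfrac{5}{4}$ and $+\tfrac{3}{2}$ and the same tail $\sum_{j=n+3}^{n+m}\tfrac{1}{j}$.
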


\begin{proof}
For $m=1$ we have Lemma \ref{lemSumi1i2sq}, the result is obvious 
for $m=2$, 
and for $m\ge 3$  we have 

$$
\begin{array}{l}
\sum\limits_{j=1}^{n}\frac{1}{(j+2)^{2}(j+m)} 
=\sum\limits_{j=1}^{n}\frac{1}{(j+2)}\frac{1}{m-2}\left(\frac{1}{j+2}-\frac{1}{j+m}\right)
\\ =\frac{1}{m-2}\left(H_{n+2,2}-\frac{5}{4}\right)-\frac{1}{m-2}\sum\limits_{j=1}^{n}\frac{1}{(j+2)(j+m)}
=\frac{H_{n+2,2}}{m-2}-\frac{5}{4(m-2)}
-\frac{1}{(m-2)^{2}}\sum\limits_{j=1}^{n}\left(\frac{1}{j+2}-\frac{1}{j+m}\right)
\\ =\frac{H_{n+2,2}}{m-2}-\frac{5}{4(m-2)}
-\frac{1}{(m-2)^{2}}\sum\limits_{j=1}^{n}\frac{1}{j+2}
+ \frac{1}{(m-2)^{2}}\sum\limits_{j=1}^{n}\frac{1}{j+m}
\\ =
\frac{H_{n+2,2}}{m-2}-\frac{5}{4(m-2)}
-\frac{1}{(m-2)^{2}}H_{n+2,1}+\frac{1}{(m-2)^{2}}+\frac{1}{2(m-2)^{2}}
+ \frac{1}{(m-2)^{2}}H_{n+m,1}-\frac{1}{(m-2)^{2}}H_{m,1}
\\ =
\frac{H_{n+2,2}}{m-2}-\frac{5}{4(m-2)}
+\frac{3}{2(m-2)^{2}}
-\frac{1}{(m-2)^{2}}H_{m,1}
+ \frac{1}{(m-2)^{2}}\sum\limits_{j=n+3}^{n+m}\frac{1}{j}
\end{array}
$$
\end{proof}

\begin{lemma}\label{lemSumj2sqj4}

\be\label{eqSumj2sqj4}
\sum\limits_{j=1}^{n}\frac{1}{(j+2)^{2}(j+4)} =
\frac{1}{2}{H_{n+2,2}}
-\frac{37}{48}
+ \frac{1}{4(n+3)}+ \frac{1}{4(n+4)} \xlongrightarrow{n \to \infty} \frac{\pi^{2}}{12} -\frac{37}{48}
.
\ee
\end{lemma}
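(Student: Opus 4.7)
The plan is to recognize this as a direct instance of Lemma \ref{lemSumj2jm} with $m=4$, since the summand $1/((j+2)^2(j+m))$ with $m=4$ is exactly what the general formula addresses. Rather than redo the partial-fraction decomposition from scratch, I would cite Lemma \ref{lemSumj2jm} (which itself falls under the $m \ge 3$ branch) and then perform the substitution.

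Concretely, first I would write out the $m=4$ specialization:
$$
\sum_{j=1}^{n}\frac{1}{(j+2)^{2}(j+4)} = \frac{H_{n+2,2}}{2}-\frac{5}{8}+\frac{3}{8}-\frac{1}{4}H_{4,1}+\frac{1}{4}\sum_{j=n+3}^{n+4}\frac{1}{j}.
$$
Next I would evaluate the purely numerical pieces. With $H_{4,1}=\tfrac{25}{12}$, the constant contribution collapses as $-\tfrac{5}{8}+\tfrac{3}{8}-\tfrac{25}{48}=-\tfrac{12}{48}-\tfrac{25}{48}=-\tfrac{37}{48}$, while the short tail $\sum_{j=n+3}^{n+4}1/j$ is just $1/(n+3)+1/(n+4)$, contributing $\tfrac{1}{4(n+3)}+\tfrac{1}{4(n+4)}$. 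Combining these pieces produces the claimed identity.

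Finally, the asymptotic $n\to\infty$ follows at once: the two $1/(n+k)$ terms vanish and $H_{n+2,2}\to\zeta(2)=\pi^{2}/6$, yielding the stated limit $\pi^{2}/12-37/48$.

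There is no genuine obstacle here; the entire proof is bookkeeping. The only place a slip can occur is the arithmetic reduction $-5/8+3/8-25/48=-37/48$ and the correct identification of the tail indices $n+3,n+4$ coming from $\sum_{j=n+3}^{n+m}1/j$ with $m=4$. Provided those are handled carefully, the lemma follows immediately from the $m\ge 3$ case of Lemma \ref{lemSumj2jm}.
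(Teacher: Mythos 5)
Your proposal is correct and coincides with the paper's own proof, which likewise just invokes Lemma \ref{lemSumj2jm} with $m=4$ (the paper does not even spell out the arithmetic you verify, namely $-\tfrac{5}{8}+\tfrac{3}{8}-\tfrac{25}{48}=-\tfrac{37}{48}$ and the tail $\tfrac{1}{4(n+3)}+\tfrac{1}{4(n+4)}$). Nothing further is needed.
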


\begin{proof}
We directly apply Lemma \ref{lemSumj2jm} with $m=4$. 
\end{proof}

\begin{lemma}\label{lemSumi1i2cb}

\be\label{eqSumi1i2cb}
\sum\limits_{j=1}^{n}\frac{1}{(j+1)(j+2)^{3}} = \frac{23}{8}-H_{n+2,2}-H_{n+2,3}-\frac{1}{n+2}\xrightarrow{n\to \infty} 
\frac{23}{8}-\frac{\pi^{2}}{6}-\zeta(3).
\ee
\end{lemma}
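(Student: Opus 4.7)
The plan is to follow exactly the same telescoping/partial-fraction template used throughout the Appendix (cf.\ Lemmata \ref{lemSumii12}, \ref{lemSumii2sq}, \ref{lemSumi1i2sq}): reduce the cubic factor by one degree by peeling off the basic identity $\frac{1}{(j+1)(j+2)} = \frac{1}{j+1}-\frac{1}{j+2}$. Concretely, I would write
$$
\frac{1}{(j+1)(j+2)^{3}} \;=\; \frac{1}{(j+2)^{2}}\cdot\frac{1}{(j+1)(j+2)} \;=\; \frac{1}{(j+1)(j+2)^{2}} - \frac{1}{(j+2)^{3}},
$$
which splits the sum of interest into two pieces that are already accessible from earlier results in the paper.

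Summing from $j=1$ to $n$, the first piece is precisely $\sum_{j=1}^{n}\frac{1}{(j+1)(j+2)^{2}}$, whose closed form is given by Lemma \ref{lemSumi1i2sq} as $\frac{7}{4}-H_{n+2,2}-\frac{1}{n+2}$. The second piece is a trivially shifted generalized harmonic number,
$$
\sum_{j=1}^{n}\frac{1}{(j+2)^{3}} \;=\; \sum_{k=3}^{n+2}\frac{1}{k^{3}} \;=\; H_{n+2,3}-1-\tfrac{1}{8} \;=\; H_{n+2,3}-\tfrac{9}{8}.
$$
Subtracting and collecting the numerical constants $\tfrac{7}{4}+\tfrac{9}{8}=\tfrac{23}{8}$ yields the claimed finite-sum formula. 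For the asymptotic statement I would simply let $n\to\infty$: the term $1/(n+2)$ vanishes, $H_{n+2,2}\to \zeta(2)=\pi^{2}/6$, and $H_{n+2,3}\to\zeta(3)$, which gives $\tfrac{23}{8}-\tfrac{\pi^{2}}{6}-\zeta(3)$.

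There is no real obstacle here: the telescoping decomposition is immediate, the first resulting sum is an already proved lemma in the Appendix, and the second is just a reindexed tail of the definition of $H_{n,3}$. The only bookkeeping worth double-checking is the arithmetic $7/4+9/8=23/8$ and the shift of the lower summation index when converting $\sum_{j=1}^{n}(j+2)^{-3}$ into $H_{n+2,3}$ minus the two missing initial terms $1$ and $1/8$.
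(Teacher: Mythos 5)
Your proposal is correct and follows essentially the same route as the paper: the paper performs the identical first decomposition $\frac{1}{(j+1)(j+2)^{3}}=\frac{1}{(j+1)(j+2)^{2}}-\frac{1}{(j+2)^{3}}$, the only cosmetic difference being that it re-expands $\sum_{j=1}^{n}\frac{1}{(j+1)(j+2)^{2}}$ inline (ultimately via Lemma \ref{lemSum1}) instead of citing Lemma \ref{lemSumi1i2sq} as you do. Your arithmetic $\frac{7}{4}+\frac{9}{8}=\frac{23}{8}$ and the index shift for $H_{n+2,3}$ both check out.
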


\begin{proof}

$$
\begin{array}{l}
\sum\limits_{j=1}^{n}\frac{1}{(j+1)(j+2)^{3}}
=
\sum\limits_{j=1}^{n}\frac{1}{(j+2)^{2}}\left(\frac{1}{j+1} -\frac{1}{j+2}\right)
=\sum\limits_{j=1}^{n}\frac{1}{(j+1)(j+2)^{2}}
-\sum\limits_{j=1}^{n}\frac{1}{(j+2)^{3}}
\\ =
\sum\limits_{j=1}^{n}\frac{1}{(j+2)}\left(\frac{1}{j+1} -\frac{1}{j+2}\right)
-H_{n+2,3}+1+\frac{1}{8}
\\ =
\sum\limits_{j=1}^{n}\frac{1}{(j+1)(j+2)}
-\sum\limits_{j=1}^{n}\frac{1}{(j+2)^{2}}
-H_{n+2,3}+1+\frac{1}{8}
\\ \stackrel{Lemma~\ref{lemSum1}}{=}
\frac{1}{2}-\frac{1}{n+2}
-H_{n+2,2}+1+\frac{1}{4}-H_{n+2,3}+1+\frac{1}{8}
\\ =\frac{23}{8}-H_{n+2,2}-H_{n+2,3}-\frac{1}{n+2}
\xrightarrow{n\to \infty} \frac{23}{8}-\frac{\pi^{2}}{6}-\zeta(3)
\approx 0.028.
\end{array}
$$
\end{proof}

\newpage
\begin{lemma}\label{lemSumii1cb}

\be\label{eqSumii1cb}
\sum\limits_{j=1}^{n}\frac{1}{j(j+1)^{3}} = 
3-H_{n+1,2}-H_{n+1,3}-\frac{1}{n+1}\xrightarrow{n\to \infty} 
3-\frac{\pi^{2}}{6}-\zeta(3).
\ee
\end{lemma}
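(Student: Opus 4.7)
The plan is to mimic the telescoping/partial-fraction strategy used in Lemma \ref{lemSumi1i2cb}, shifting everything by one index. First I would write
$$
\frac{1}{j(j+1)^{3}} = \frac{1}{(j+1)^{2}}\left(\frac{1}{j}-\frac{1}{j+1}\right) = \frac{1}{j(j+1)^{2}}-\frac{1}{(j+1)^{3}},
$$
which reduces the target sum to a difference of two simpler sums,
$$
\sum\limits_{j=1}^{n}\frac{1}{j(j+1)^{3}} = \sum\limits_{j=1}^{n}\frac{1}{j(j+1)^{2}} - \sum\limits_{j=1}^{n}\frac{1}{(j+1)^{3}}.
$$

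Next I would invoke Lemma \ref{lemSumii12} to replace the first sum by $2-H_{n+1,2}-\frac{1}{n+1}$, and evaluate the second sum by a trivial index shift as $H_{n+1,3}-1$. Subtracting and collecting the constants $2-(-1)=3$ yields the claimed closed form $3-H_{n+1,2}-H_{n+1,3}-\frac{1}{n+1}$.

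Finally, the asymptotic follows by letting $n\to\infty$ and substituting $H_{\infty,2}=\pi^{2}/6$ and $H_{\infty,3}=\zeta(3)$, giving $3-\pi^{2}/6-\zeta(3)$. There is no real obstacle here: all the work is done by the single partial-fraction identity, and the only prerequisite result is Lemma \ref{lemSumii12}, which is already available. The proof is essentially a verbatim analogue of the one for Lemma \ref{lemSumi1i2cb} with the index shift $j\mapsto j-1$.
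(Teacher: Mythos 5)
Your proof is correct, but it takes a different (and arguably cleaner) route than the paper. The paper's own proof does not decompose the summand at all: it peels off the $j=1$ term, $\frac{1}{1\cdot 2^{3}}=\frac{1}{8}$, re-indexes the remainder as $\sum_{j=1}^{n-1}\frac{1}{(j+1)(j+2)^{3}}$, and invokes Lemma \ref{lemSumi1i2cb} directly (with $n$ replaced by $n-1$), getting $\frac{1}{8}+\frac{23}{8}-H_{n+1,2}-H_{n+1,3}-\frac{1}{n+1}$. You instead redo the partial-fraction work from scratch, writing $\frac{1}{j(j+1)^{3}}=\frac{1}{j(j+1)^{2}}-\frac{1}{(j+1)^{3}}$ and citing Lemma \ref{lemSumii12} plus the trivial evaluation $\sum_{j=1}^{n}(j+1)^{-3}=H_{n+1,3}-1$; the arithmetic $2+1=3$ checks out and the limit follows as you say. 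The paper's route buys economy (one index shift, one citation) at the cost of depending on the fully worked-out Lemma \ref{lemSumi1i2cb}; yours is self-contained modulo the much simpler Lemma \ref{lemSumii12}, and in that sense is the more elementary argument. Your closing remark that your proof is a "verbatim analogue of the one for Lemma \ref{lemSumi1i2cb} with the index shift $j\mapsto j-1$" is accurate about the technique, but note that the paper's actual proof of the present lemma is the index-shift reduction itself, not a re-run of that technique.
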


\begin{proof}

$$
\begin{array}{l}
\sum\limits_{j=1}^{n}\frac{1}{j(j+1)^{3}} 
= 
\frac{1}{8}+\sum\limits_{j=2}^{n}\frac{1}{j(j+1)^{3}} 
=
\frac{1}{8}+\sum\limits_{j=1}^{n-1}\frac{1}{(j+1)(j+2)^{3}} 
\\ \stackrel{Lemma~\ref{lemSumi1i2cb}}{=}
\frac{1}{8}+\frac{23}{8}-H_{n+1,2}-H_{n+1,3}-\frac{1}{n+1}
\\ =
3-H_{n+1,2}-H_{n+1,3}-\frac{1}{n+1}\xrightarrow{n\to \infty} 
3-\frac{\pi^{2}}{6}-\zeta(3) \approx 0.153.
\end{array}
$$
\end{proof}

\begin{lemma}\label{lemSumii2cb}

\be\label{eqSumii2cb}
\begin{array}{rcl}
\sum\limits_{j=1}^{n}\frac{1}{j(j+2)^{3}} & = &
\frac{1}{2}\left( 
\frac{17}{8} - \frac{1}{2}H_{n+2,2} -H_{n+2,3} - \frac{1}{4(n+1)} - \frac{1}{4(n+2)}  
\right)
\\ &\xrightarrow{n\to \infty}  &
\frac{17}{16} - \frac{\pi^{2}}{24} -\frac{1}{2}\zeta(3).
\end{array}
\ee
\end{lemma}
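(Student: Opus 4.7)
The plan is to imitate the telescoping/partial-fraction pattern used in Lemmata \ref{lemSumi1i2cb} and \ref{lemSumii2cb}'s immediate predecessors. The key observation is that $\frac{1}{j(j+2)} = \frac{1}{2}\bigl(\frac{1}{j}-\frac{1}{j+2}\bigr)$, so multiplying both sides by $\frac{1}{(j+2)^2}$ gives the identity
$$
\frac{1}{j(j+2)^{3}} = \frac{1}{2}\left(\frac{1}{j(j+2)^{2}} - \frac{1}{(j+2)^{3}}\right).
$$
This reduces the target sum to two pieces that are already known or immediate.

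Next, I would invoke Lemma \ref{lemSumii2sq} to evaluate
$$
\sum_{j=1}^{n}\frac{1}{j(j+2)^{2}} = 1 - \tfrac{1}{2}H_{n+2,2} - \tfrac{1}{4(n+1)} - \tfrac{1}{4(n+2)},
$$
and handle the second piece by an index shift and the definition of the generalized harmonic number:
$$
\sum_{j=1}^{n}\frac{1}{(j+2)^{3}} = \sum_{j=3}^{n+2}\frac{1}{j^{3}} = H_{n+2,3} - 1 - \tfrac{1}{8}.
$$

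Combining the two via the partial-fraction identity above, factoring out the $\tfrac{1}{2}$, and collecting the constant $1 + 1 + \tfrac{1}{8} = \tfrac{17}{8}$ should yield precisely the stated closed form. The asymptotic limit then follows by sending $n\to\infty$, using $H_{\infty,2}=\pi^{2}/6$, $H_{\infty,3}=\zeta(3)$, and the fact that the $1/(n+1)$ and $1/(n+2)$ terms vanish.

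I do not anticipate a genuine obstacle: the entire argument is elementary partial fractions plus a telescoping index shift, and all the supporting sums have already been established earlier in the Appendix. The only thing to watch for is keeping track of the lower-order constants ($-1$ from shifting $H_{n+2,3}$ and $-\tfrac{1}{8}$ from dropping the $j=2$ term), which have to combine correctly with the $1$ coming from Lemma \ref{lemSumii2sq} to produce the $\tfrac{17}{8}$ in the final formula.
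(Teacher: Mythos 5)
Your proposal is correct and follows essentially the same route as the paper: the identical partial-fraction split $\frac{1}{j(j+2)^{3}}=\frac{1}{2}\bigl(\frac{1}{j(j+2)^{2}}-\frac{1}{(j+2)^{3}}\bigr)$, the same appeal to Lemma \ref{lemSumii2sq}, and the same index-shift bookkeeping producing the constant $\frac{17}{8}$. Nothing to add.
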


\begin{proof}

$$
\begin{array}{l}
\sum\limits_{j=1}^{n}\frac{1}{j(j+2)^{3}} 
= \frac{1}{2}\left( 
\sum\limits_{j=1}^{n}\frac{1}{j(j+2)^{2}} - \sum\limits_{j=1}^{n}\frac{1}{(j+2)^{3}} 
\right)
\\ \stackrel{Lemma~\ref{lemSumii2sq}}{=}
 \frac{1}{2}\left( 
1 - \frac{1}{2}H_{n+2,2} - \frac{1}{4(n+1)} - \frac{1}{4(n+2)}  
-H_{n+2,3} + 1 + \frac{1}{8}
\right)
\\ =
\frac{1}{2}\left( 
\frac{17}{8} - \frac{1}{2}H_{n+2,2} -H_{n+2,3} - \frac{1}{4(n+1)} - \frac{1}{4(n+2)}  
\right)
\\ \xrightarrow{n\to \infty} 
\frac{17}{16} - \frac{\pi^{2}}{24} -\frac{1}{2}\zeta(3) \approx 0.050.
\end{array}
$$
\end{proof}

\newpage
\begin{lemma}\label{lemSumj1sqj2sq}

\be\label{eqSumj1sqj2sq}
\sum\limits_{j=1}^{n}\frac{1}{(j+1)^{2}(j+2)^{2}} = 2H_{n+1,2}-\frac{13}{4}+\frac{2}{n+2}+\frac{1}{(n+2)^{2}} 
\xrightarrow{n\to \infty} \frac{\pi^{2}}{3}-\frac{13}{4}.
\ee
\end{lemma}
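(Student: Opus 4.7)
The plan is to use the partial fraction identity $\frac{1}{(j+1)(j+2)} = \frac{1}{j+1}-\frac{1}{j+2}$ and square it, reducing the sum to three simpler sums.

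Concretely, I would first write
$$
\frac{1}{(j+1)^{2}(j+2)^{2}} = \left(\frac{1}{j+1}-\frac{1}{j+2}\right)^{2} = \frac{1}{(j+1)^{2}} - \frac{2}{(j+1)(j+2)} + \frac{1}{(j+2)^{2}},
$$
so that
$$
\sum_{j=1}^{n}\frac{1}{(j+1)^{2}(j+2)^{2}} = \sum_{j=1}^{n}\frac{1}{(j+1)^{2}} - 2\sum_{j=1}^{n}\frac{1}{(j+1)(j+2)} + \sum_{j=1}^{n}\frac{1}{(j+2)^{2}}.
$$

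Next I would evaluate each of the three sums in closed form. The first is $H_{n+1,2}-1$, directly from the definition of the generalized harmonic number. The middle one is $\frac{1}{2}-\frac{1}{n+2}$ by Lemma \ref{lemSum1}. The third is $H_{n+2,2}-\frac{5}{4}$, obtained by re-indexing and subtracting the missing initial terms. Substituting back and using the identity $H_{n+2,2} = H_{n+1,2} + \frac{1}{(n+2)^{2}}$ to consolidate everything in terms of $H_{n+1,2}$ should collapse the constants into $-1 - 1 - \frac{5}{4} = -\frac{13}{4}$ and leave the remainder terms $\frac{2}{n+2}+\frac{1}{(n+2)^{2}}$.

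There is no real obstacle here: the calculation is entirely mechanical partial fractions followed by telescoping and bookkeeping of boundary terms. The only spot to watch is the re-indexing of $\sum_{j=1}^{n}(j+2)^{-2}$, where one must carefully subtract the $j=1,2$ contributions to express it via $H_{n+2,2}$, and the final consolidation of $H_{n+2,2}$ into $H_{n+1,2}$ to get the stated form. The asymptotic $n\to\infty$ limit $\frac{\pi^{2}}{3}-\frac{13}{4}$ then follows immediately from $H_{n+1,2}\to\zeta(2)=\pi^{2}/6$.
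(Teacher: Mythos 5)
Your proposal is correct and follows essentially the same route as the paper: expand $\left(\frac{1}{j+1}-\frac{1}{j+2}\right)^{2}$, evaluate the three resulting sums (the cross term via Lemma \ref{lemSum1}), and consolidate $H_{n+2,2}$ into $H_{n+1,2}$. All the boundary constants and remainder terms check out.
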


\begin{proof}
We first notice that $(j+1)^{-1}(j+2)^{-1}=(j+1)^{-1}-(j+2)^{-1}$, and then can write

$$
\begin{array}{l}
\sum\limits_{j=1}^{n}\left(\frac{1}{j+1}-\frac{1}{j+2}\right)^{2}
 =
\sum\limits_{j=1}^{n}\frac{1}{(j+1)^{2}}
+
\sum\limits_{j=1}^{n}\frac{1}{(j+2)^{2}}
-2\sum\limits_{j=1}^{n}\frac{1}{(j+1)(j+2)}
\\ \stackrel{Lemma~\ref{lemSum1}}{=}
H_{n+1,2}-1+H_{n+2,2}-1-\frac{1}{4}-2\left(\frac{1}{2}-\frac{1}{n+2}\right)
\\=
2H_{n+1,2}+\frac{2}{n+2}+\frac{1}{(n+2)^{2}}-\frac{13}{4}
\xrightarrow{n\to \infty} \frac{\pi^{2}}{3}-\frac{13}{4} \approx 0.04.
\end{array}
$$
\end{proof}

\begin{lemma}\label{lemSumj1quj2}

\be\label{eqSumj1quj2}
\begin{array}{rcl}
\sum\limits_{j=1}^{n}\frac{1}{(j+1)^{4}(j+2)}  & = &
H_{n+1,2}-H_{n+1,3}+H_{n+1,4}-\frac{3}{2}
+\frac{1}{n+2}
\\ &\xrightarrow{n\to \infty}  &
\frac{\pi^{4}}{90}-\zeta(3)+\frac{\pi^{2}}{6}-\frac{3}{2}.
\end{array}
\ee
\end{lemma}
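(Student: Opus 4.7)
The plan is to reduce the summand to a telescoping combination of simple reciprocal powers and a term handled by Lemma~\ref{lemSum1}. The key algebraic identity I would exploit is the elementary partial fraction $\tfrac{1}{(j+1)(j+2)} = \tfrac{1}{j+1} - \tfrac{1}{j+2}$, applied iteratively to strip off one factor of $(j+1)$ at a time.

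Concretely, I would first write
$$
\frac{1}{(j+1)^{4}(j+2)} = \frac{1}{(j+1)^{3}}\left(\frac{1}{j+1}-\frac{1}{j+2}\right) = \frac{1}{(j+1)^{4}} - \frac{1}{(j+1)^{3}(j+2)},
$$
then apply the same trick to $\tfrac{1}{(j+1)^{3}(j+2)}$ and once more to $\tfrac{1}{(j+1)^{2}(j+2)}$. Iterating gives the clean decomposition
$$
\frac{1}{(j+1)^{4}(j+2)} = \frac{1}{(j+1)^{4}} - \frac{1}{(j+1)^{3}} + \frac{1}{(j+1)^{2}} - \frac{1}{(j+1)(j+2)}.
$$

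Summing from $j=1$ to $n$ term by term, the first three sums are shifted generalized harmonic numbers $H_{n+1,4}-1$, $H_{n+1,3}-1$, $H_{n+1,2}-1$, and the last sum is handled directly by Lemma~\ref{lemSum1}, giving $\tfrac{1}{2}-\tfrac{1}{n+2}$. Adding these four contributions with the correct signs collapses to
$$
H_{n+1,2}-H_{n+1,3}+H_{n+1,4}-\tfrac{3}{2}+\tfrac{1}{n+2},
$$
as claimed. The asymptotic statement then follows by sending $n\to\infty$ and using $H_{\infty,2}=\pi^{2}/6$, $H_{\infty,3}=\zeta(3)$, $H_{\infty,4}=\pi^{4}/90$ from Section~\ref{secNotation}.

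There is no real obstacle here; the only thing to be careful about is bookkeeping of the shift (starting at $j=1$ means the $j^{-p}$ sums miss the $j=1$ term, contributing the $-1$ offsets), so that the constants combine correctly into $-3/2$. Once the telescoping partial fraction identity is verified (a one-line induction on the exponent), the rest is direct substitution.
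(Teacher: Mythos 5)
Your proposal is correct and follows essentially the same route as the paper: iteratively peeling off factors via $\frac{1}{(j+1)(j+2)}=\frac{1}{j+1}-\frac{1}{j+2}$, reducing to shifted harmonic numbers plus the telescoping sum of Lemma~\ref{lemSum1}, with the same bookkeeping of the $-1$ offsets yielding the constant $-\tfrac{3}{2}$.
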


\begin{proof}

$$
\begin{array}{l}
\sum\limits_{j=1}^{n}\frac{1}{(j+1)^{4}(j+2)} 
=\sum\limits_{j=1}^{n}\frac{1}{(j+1)^{3}}\left(\frac{1}{j+1}-\frac{1}{j+2} \right) 
=\sum\limits_{j=1}^{n}\frac{1}{(j+1)^{4}} -
\sum\limits_{j=1}^{n}\frac{1}{(j+1)^{2}}\left(\frac{1}{j+1}-\frac{1}{j+2} \right) 
\\ =H_{n+1,4}-1 -\sum\limits_{j=1}^{n}\frac{1}{(j+1)^{3}} +
\sum\limits_{j=1}^{n}\frac{1}{(j+1)}\left(\frac{1}{j+1}-\frac{1}{j+2} \right) 
\\ =H_{n+1,4}-1 -H_{n+1,3}+1
+\sum\limits_{j=1}^{n}\frac{1}{(j+1)^{2}} - 
\sum\limits_{j=1}^{n}\frac{1}{(j+1)(j+2)}
\\ =
H_{n+1,4}-H_{n+1,3} + H_{n+1,2}-1
-\sum\limits_{j=1}^{n}\frac{1}{j+1}
+\sum\limits_{j=1}^{n}\frac{1}{j+2}
\\=
H_{n+1,4}-H_{n+1,3} + H_{n+1,2}-1 -\frac{1}{2}+\frac{1}{n+2}
\\=
H_{n+1,4}-H_{n+1,3} + H_{n+1,2}-\frac{3}{2}+\frac{1}{n+2}
 \xrightarrow{n\to \infty} 
\frac{\pi^{4}}{90}-\zeta(3)+\frac{\pi^{2}}{6}-\frac{3}{2}
\approx 0.025.
\end{array}
$$
\end{proof}

\begin{lemma}\label{lemSumj1quj2sq}

\be\label{eqSumj1quj2sq}
\begin{array}{rcl}
\sum\limits_{j=1}^{n}\frac{1}{(j+1)^{4}(j+2)^{2}}  & = &
4H_{n+1,2}-2H_{n+1,3}+H_{n+1,4}-\frac{21}{4}
+\frac{4}{n+2}+\frac{1}{(n+2)^{2}} \\
& \xrightarrow{n\to \infty}  &
\frac{\pi^{4}}{90}-2\zeta(3)+\frac{2\pi^{2}}{3}-\frac{21}{4}.
\end{array}
\ee
\end{lemma}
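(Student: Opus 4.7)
The plan is to mimic the reduction strategy used in Lemma \ref{lemSumj1quj2} and the preceding lemmata of the appendix: repeatedly exploit the identity $\frac{1}{(j+1)(j+2)} = \frac{1}{j+1}-\frac{1}{j+2}$ to peel the sum into pieces whose values are either tabulated harmonic numbers or already established in earlier lemmata.

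Concretely, I would write
$$
\frac{1}{(j+1)^{4}(j+2)^{2}} = \frac{1}{(j+1)^{2}}\left(\frac{1}{j+1}-\frac{1}{j+2}\right)^{2} = \frac{1}{(j+1)^{4}} - \frac{2}{(j+1)^{3}(j+2)} + \frac{1}{(j+1)^{2}(j+2)^{2}},
$$
which reduces the problem to summing three terms. The first gives $H_{n+1,4}-1$ directly. The third is exactly the content of Lemma \ref{lemSumj1sqj2sq}. The middle term is the only new object, and I would handle it by the same device, namely
$$
\frac{1}{(j+1)^{3}(j+2)} = \frac{1}{(j+1)^{2}}\left(\frac{1}{j+1}-\frac{1}{j+2}\right) = \frac{1}{(j+1)^{3}} - \frac{1}{(j+1)^{2}(j+2)},
$$
so that Lemma \ref{lemSumi12i2} gives $\sum_{j=1}^{n}\frac{1}{(j+1)^{3}(j+2)} = H_{n+1,3}-H_{n+1,2}+\tfrac{1}{2}-\tfrac{1}{n+2}$.

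Substituting the three evaluations into the initial decomposition, collecting the constant terms (I expect $-1-1-\tfrac{13}{4}=-\tfrac{21}{4}$) and the $1/(n+2)$ terms (I expect $2/(n+2)+2/(n+2)=4/(n+2)$), should produce the claimed closed form. The asymptotic statement then follows by letting $n\to\infty$ and using $H_{\infty,2}=\pi^{2}/6$, $H_{\infty,3}=\zeta(3)$, $H_{\infty,4}=\pi^{4}/90$.

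There is no real obstacle here, as every piece of the computation is already available in the paper; the only place one has to be careful is bookkeeping of the constants and of the $1/(n+2)^{k}$ remainder terms, since a sign or a missing $H_{n+1,2}\to H_{n+2,2}$ shift can easily throw off the final constant. I would therefore double-check the final collection of constants by numerical evaluation at, say, $n=1$ before committing to the stated closed form.
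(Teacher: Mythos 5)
Your proposal is correct and follows essentially the same route as the paper: the same expansion of $\frac{1}{(j+1)^{2}}\left(\frac{1}{j+1}-\frac{1}{j+2}\right)^{2}$ into three sub-sums, the same invocation of Lemmata \ref{lemSumj1sqj2sq} and \ref{lemSumi12i2} for the pieces, and the same bookkeeping yielding $-\frac{21}{4}$ and $\frac{4}{n+2}$. Your constant and remainder-term tallies match the paper's final answer exactly.
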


\begin{proof}

$$
\begin{array}{l}
\sum\limits_{j=1}^{n}\frac{1}{(j+1)^{4}(j+2)^{2}}   =
\sum\limits_{j=1}^{n}\frac{1}{(j+1)^{2}}\left(\frac{1}{j+1}-\frac{1}{j+2}\right)^{2}   
\\ =
\sum\limits_{j=1}^{n}\frac{1}{(j+1)^{2}}\left(\frac{1}{(j+1)^{2}}-\frac{2}{(j+1)(j+2)}+\frac{1}{(j+2)^{2}}\right)   
\\ =
\sum\limits_{j=1}^{n}\frac{1}{(j+1)^{4}}
-2\sum\limits_{j=1}^{n}\frac{1}{(j+1)^{3}(j+2)}
\sum\limits_{j=1}^{n}\frac{1}{(j+1)^{2}(j+2)^{2}}
\\ \stackrel{Lemma~\ref{lemSumj1sqj2sq}}{=}
H_{n+1,4}-1
-2\sum\limits_{j=1}^{n}\frac{1}{(j+1)^{2}}\left(\frac{1}{j+1}-\frac{1}{j+2}\right)
+2H_{n+1,2}+\frac{2}{n+2}+\frac{1}{(n+2)^{2}}-\frac{13}{4} 
\\ =
H_{n+1,4}-1
-2\sum\limits_{j=1}^{n}\frac{1}{(j+1)^{3}}
+2\sum\limits_{j=1}^{n}\frac{1}{(j+1)^{2}(j+2)}
+2H_{n+1,2}+\frac{2}{n+2}+\frac{1}{(n+2)^{2}}-\frac{13}{4} 
\\ \stackrel{Lemma~\ref{lemSumi12i2}}{=}
H_{n+1,4}-1
-2H_{n+1,3}+2
+2H_{n+1,2} - \frac{3n+4}{n+2}
+2H_{n+1,2}+\frac{2}{n+2}+\frac{1}{(n+2)^{2}}-\frac{13}{4} 
\\ =
H_{n+1,4}-2H_{n+1,3}+4H_{n+1,2}-\frac{21}{4}
+\frac{4}{n+2}+\frac{1}{(n+2)^{2}}
\\ \xrightarrow{n\to \infty} \frac{\pi^{4}}{90}-2\zeta(3)+\frac{2\pi^{2}}{3}-\frac{21}{4}
\approx 0.008. 
\end{array}
$$
\end{proof}

\newpage
\begin{lemma}\label{lemSumj1sqj2qu}

\be\label{eqSumj1sqj2qu}
\begin{array}{rcl}
\sum\limits_{j=1}^{n}\frac{1}{(j+1)^{2}(j+2)^{4}}  & = &
4H_{n+1,2}+2H_{n+2,3}+H_{n+2,4}-\frac{161}{16}
+\frac{4}{n+2}
+\frac{3}{(n+2)^{2}}
\\ &\xrightarrow{n\to \infty}  &
\frac{\pi^{4}}{90}+2\zeta(3)+\frac{2\pi^{2}}{3}-\frac{161}{16}.
\end{array}
\ee
\end{lemma}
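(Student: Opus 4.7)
The plan is to reduce the target sum to sums that have already been evaluated in the preceding lemmata, via a standard partial fraction expansion. The key identity is
$$
\frac{1}{(j+1)^{2}(j+2)^{4}}
=\frac{1}{(j+2)^{2}}\left(\frac{1}{(j+1)(j+2)}\right)^{2}
=\frac{1}{(j+2)^{2}}\left(\frac{1}{j+1}-\frac{1}{j+2}\right)^{2},
$$
so that after squaring the bracket and distributing, the summand splits as
$$
\frac{1}{(j+1)^{2}(j+2)^{2}}\;-\;\frac{2}{(j+1)(j+2)^{3}}\;+\;\frac{1}{(j+2)^{4}}.
$$

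First I would sum each of these three pieces separately. For the first I invoke Lemma \ref{lemSumj1sqj2sq}, which gives $2H_{n+1,2}-\tfrac{13}{4}+\tfrac{2}{n+2}+\tfrac{1}{(n+2)^{2}}$. For the second I invoke Lemma \ref{lemSumi1i2cb}, which gives $\tfrac{23}{8}-H_{n+2,2}-H_{n+2,3}-\tfrac{1}{n+2}$, and then multiply by $-2$. For the third I use the trivial reindexing $\sum_{j=1}^{n}(j+2)^{-4}=H_{n+2,4}-1-\tfrac{1}{16}$. Adding the three contributions yields an expression involving $H_{n+1,2}$, $H_{n+2,2}$, $H_{n+2,3}$, $H_{n+2,4}$, rational constants, and tails in $n+2$.

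Next I would consolidate the harmonic numbers by using $H_{n+2,2}=H_{n+1,2}+(n+2)^{-2}$, so that $2H_{n+1,2}+2H_{n+2,2}$ collapses to $4H_{n+1,2}+2(n+2)^{-2}$, matching the coefficient of $H_{n+1,2}$ in the stated formula. The remaining step is purely arithmetic: collecting the rational constants to $-\tfrac{13}{4}-\tfrac{23}{4}-1-\tfrac{1}{16}=-\tfrac{161}{16}$, and collecting the $1/(n+2)$ and $1/(n+2)^{2}$ tails to $\tfrac{4}{n+2}+\tfrac{3}{(n+2)^{2}}$. Taking $n\to\infty$ replaces $H_{n+1,2},H_{n+2,3},H_{n+2,4}$ with $\pi^{2}/6,\zeta(3),\pi^{4}/90$ respectively, which matches the claimed asymptotic value.

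No serious obstacle is expected here; everything reduces to known building blocks once the partial fraction split is made. The only place where care is needed is the bookkeeping of the constant term and of the $H_{n+1,2}$ versus $H_{n+2,2}$ index shift, so that the stated coefficients $-\tfrac{161}{16}$, $\tfrac{4}{n+2}$, and $\tfrac{3}{(n+2)^{2}}$ come out correctly rather than being off by small rationals.
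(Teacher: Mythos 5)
Your proposal is correct and follows essentially the same route as the paper: the identical decomposition $\frac{1}{(j+1)^{2}(j+2)^{4}}=\frac{1}{(j+2)^{2}}\bigl(\frac{1}{j+1}-\frac{1}{j+2}\bigr)^{2}$ into the three pieces handled by Lemma \ref{lemSumj1sqj2sq}, the cube sum, and the reindexed $H_{n+2,4}$ tail, with the same index-shift and constant bookkeeping. The only cosmetic difference is that you cite Lemma \ref{lemSumi1i2cb} directly for $\sum_{j=1}^{n}\frac{1}{(j+1)(j+2)^{3}}$, whereas the paper re-expands that middle term on the fly via Lemma \ref{lemSumi1i2sq}; both yield the same expression and your arithmetic checks out.
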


\begin{proof}

$$
\begin{array}{l}
\sum\limits_{j=1}^{n}\frac{1}{(j+1)^{2}(j+2)^{4}}   =
\sum\limits_{j=1}^{n}\frac{1}{(j+2)^{2}}\left(\frac{1}{j+1}-\frac{1}{j+2}\right)^{2}   
 =
\sum\limits_{j=1}^{n}\frac{1}{(j+2)^{2}}\left(\frac{1}{(j+1)^{2}}-\frac{2}{(j+1)(j+2)}+\frac{1}{(j+2)^{2}}\right)   
\\=
\sum\limits_{j=1}^{n}\frac{1}{(j+2)^{2}(j+1)^{2}}
-2\sum\limits_{j=1}^{n}\frac{1}{(j+1)(j+2)^{3}}
+\sum\limits_{j=1}^{n}\frac{1}{(j+2)^{4}}
\\ \stackrel{Lemma~\ref{lemSumj1sqj2sq}}{=}
2H_{n+1,2}+\frac{2}{n+2}+\frac{1}{(n+2)^{2}}-\frac{13}{4} 
-2\sum\limits_{j=1}^{n}\frac{1}{(j+2)^{2}}\left(\frac{1}{j+1}-\frac{1}{j+2}\right)
+H_{n+2,4}-1-\frac{1}{16}
\\ =
2H_{n+1,2}+\frac{2}{n+2}+\frac{1}{(n+2)^{2}}-\frac{13}{4} 
-2\sum\limits_{j=1}^{n}\frac{1}{(j+1)(j+2)^{2}}
+2\sum\limits_{j=1}^{n}\frac{1}{(j+2)^{3}}
+H_{n+2,4}-1-\frac{1}{16}
\end{array}
$$

$$
\begin{array}{l}
\stackrel{Lemma~\ref{lemSumi1i2sq}}{=}
2H_{n+1,2}+\frac{2}{n+2}+\frac{1}{(n+2)^{2}}-\frac{13}{4} 
-2\left(\frac{7}{4}-H_{n+2,2}-\frac{1}{n+2}\right)
+2H_{n+2,3}-2-\frac{2}{8}
\\ +H_{n+2,4}-1-\frac{1}{16}
 =
H_{n+2,4}
+2H_{n+2,3}
+4H_{n+1,2}
-\frac{161}{16}
+\frac{4}{n+2}
+\frac{3}{(n+2)^{2}}
\\ \xrightarrow{n\to \infty} 
\frac{\pi^{4}}{90}+2\zeta(3)+\frac{2\pi^{2}}{3}-\frac{161}{16} \approx 0.004. 
\end{array}
$$
\end{proof}

\begin{lemma}\label{lemSumii1sqi2}

\be\label{eqSumii1sqi2}
\sum\limits_{j=1}^{n}\frac{1}{j(j+1)^{2}(j+2)} = 
\frac{7}{4}-H_{n+1,2} - \frac{1}{2(n+1)} - \frac{1}{2(n+2)}
\xrightarrow{n\to \infty} \frac{7}{4}-\frac{\pi^{2}}{6}.
\ee
\end{lemma}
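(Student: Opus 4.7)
The plan is to exploit the symmetry $(j+1)^2$ sits between $j$ and $j+2$, and apply the standard splitting $\frac{1}{j(j+2)} = \frac{1}{2}\bigl(\frac{1}{j} - \frac{1}{j+2}\bigr)$ to reduce the sum to two simpler sums already handled earlier in the Appendix.

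Specifically, I would write
$$
\frac{1}{j(j+1)^{2}(j+2)} = \frac{1}{(j+1)^{2}}\cdot\frac{1}{j(j+2)} = \frac{1}{2(j+1)^{2}}\left(\frac{1}{j}-\frac{1}{j+2}\right) = \frac{1}{2}\left[\frac{1}{j(j+1)^{2}} - \frac{1}{(j+1)^{2}(j+2)}\right],
$$
so that summing from $j=1$ to $n$ yields
$$
\sum\limits_{j=1}^{n}\frac{1}{j(j+1)^{2}(j+2)} = \frac{1}{2}\sum\limits_{j=1}^{n}\frac{1}{j(j+1)^{2}} - \frac{1}{2}\sum\limits_{j=1}^{n}\frac{1}{(j+1)^{2}(j+2)}.
$$

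Then I invoke Lemma \ref{lemSumii12}, which gives $\sum_{j=1}^{n}\frac{1}{j(j+1)^{2}} = 2 - H_{n+1,2} - \frac{1}{n+1}$, and Lemma \ref{lemSumi12i2}, which gives $\sum_{j=1}^{n}\frac{1}{(j+1)^{2}(j+2)} = H_{n+1,2} - \frac{3}{2} + \frac{1}{n+2}$. Plugging in and collecting constants gives $\frac{1}{2}\bigl(2 + \frac{3}{2}\bigr) = \frac{7}{4}$, the two $H_{n+1,2}$ terms combine to $-H_{n+1,2}$, and the tail terms are $-\frac{1}{2(n+1)} - \frac{1}{2(n+2)}$, matching the claim exactly. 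The limit $\frac{7}{4} - \frac{\pi^{2}}{6}$ is then immediate from $H_{n+1,2} \to \zeta(2) = \pi^{2}/6$.

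There is essentially no obstacle here: the only minor pitfall is keeping careful track of the shifted harmonic numbers $H_{n+1,2}$ versus $H_{n+2,2}$ and the boundary correction terms from the two invoked lemmata, but since both cited formul\ae\ are already stated in the normalised form involving $H_{n+1,2}$, the algebra telescopes cleanly. Alternatively, a direct partial-fraction decomposition $\frac{1}{j(j+1)^{2}(j+2)} = \frac{1}{2j} - \frac{1}{(j+1)^{2}} - \frac{1}{2(j+2)}$ (verified by clearing denominators and matching at $j=0,-1,-2$) would give the same result in one step via $\frac{1}{2}H_{n,1} - (H_{n+1,2}-1) - \frac{1}{2}(H_{n+2,1}-1-\tfrac{1}{2})$, using $H_{n+2,1} - H_{n,1} = \frac{1}{n+1} + \frac{1}{n+2}$; but the first route is more in the spirit of the surrounding Appendix lemmata.
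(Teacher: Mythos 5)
Your proof is correct, and both of your routes land on the stated closed form. The paper takes a very slightly different splitting: it writes $\frac{1}{j(j+1)^{2}(j+2)}=\frac{1}{j(j+1)}\left(\frac{1}{j+1}-\frac{1}{j+2}\right)=\frac{1}{j(j+1)^{2}}-\frac{1}{j(j+1)(j+2)}$ and then invokes Lemmata \ref{lemSumii12} and \ref{lemSumii1i2}, i.e.\ $2-H_{n+1,2}-\frac{1}{n+1}-\left(\frac{1}{4}-\frac{1}{2(n+1)}+\frac{1}{2(n+2)}\right)$, whereas you peel off the outer factor $\frac{1}{j(j+2)}=\frac{1}{2}\left(\frac{1}{j}-\frac{1}{j+2}\right)$ and invoke Lemmata \ref{lemSumii12} and \ref{lemSumi12i2}. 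Both decompositions are two-term partial-fraction reductions to sums already tabulated in the Appendix, so the difference is only in which pair of building blocks gets used; your version has the cosmetic advantage that both cited formul\ae\ are expressed directly in terms of $H_{n+1,2}$, so the harmonic terms combine with no reindexing, while the paper's version avoids the factor $\frac{1}{2}$. Your alternative one-step partial-fraction decomposition $\frac{1}{j(j+1)^{2}(j+2)}=\frac{1}{2j}-\frac{1}{(j+1)^{2}}-\frac{1}{2(j+2)}$ is also correct (the coefficient of $\frac{1}{j+1}$ vanishes) and is the most self-contained of the three arguments, though, as you note, it departs from the Appendix's habit of chaining previously proved lemmata.
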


\begin{proof}

$$
\begin{array}{l}
\sum\limits_{j=1}^{n}\frac{1}{j(j+1)^{2}(j+2)} 
=
\sum\limits_{j=1}^{n}\frac{1}{j(j+1)^{2}} 
-\sum\limits_{j=1}^{n}\frac{1}{j(j+1)(j+2)} 
\\ \stackrel{Lemmata~\ref{lemSumii1i2},\ref{lemSumii12}}{=}
2-H_{n+1,2} - \frac{1}{n+1}
-\frac{1}{4} + \frac{1}{2(n+1)} - \frac{1}{2(n+2)}
\\ = \frac{7}{4}-H_{n+1,2} - \frac{1}{2(n+1)} - \frac{1}{2(n+2)}
\xrightarrow{n\to \infty} \frac{7}{4}-\frac{\pi^{2}}{6}
\approx 0.105.
\end{array}
$$
\end{proof}

\begin{lemma}\label{lemSumii1i2sq}

\be\label{eqSumii1i2sq}
\sum\limits_{j=1}^{n}\frac{1}{j(j+1)(j+2)^{2}} = \frac{1}{2}H_{n+2,2} - \frac{3}{4} - \frac{1}{4(n+1)} + \frac{3}{4(n+2)}
\xrightarrow{n\to \infty} \frac{\pi^{2}}{12}-\frac{3}{4}.
\ee
\end{lemma}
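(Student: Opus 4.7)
The plan is to reduce the sum to two sums that have already been evaluated in earlier lemmata, namely Lemma \ref{lemSumii1i2} for $\sum_{j=1}^{n}\frac{1}{j(j+1)(j+2)}$ and Lemma \ref{lemSumii2sq} for $\sum_{j=1}^{n}\frac{1}{j(j+2)^{2}}$. First I would observe the partial-fraction identity
$$
\frac{1}{(j+1)(j+2)^{2}}
= \frac{1}{(j+2)}\left(\frac{1}{j+1}-\frac{1}{j+2}\right)
= \frac{1}{(j+1)(j+2)} - \frac{1}{(j+2)^{2}},
$$
and multiply through by $1/j$ to obtain
$$
\frac{1}{j(j+1)(j+2)^{2}} = \frac{1}{j(j+1)(j+2)} - \frac{1}{j(j+2)^{2}}.
$$

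Next I would sum each piece separately and invoke the two cited lemmata, giving
$$
\sum_{j=1}^{n}\frac{1}{j(j+1)(j+2)^{2}}
= \left(\frac{1}{4} - \frac{1}{2(n+1)} + \frac{1}{2(n+2)}\right) - \left(1 - \frac{1}{2}H_{n+2,2} - \frac{1}{4(n+1)} - \frac{1}{4(n+2)}\right).
$$
Collecting constants and the $1/(n+1)$, $1/(n+2)$ terms then yields the stated closed form
$\frac{1}{2}H_{n+2,2} - \frac{3}{4} - \frac{1}{4(n+1)} + \frac{3}{4(n+2)}$.

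Finally, for the asymptotic limit I would use $H_{n+2,2}\to \zeta(2)=\pi^{2}/6$ and the fact that the $1/(n+k)$ tails vanish, giving $\pi^{2}/12 - 3/4$. I do not expect any real obstacle here: the only step that requires care is the bookkeeping of the rational constants and the coefficients of $1/(n+1)$ and $1/(n+2)$ when combining the two previous lemmata. As a sanity check, one could alternatively perform the four-term partial fraction decomposition of $\frac{1}{j(j+1)(j+2)^{2}}$ directly and sum term by term, which would reach the same expression after simplification using $H_{n+1,1}=H_{n,1}+\frac{1}{n+1}$ and $H_{n+2,1}=H_{n,1}+\frac{1}{n+1}+\frac{1}{n+2}$.
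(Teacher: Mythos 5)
Your proof is correct; the arithmetic checks out and the limit follows as you say. Your route differs slightly from the paper's: you write $\frac{1}{j(j+1)(j+2)^{2}}=\frac{1}{j(j+1)(j+2)}-\frac{1}{j(j+2)^{2}}$ and invoke Lemmata \ref{lemSumii1i2} and \ref{lemSumii2sq}, whereas the paper instead applies partial fractions to the factor $\frac{1}{j(j+2)}$ to get
$$
\sum\limits_{j=1}^{n}\frac{1}{j(j+1)(j+2)^{2}}
=\frac{1}{2}\left(\sum\limits_{j=1}^{n}\frac{1}{j(j+1)(j+2)}-\sum\limits_{j=1}^{n}\frac{1}{(j+1)(j+2)^{2}}\right),
$$
reducing to Lemmata \ref{lemSumii1i2} and \ref{lemSumi1i2sq}. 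The two decompositions are equally elementary and both ingredient pairs are available in the appendix, so neither buys anything substantive over the other; the only difference is which previously tabulated sum carries the $H_{n+2,2}$ term (with coefficient $\frac{1}{2}$ already in your version of Lemma \ref{lemSumii2sq}, versus coefficient $1$ halved by the prefactor in the paper's version). Your alternative sanity check via a full four-term partial fraction expansion would also work, though it is more bookkeeping than either route requires.
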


\begin{proof}

$$
\begin{array}{l}
\sum\limits_{j=1}^{n}\frac{1}{j(j+1)(j+2)^{2}} 
=\frac{1}{2}\left(
\sum\limits_{j=1}^{n}\frac{1}{j(j+1)(j+2)} 
- \sum\limits_{j=1}^{n}\frac{1}{(j+1)(j+2)^{2}} \right)
\\ \stackrel{Lemmata~\ref{lemSumii1i2},\ref{lemSumi1i2sq}}{=}
\frac{1}{2}\left(
\frac{1}{4} - \frac{1}{2(n+1)} + \frac{1}{2(n+2)}
- \frac{7}{4} + H_{n+2,2} + \frac{1}{n+2} \right)
\\ = \frac{1}{2}H_{n+2,2} - \frac{3}{4} - \frac{1}{4(n+1)} + \frac{3}{4(n+2)}
\xrightarrow{n\to \infty} \frac{\pi^{2}}{12}-\frac{3}{4} \approx 0.072.
\end{array}
$$
\end{proof}

\begin{lemma}\label{lemSumHj1}

\be\label{eqSumHj1}
G_{n,0,0}^{0,0,1}=\sum\limits_{j=1}^{n}H_{j,1} = (n+1)H_{n,1}-n \sim n\ln n.
\ee
\end{lemma}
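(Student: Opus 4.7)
The plan is to establish this identity by swapping the order of summation, which is the cleanest route and avoids any appeal to induction. Writing $H_{j,1} = \sum_{i=1}^{j} i^{-1}$ and interchanging the two finite sums, I would group terms by the value of $i$, noting that a given $i \le n$ contributes to every outer index $j$ with $i \le j \le n$, i.e., to exactly $n - i + 1$ terms. This yields
$$
\sum_{j=1}^{n} H_{j,1} = \sum_{i=1}^{n} \frac{1}{i} \sum_{j=i}^{n} 1 = \sum_{i=1}^{n} \frac{n - i + 1}{i} = (n+1) \sum_{i=1}^{n} \frac{1}{i} - \sum_{i=1}^{n} 1 = (n+1) H_{n,1} - n,
$$
which is the claimed closed form.

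There is essentially no obstacle here: the only step that needs care is the correct bookkeeping of the inner summation range after the swap, namely that for fixed $i$ the outer variable $j$ runs from $i$ to $n$, giving the count $n - i + 1$ rather than $n - i$ or $n$. The asymptotic statement $(n+1)H_{n,1} - n \sim n \ln n$ then follows immediately from the well--known expansion $H_{n,1} = \ln n + \gamma + O(n^{-1})$, since the leading term in $(n+1)H_{n,1}$ is $n \ln n$ and the $-n$ correction together with the $\gamma n$ contribution are of lower order.

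As an alternative one could prove the formula by induction on $n$: the base case $n=1$ gives $H_{1,1} = 1 = 2\cdot 1 - 1$, and the inductive step uses $H_{n+1,1} = H_{n,1} + (n+1)^{-1}$ together with the inductive hypothesis to rewrite $\sum_{j=1}^{n+1} H_{j,1} = (n+1)H_{n,1} - n + H_{n+1,1}$ as $(n+2)H_{n+1,1} - (n+1)$. I would, however, present the swap-of-summation proof, since it is a single line and is the combinatorial explanation of the identity. This lemma also serves as the prototype for later Abel-type manipulations in the Appendix.
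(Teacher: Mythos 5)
Your swap-of-summation argument is exactly the paper's proof: the paper likewise writes $\sum_{j=1}^{n}H_{j,1}=\sum_{j=1}^{n}\frac{n-j+1}{j}=(n+1)H_{n,1}-n$, which is the same counting of how many outer indices each term $1/i$ contributes to. The proposal is correct and matches the paper's approach, with the asymptotic remark and the inductive alternative being harmless extras.
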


\begin{proof}
We can write 

$$
\begin{array}{l}
\sum\limits_{j=1}^{n}H_{j,1} = 
\sum\limits_{j=1}^{n}\frac{n-j+1}{j}
=(n+1)\sum\limits_{j=1}^{n}\frac{1}{j}-\sum\limits_{j=1}^{n}1
=(n+1)H_{n,1}-n.
\end{array}
$$
\end{proof}

\begin{lemma}[\citepos{VAda1997} Lemma $1$ and \citepos{HAlzDKarHSri2006} Eq. $3.62$]\label{lemSumHii1}

\be\label{eqSumHii1}
G_{n,1,0}^{0,0,1}=\sum\limits_{j=1}^{n}\frac{H_{j,1}}{j} = \frac{1}{2}\left(H_{n,1}^{2} +H_{n,2}\right) \sim \frac{1}{2}\ln^{2} n.
\ee
\end{lemma}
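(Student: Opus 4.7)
The plan is to prove this classical identity by interchanging the order of summation and then exploiting the symmetry of the resulting double sum over a rectangular lattice. Writing out $H_{j,1}$ as a sum and switching indices gives
$$
\sum_{j=1}^{n}\frac{H_{j,1}}{j}
= \sum_{j=1}^{n}\frac{1}{j}\sum_{i=1}^{j}\frac{1}{i}
= \sum_{1\le i\le j\le n}\frac{1}{ij}.
$$
I would then observe that the full square sum decomposes cleanly as
$$
\Bigl(\sum_{i=1}^{n}\frac{1}{i}\Bigr)\Bigl(\sum_{j=1}^{n}\frac{1}{j}\Bigr)
= 2\sum_{1\le i<j\le n}\frac{1}{ij} + \sum_{i=1}^{n}\frac{1}{i^{2}}
= 2\sum_{1\le i\le j\le n}\frac{1}{ij} - \sum_{i=1}^{n}\frac{1}{i^{2}},
$$
that is, $H_{n,1}^{2} = 2\sum_{1\le i\le j\le n}(ij)^{-1} - H_{n,2}$. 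Solving for the triangular sum yields
$$
\sum_{1\le i\le j\le n}\frac{1}{ij} = \tfrac{1}{2}\bigl(H_{n,1}^{2}+H_{n,2}\bigr),
$$
which, combined with the first display, gives the stated closed form. The asymptotic equivalent $\tfrac{1}{2}\ln^{2} n$ then follows immediately from $H_{n,1}\sim \ln n$ and $H_{n,2}\to \zeta(2)$, the latter being negligible against $\ln^{2} n$.

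As a sanity check, I would also verify the formula by induction on $n$: the base case $n=1$ gives $1$ on both sides, and for the inductive step, using $H_{n+1,1}^{2}=H_{n,1}^{2}+2H_{n,1}/(n+1)+1/(n+1)^{2}$ and $H_{n+1,2}=H_{n,2}+1/(n+1)^{2}$, one sees that the increment of the right-hand side is exactly $H_{n+1,1}/(n+1)$, matching the increment of the left-hand side. There is no real obstacle here; the only subtle point is getting the off-diagonal/diagonal bookkeeping right in the symmetrisation step, which is easy to double-check by writing the sum over a symmetric index set and separating the diagonal $i=j$ contribution once. I would present the swap-of-summation proof as the main argument since it is shorter and conceptually transparent.
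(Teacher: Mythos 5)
Your proof is correct and is essentially the same argument as the paper's: both exploit the symmetry of the double sum $\sum_{1\le i\le j\le n}(ij)^{-1}$ together with the diagonal correction $H_{n,2}$, the paper by swapping the order of summation and solving $2S=H_{n,1}^{2}+H_{n,2}$, you by expanding $H_{n,1}^{2}$ directly into twice the off-diagonal part plus the diagonal. The bookkeeping in your symmetrisation step is right, so nothing further is needed.
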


\begin{proof}
We begin

$$
\begin{array}{l}
\sum\limits_{j=1}^{n}\frac{H_{j,1}}{j} 
= \sum\limits_{j=1}^{n}\frac{1}{j}\sum\limits_{k=1}^{j}\frac{1}{k}
= \sum\limits_{k=1}^{n}\frac{1}{k}\sum\limits_{j=k}^{n}\frac{1}{j}
= \sum\limits_{k=1}^{n}\frac{1}{k}\left(\sum\limits_{j=1}^{n}\frac{1}{j}-\sum\limits_{j=1}^{k-1}\frac{1}{j}\right)
\\= H_{n,1}^{2}  - \sum\limits_{k=1}^{n}\frac{H_{k,1}}{k} + H_{n,2}
\end{array}
$$
and carrying $\sum\limits_{k=1}^{n}\frac{H_{k,1}}{k}$ over to the LHS we obtain

$$
2\sum\limits_{j=1}^{n}\frac{H_{j,1}}{j} = H_{n,1}^{2}  + H_{n,2}
$$
which gives the desired result.
\end{proof}

\begin{lemma}\label{lemSumHj1jmapp}
For $m>1$ we have the following recursive representation 

\be\label{eqSumHj1jmapp}
G_{n,1,0}^{m,0,1}=\sum\limits_{j=1}^{n}\frac{H_{j,1}}{j+m} = 
G_{n,1,0}^{m-1,0,1}
+\frac{1}{m-1}\sum\limits_{j=n+2}^{n+m}\frac{1}{j}
-\frac{H_{m,1}}{m-1}
+\frac{1}{m(m-1)}
+\frac{H_{n+1,1}}{n+m},
\ee
with initial condition $m=1$ 

\be\label{eqSumHj1jqapp}
G_{n,1,0}^{1,0,1}=\sum\limits_{j=1}^{n}\frac{H_{j,1}}{j+1} 
= \frac{1}{2}\left(H_{n+1,1}^{2}-H_{n+1,2} \right)\sim \frac{1}{2}\ln^{2} n.
\ee
\end{lemma}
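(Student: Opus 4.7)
The plan is to treat the two cases ($m=1$ and $m>1$) in parallel, with the shared idea being the identity $H_{j,1} = H_{j+1,1} - 1/(j+1)$, which converts $H_{j,1}$ into a form whose index matches the shifted denominator $j+m$, up to an easily handled $1/((j+1)(j+m))$ correction. After this substitution the problem is reduced to (i) a sum of the form $\sum H_{j+1,1}/(j+m)$ which, after a shift $j \mapsto j-1$, becomes either $G_{n,1,0}^{m-1,0,1}$ (recursive case) or $G_{n+1,1,0}^{0,0,1}$ (initial case, reducible by Lemma~\ref{lemSumHii1}); and (ii) a partial-fraction sum $\sum 1/((j+1)(j+m))$ that collapses to differences of harmonic numbers.

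For the initial case $m=1$, I would first write
\[
\sum_{j=1}^n \frac{H_{j,1}}{j+1} = \sum_{j=1}^n \frac{H_{j+1,1}}{j+1} - \sum_{j=1}^n \frac{1}{(j+1)^2},
\]
reindex the first sum as $\sum_{j=2}^{n+1} H_{j,1}/j = \sum_{j=1}^{n+1} H_{j,1}/j - 1$, and invoke Lemma~\ref{lemSumHii1} to replace it by $\tfrac12(H_{n+1,1}^2 + H_{n+1,2}) - 1$. Since $\sum_{j=1}^n 1/(j+1)^2 = H_{n+1,2} - 1$, the $-1$'s and the $H_{n+1,2}$ terms cancel cleanly, yielding $\tfrac12(H_{n+1,1}^2 - H_{n+1,2})$.

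For the recursive case $m>1$, after the same substitution the task becomes combining
\[
\sum_{j=1}^n \frac{H_{j+1,1}}{j+m} \quad \text{and} \quad \frac{-1}{m-1}\!\left(\sum_{j=1}^n \frac{1}{j+1} - \sum_{j=1}^n \frac{1}{j+m}\right).
\]
For the first, I would shift indices to write it as $\sum_{j=1}^n H_{j,1}/(j+m-1) - 1/m + H_{n+1,1}/(n+m) = G_{n,1,0}^{m-1,0,1} - 1/m + H_{n+1,1}/(n+m)$; for the second, I would evaluate the inner sums as $H_{n+1,1} - 1$ and $H_{n+m,1} - H_{m,1}$ respectively, and reassemble the resulting harmonic-number differences as $\sum_{j=n+2}^{n+m} 1/j$.

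The only real obstacle is the bookkeeping: tracking the $1$'s and the boundary terms from the two index shifts, and verifying that the partial-fraction remainder combines with them to produce exactly the tail sum $(m-1)^{-1}\sum_{j=n+2}^{n+m} 1/j$ and the isolated constants $-H_{m,1}/(m-1) + 1/(m(m-1))$. Once the expression $H_{n+m,1} - H_{n+1,1} = \sum_{j=n+2}^{n+m} 1/j$ is recognized, all pieces fall into place.
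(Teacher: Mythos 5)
Your proposal is correct and follows essentially the same route as the paper: the substitution $H_{j,1}=H_{j+1,1}-\frac{1}{j+1}$, the index shift reducing $\sum_{j=1}^{n}H_{j+1,1}/(j+m)$ to $G_{n,1,0}^{m-1,0,1}$ plus boundary terms (or to Lemma~\ref{lemSumHii1} when $m=1$), and the partial-fraction evaluation of $\sum 1/((j+1)(j+m))$ reassembled via $H_{n+m,1}-H_{n+1,1}=\sum_{j=n+2}^{n+m}1/j$. All the bookkeeping you describe checks out against the stated formula.
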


\begin{proof}
We first consider the special case $m=1$

$$

$$
and taking $\sum\limits_{k=1}^{n}\frac{H_{k,2}}{k^{2}}$ over to the LHS we obtain

$$
2\sum\limits_{k=1}^{n}\frac{H_{k,2}}{k^{2}} = H_{n,2}^{2} + H_{n,4}, 
$$
and finally

$$
\sum\limits_{k=1}^{n}\frac{H_{k,2}}{k^{2}} = \frac{1}{2}\left( H_{n,2}^{2} + H_{n,4} \right) 
 \xrightarrow{n\to \infty}  \frac{7\pi^{4}}{360} \approx 1.894 .
$$
\end{proof}

\begin{remark}\label{remlemSumHi2isqRef}
A generalized version of Lemma \ref{lemSumHi2isq} is presented as an ``almost obvious'' result by
\citet{CXuMZhaWZhu2016} in their Lemma $2.7$, i.e., for $m_{1}, m_{2}, n \in \mathbb{N}$ it holds

$$
\sum\limits_{j=1}^{n}\frac{H_{j,m_{1}}}{j^{m_{2}}} +
\sum\limits_{j=1}^{n}\frac{H_{j,m_{2}}}{j^{m_{1}}} = H_{n,m_{1}}H_{n,m_{2}} + H_{n,m_{1}+m_{2}}.
$$
Earlier, \citet{PFlaBSal1997}, on p. $23$, provided the limit value,
and \cite{ASof2011} attributed this limit to Euler. 
\end{remark}

\newpage
\begin{lemma}\label{lemSumHi2i1sq}

\be\label{eqSumHi2i1sq}
G_{n,2,0}^{1,0,2}=\sum\limits_{j=1}^{n}\frac{H_{j,2}}{(j+1)^{2}} = 
\frac{1}{2}\left(H_{n+1,2}^{2}-H_{n+1,4}\right)
 \xrightarrow{n\to \infty} \frac{\pi^{4}}{120}.
\ee
\end{lemma}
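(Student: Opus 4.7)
The plan is to reduce this sum to the already-established closed form in Lemma \ref{lemSumHi2isq} by a simple index shift. The key identity is $H_{j,2} = H_{j+1,2} - (j+1)^{-2}$, which lets us rewrite
\[
\sum_{j=1}^{n}\frac{H_{j,2}}{(j+1)^{2}} = \sum_{j=1}^{n}\frac{H_{j+1,2}}{(j+1)^{2}} - \sum_{j=1}^{n}\frac{1}{(j+1)^{4}}.
\]
The first sum on the right, after shifting $k=j+1$, becomes $\sum_{k=2}^{n+1}H_{k,2}/k^{2}$; adjoining and subtracting the $k=1$ term (which equals $1$) converts it to $\sum_{k=1}^{n+1}H_{k,2}/k^{2}-1$. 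The second sum equals $H_{n+1,4}-1$.

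Next I would invoke Lemma \ref{lemSumHi2isq} applied at $n+1$, which gives $\sum_{k=1}^{n+1}H_{k,2}/k^{2}=\tfrac{1}{2}(H_{n+1,2}^{2}+H_{n+1,4})$. Substituting,
\[
\sum_{j=1}^{n}\frac{H_{j,2}}{(j+1)^{2}} = \tfrac{1}{2}\left(H_{n+1,2}^{2}+H_{n+1,4}\right) - 1 - H_{n+1,4} + 1 = \tfrac{1}{2}\left(H_{n+1,2}^{2}-H_{n+1,4}\right),
\]
which is exactly the claimed closed form. The $\pm 1$ terms cancel cleanly, so there is no arithmetic obstacle.

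Finally, taking $n \to \infty$ and using $H_{\infty,2}=\pi^{2}/6$ and $H_{\infty,4}=\pi^{4}/90$, the limit is
\[
\tfrac{1}{2}\left(\tfrac{\pi^{4}}{36}-\tfrac{\pi^{4}}{90}\right) = \tfrac{\pi^{4}}{2}\cdot\tfrac{90-36}{36\cdot 90} = \tfrac{\pi^{4}}{120}.
\]
There is no real obstacle here; the whole proof is essentially bookkeeping around a one-step index shift, with Lemma \ref{lemSumHi2isq} doing all the substantive work. The only care required is to track the boundary corrections from shifting the summation range from $\{2,\dots,n+1\}$ back to $\{1,\dots,n+1\}$.
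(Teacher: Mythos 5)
Your proof is correct and follows essentially the same route as the paper: both reduce the sum to Lemma \ref{lemSumHi2isq} via the identity $H_{j,2}=H_{j+1,2}-(j+1)^{-2}$ and an index shift. Your version is marginally cleaner in that you shift the full range and invoke the lemma at level $n+1$, whereas the paper peels off the last term, applies the lemma at level $n$, and then recombines to reach the $H_{n+1,2}$ form; the end result and the substantive ingredient are identical.
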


\begin{proof}

$$

$$
\end{proof}

\begin{remark}
\citet{ASofMHas2012} considered (their Corollary $2$) a similar infinite sum

$$
\sum\limits_{j=1}^{\infty}\frac{H_{j,1}^{2}}{j(j+1)} = 3\zeta(3),
$$
\citet{CXuMZhaWZhu2016}, in their Eq. $(2.39)$, considered 
$\sum_{j=1}^{\infty}H_{j,1}^{2}/(j(j+k))$, 
\citet{KCheYChe2020} in their Thm. $3.1$ consider the limit for a more general series,
and
\citet{XWanWChu2018}, in their Corollary $3$, present the 
limit of Eq. \eqref{eqSumHisqi1i2app}. 
The closed form of the sum is presented, without proof, on \citepos{KBarSSag2015bart} p. $74$.
\end{remark}

\begin{lemma}\label{lemSumHj1sqj1j3}

\be\label{eqSumHj1sqj1j3}

$$

\end{proof}

\section*{Check with \proglang{Mathematica}}

Many of the presented closed form sums 
can be obtained directly using 
\proglang{Mathematica 12.0.0} 
(Kernel for Linux x86 (64-bit) running on a openSUSE Leap 15.6 box, \cite{Mathematica}).
Some of our formul\ae\ are presented in terms of harmonic sums, only whose
limits are known (and presented at the start of our work).
However, it is interesting that we managed to provide closed form formul\ae ,
in terms of harmonic numbers, 
for a number of sums,
Lemmata  
\ref{lemSumHi2i1sqi2app}--\ref{lemSumHj1sq}, 
\ref{lemSumHisqi1i2app}--\ref{lemSumHj1sqj3j4} 
\ref{lemSumHj1sqj1j2j3}, 
\ref{lemSumHj1sqj1j2j4},
and \ref{lemSumHi2sqi1i2app}, 
which we could not get \proglang{Mathematica 12.0.0} to handle 
and some we could not locate in the literature. 
Furthermore, we were not able to make \proglang{Mathematica 12.0.0} return
the limit in Lemmata  
\ref{lemSumHi2i1sqi2app}, 
\ref{lemSumHi2sqi1i2app}--\ref{lemSumHi2sqip2sq}, 
and \ref{lemSumHi2sqip1sqip2sq}. 
The script with our \proglang{Mathematica} code is available at
\url{https://github.com/krzbar/HarmonicSums}.

\bibliographystyle{plainnat}
\bibliography{PCM_Mart}

\begin{thebibliography}{15}
\providecommand{\natexlab}[1]{#1}
\providecommand{\url}[1]{\texttt{#1}}
\expandafter\ifx\csname urlstyle\endcsname\relax
  \providecommand{\doi}[1]{doi: #1}\else
  \providecommand{\doi}{doi: \begingroup \urlstyle{rm}\Url}\fi

\bibitem[Adamchik(1997)]{VAda1997}
V.~S. Adamchik.
\newblock On {S}tirling numbers and {E}uler sums.
\newblock \emph{J. Comput. Appl. Math.}, 79:\penalty0 119--130, 1997.

\bibitem[Alzer et~al.(2006)Alzer, Karayannakis, and
  Srivastava]{HAlzDKarHSri2006}
H.~Alzer, D.~Karayannakis, and H.~M. Srivastava.
\newblock Series representations for some mathematical constants.
\newblock \emph{J. Math. Anal. Appl.}, 320\penalty0 (1):\penalty0 145--162,
  2006.

\bibitem[Bartoszek and Sagitov(2015)]{KBarSSag2015bart}
K.~Bartoszek and S.~Sagitov.
\newblock A consistent estimator of the evolutionary rate.
\newblock \emph{J. Theor. Biol.}, 371:\penalty0 69--78, 2015.

\bibitem[Borwein and Borwein(1995)]{DBorJBor1995art}
D.~Borwein and J.~M. Borwein.
\newblock On an intriguing integral and some series related to $\zeta(4)$.
\newblock \emph{P. Am. Math. Soc.}, 123\penalty0 (4):\penalty0 1191--1198,
  1995.

\bibitem[Borwein et~al.(1995)Borwein, Borwein, and
  Girgensohn]{DBorJBorRGir1995}
D.~Borwein, J.~M. Borwein, and R.~Girgensohn.
\newblock Explicit evaluation of {E}uler sums.
\newblock \emph{Proc. Edinburgh Math. Soc.}, 38\penalty0 (2):\penalty0
  277--294, 1995.

\bibitem[Chen and Chen(2020)]{KCheYChe2020}
{K.--W.} Chen and {Y.--H.} Chen.
\newblock Infinite series containing generalized harmonic functions.
\newblock \emph{Note. Num. Theor. Disc. Math.}, 26\penalty0 (2):\penalty0
  85--104, 2020.

\bibitem[Chu(2007)]{WChu2007}
W.~Chu.
\newblock Abel's lemma on summation by parts and basic hypergeometric series.
\newblock \emph{Adv. Appl. Math.}, 39:\penalty0 490--514, 2007.

\bibitem[Flajolet and Salvy(1997)]{PFlaBSal1997}
P.~Flajolet and B.~Salvy.
\newblock Euler sums and contour integral representations.
\newblock \emph{Exp. Math.}, 7\penalty0 (1):\penalty0 15--35, 1997.

\bibitem[Sofo(2010)]{ASof2010}
A.~Sofo.
\newblock Closed form representations of harmonic sums.
\newblock \emph{Hacet. J. Math. Stat.}, 39\penalty0 (2):\penalty0 255--263,
  2010.

\bibitem[Sofo(2011)]{ASof2011}
A.~Sofo.
\newblock Harmonic number sums in higher powers.
\newblock \emph{J. Math. Anal.}, 2\penalty0 (2):\penalty0 15--22, 2011.

\bibitem[Sofo(2013)]{ASof2013}
A.~Sofo.
\newblock Finite number sums in higher order powers of harmonic numbers.
\newblock \emph{Bull. Math. Anal. Appl.}, 5\penalty0 (1):\penalty0 71--79,
  2013.

\bibitem[Sofo and Hassani(2012)]{ASofMHas2012}
A.~Sofo and M.~Hassani.
\newblock Quadratic harmonic number sums.
\newblock \emph{Appl. Math. E--Notes}, 12:\penalty0 110--117, 2012.

\bibitem[Wang and Chu(2018)]{XWanWChu2018}
X.~Wang and W.~Chu.
\newblock Infinite series identities involving quadratic and cubic harmonic
  numbers.
\newblock \emph{Publ. Mat.}, 62\penalty0 (2):\penalty0 285--300, 2018.

\bibitem[{Wolfram Research{,} Inc.}()]{Mathematica}
{Wolfram Research{,} Inc.}
\newblock Mathematica, {V}ersion 12.0.
\newblock Champaign, IL, 2019.

\bibitem[Xu et~al.(2016)Xu, Zhang, and Zhu]{CXuMZhaWZhu2016}
C.~Xu, M.~Zhang, and W.~Zhu.
\newblock Some evaluation of harmonic number sums.
\newblock \emph{Integr. Transf. Spec. F.}, 27\penalty0 (12):\penalty0 937--955,
  2016.

\end{thebibliography}

\end{document}